\newcommand{\etc}{\emph{etc}\xspace}
\newcommand{\ie}{\emph{i.e.}\xspace}
\newcommand{\eg}{\emph{e.g.}\xspace}
\newcommand{\etal}{\emph{et~al.}\xspace}
\newcommand{\tygar}{\tname{TYGAR}}
\newcommand{\emphbf}[1]{\emph{\textbf{#1}\xspace}}
\newcommand{\mypara}[1]{\smallskip\noindent\emphbf{#1.}\xspace}
\newcommand{\setof}[1]{\ensuremath{\{ #1 \}}}
\newcommand{\sem}[1]{\ensuremath{\llbracket #1 \rrbracket}}
\newcommand{\mFixType}{\emph{Baseline}\xspace}
\newcommand{\mNoGar}{\emph{NOGAR}\xspace}
\newcommand{\mTopType}{\emph{\tygar-0}\xspace}
\newcommand{\mQryType}{\emph{\tygar-Q}\xspace}
\newcommand{\mQryTypeBounded}{\emph{\tygar-QB}\xspace}
\newif\ifdraft
\newcommand\authorrnote[3]{\textcolor{#1}{({#2}: {#3})}\xspace}
\newcommand{\TODO}[1]{{\color{orange!80!black}[\textsl{#1}]}}
\newcommand\authorrnote[2]{}
\newcommand{\TODO}[1]{}
\newcommand{\nex}{\ensuremath{E}}
\newcommand{\tname}[1]{\textsc{#1}\xspace}
\newcommand{\tool}{\tname{H+}}
\newcommand{\sys}{\tool}
\newcommand{\hoogle}{\tname{Hoogle}}
\newcommand{\sover}{\tname{StackOverflow}}
\newcommand{\agda}{\tname{Agda}}
\newcommand{\sypet}{\tname{SyPet}}
\newcommand{\insynth}{\tname{InSynth}}
\newcommand{\prospector}{\tname{Prospector}}
\newcommand{\blaze}{\tname{Blaze}}
\newcommand{\djinn}{\tname{Djinn}}
\newcommand{\synquid}{\tname{Synquid}}
\newcommand{\myth}{\tname{Myth2}}
\newcommand{\corelang}{$\lambda_H$\xspace}
\definecolor{commentgreen}{rgb}{0.25,0.5,0.35}
\lstdefinestyle{numbers}
{
  numbers=left,
  numberstyle=\sf,
  xleftmargin=15pt
}
\newcommand{\T}[1]{\mbox{\lstinline[language=Haskell,basicstyle=\ttfamily,columns=fixed]^#1^}}
\tikzstyle{place}=[circle,thick,draw=blue!75,fill=blue!20,minimum size=6mm]
\tikzstyle{final}=[place,double]
\tikzstyle{other}=[place,draw=red!75,fill=red!20]
\tikzstyle{blank}=[minimum size=6mm]
\tikzstyle{transition}=[rectangle,thick,draw=black!75,fill=black!20,minimum size=4mm]
\tikzstyle{sol}=[very thick]
\tikzstyle{mismatch}=[thick,red!75]
\newcommand{\labels}[2]{{\begin{tabular}{c} \textcolor{blue}{#1} \\ \textcolor{red}{#2} \end{tabular}}}
\newcommand{\labelc}[1]{{\textcolor{red}{#1}}}
\newcommand*\Let[2]{\State #1 $\gets$ #2}
\algrenewcommand\algorithmicrequire{\textbf{Input:}}
\algrenewcommand\algorithmicensure{\textbf{Output:}}
\algnewcommand{\LineFor}[2]{
  \State\algorithmicfor\ {#1}\ \algorithmicdo\ {#2}}
\newcommand{\abssynth}{SynAbstract}
\newcommand{\shortestpath}{ShortestValidPath}
\newcommand{\rng}[1]{\mathsf{range}(#1)}
\newcommand{\subst}[2]{{#1} \mapsto {#2}}
\newcommand{\pre}{\mathsf{pre}}
\newcommand{\post}{\mathsf{post}}
\newcommand{\parents}{\mathsf{parents}}
\newcommand{\many}[1]{\overline{#1}}
\newcommand{\ty}[1]{\Lambda(#1)}
\newcommand{\abset}{\mathcal{A}}
\newcommand{\wrap}[2]{\mathsf{wrap}(#1,#2)}
\newcommand{\nat}{\mathbb{N}}
\newcommand{\net}{\mathcal{N}}
\newcommand{\steps}[1]{\xrightarrow{#1}}
\newcommand{\toBase}[1]{\mathsf{base}(#1)}
\newcommand{\fo}{\mathsf{desugar}}
\newcommand{\hask}[1]{\tilde{#1}}
\newcommand{\fromPath}[1]{\mathsf{terms}(#1)}
\newcommand{\fresh}[1]{\mathsf{fresh}(#1)}
\newcommand{\subterms}[1]{\mathsf{subterms}(#1)}
\newcommand{\close}[1]{\mathsf{close}(#1)}
\newcommand{\refines}{\preceq}
\newcommand{\inv}{\mathcal{I}}
\newcommand{\sub}{\sqsubseteq}
\newcommand{\meet}{\sqcap}
\newcommand{\join}{\sqcup}
\newcommand{\mgu}[1]{\mathsf{mgu}(#1)}
\newcommand{\bases}{\ensuremath{\mathbf{B}}}
\newcommand{\basebots}{\ensuremath{\mathbf{B_{\bot}}}}
\newcommand{\anew}{A_{\mathit{new}}}
\newcommand{\eapp}[2]{{#1}\ {#2}}
\newcommand{\elam}[2]{\lambda {#1}.{#2}}
\newcommand{\ebody}{e_{\mathit{body}}}
\newcommand{\sapp}[2]{{#1}{#2}}
\newcommand{\memph}[1]{\textcolor{red}{#1}}
\newcommand{\mnoemph}[1]{\textcolor{black}{#1}}
\newcommand{\chkarrow}{\Longleftarrow}
\newcommand{\synarrow}{\Longrightarrow}
\newcommand{\jtyping}[3]{\ensuremath{\Lambda;#1 \vdash #2 :: #3}}
\newcommand{\jtcheck}[3]{\ensuremath{\Lambda;#1 \vdash #2 \chkarrow #3}}
\newcommand{\jtinfer}[3]{\ensuremath{\Lambda;#1 \vdash #2 \synarrow #3}}
\newcommand{\jntyping}[3]{\ensuremath{\Lambda;#1 \not\vdash #2 \chkarrow #3}}
\newcommand{\jacheck}[4]{\ensuremath{\Lambda;#1 \vdash_{#4} #2 \chkarrow #3}}
\newcommand{\jainfer}[4]{\ensuremath{\Lambda;#1 \vdash_{#4} #2 \synarrow #3}}
\newcommand{\jancheck}[4]{\ensuremath{\Lambda;#1 \not\vdash_{#4} #2 \chkarrow #3}}
\newcommand{\jatyping}[4]{\Lambda;#1 \vdash_{#4} #2 :: #3}
\newcommand{\componentCount}{291\xspace}
\newcommand{\benchmarkCount}{44\xspace}
\newcommand{\hoogleOnlyBms}{24\xspace}
\newcommand{\ourBms}{17\xspace}
\newcommand{\timeout}{60\xspace}
\newcommand{\qualitytimeout}{100\xspace}
\newcommand{\qualityH}{\T{H+}\xspace}
\newcommand{\qualityHD}{\T{H-D}\xspace}
\newcommand{\qualityHR}{\T{H-R}\xspace}
\newcommand{\firstSolutionTime}{1.4\xspace}
\newcommand{\tygarQBSolnCount}{43\xspace}
\newcommand{\tygarQSolnCount}{34\xspace}
\newcommand{\tygarZSolnCount}{35\xspace}
\newcommand{\nogarSolnCount}{37\xspace}
\newcommand{\applyNTimesRank}{10\xspace}
\newcommand{\applyNTimesPosition}{11\xspace}
\newcommand{\firstJustRank}{18\xspace}
\newcommand{\firstJustPosition}{19\xspace}
\newcommand{\lookupRank}{1\xspace}
\newcommand{\lookupPosition}{18\xspace}
\newcommand{\interestingCount}{32\xspace}
\begin{document}

% \title{Type Abstraction for Program Synthesis}
% \title{Type Abstraction Refinement for Program Synthesis}
% \title{Type-Directed Synthesis for Polymorphic APIs}
% \title{Type Abstraction for Component-Based Synthesis}
% \title{Synthesis by Typeability-Guided Abstraction Refinement}
\title{Program Synthesis by Type-Guided Abstraction Refinement}

\author{Zheng Guo}
\affiliation{
  \institution{UC San Diego}
  \country{USA}
}
\email{zhg069@ucsd.edu}

\author{Michael James}
\affiliation{
  \institution{UC San Diego}
  \country{USA}
}
\email{m3james@ucsd.edu}

\author{David Justo}
\affiliation{
  \institution{UC San Diego}
  \country{USA}
}
\email{djusto@ucsd.edu}

\author{Jiaxiao Zhou}
\affiliation{
  \institution{UC San Diego}
  \country{USA}
}
\email{jiz417@ucsd.edu}

\author{Ziteng Wang}
\affiliation{
  \institution{UC San Diego}
  \country{USA}
}
\email{ziw329@ucsd.edu}

\author{Ranjit Jhala}
\affiliation{
  \institution{UC San Diego}
  \country{USA}
}
\email{jhala@cs.ucsd.edu}

\author{Nadia Polikarpova}
\affiliation{
  \institution{UC San Diego}
  \country{USA}
}
\email{npolikarpova@ucsd.edu}

\renewcommand{\shortauthors}{Z. Guo et al.}

\begin{abstract}

We consider the problem of type-directed component based synthesis
where, given a set of (typed) components and a query \emph{type},
the goal is to synthesize a \emph{term} that inhabits the query.
Classical approaches based on proof search in intuitionistic logics
do not scale up to the standard libraries of modern languages, which
span hundreds or thousands of components.
Recent graph reachability based methods proposed for languages like Java
do scale, but only apply to monomorphic data and components:
polymorphic data and components infinitely explode the size of
the graph that must be searched, rendering synthesis intractable.
We introduce \emph{type-guided abstraction refinement} (\tygar),
a new approach for scalable type-directed synthesis over
polymorphic datatypes and components.
Our key insight is that we can overcome the explosion
% via \emph{abstract type constructors}
by building a graph over \emph{abstract types}
which represent a potentially unbounded set of concrete types.
We show how to use graph reachability to search
for candidate terms over abstract types, and introduce
a new algorithm that uses \emph{proofs of untypeability}
of ill-typed candidates to iteratively \emph{refine}
the abstraction until a well-typed result is found.
We have implemented \tygar in \tool, a tool that
takes as input a set of Haskell libraries and a
query type, and returns a Haskell term that uses
functions from the provided libraries to implement
the query type.
We have evaluated \tool on a set of \benchmarkCount queries using a set
% real queries collected from \hoogle using a set
of popular Haskell libraries with a total of
\componentCount components.
Our results demonstrate that \tool returns an interesting solution within the first five results
for \interestingCount out of \benchmarkCount queries.
Moreover, \tygar allows \tool
to rapidly return well-typed terms,
with the median time to first solution of just \firstSolutionTime seconds.
%
% \TODO{Say something about quality of results and how we achieve it via relevancy?}
% seconds, and further that the expected result
% is within the first five suggestions \TODO{XXX\%}
% of the time.

\end{abstract}

%% 2012 ACM Computing Classification System (CSS) concepts
%% Generate at 'http://dl.acm.org/ccs/ccs.cfm'.
\begin{CCSXML}
  <ccs2012>
    <concept>
      <concept_id>10003752.10003790.10003794</concept_id>
      <concept_desc>Theory of computation~Automated reasoning</concept_desc>
      <concept_significance>500</concept_significance>
    </concept>
    <concept>
      <concept_id>10011007.10011074.10011092.10011782</concept_id>
      <concept_desc>Software and its engineering~Automatic programming</concept_desc>
      <concept_significance>500</concept_significance>
    </concept>
  </ccs2012>
\end{CCSXML}

\ccsdesc[500]{Theory of computation~Automated reasoning}
\ccsdesc[500]{Software and its engineering~Automatic programming}
%% End of generated code

%% Keywords
%% comma separated list
% \keywords{Program Synthesis, Abstract Refinement, Abstract Type System}  %% \keywords are mandatory in final camera-ready submission

%% \maketitle
%% Note: \maketitle command must come after title commands, author
%% commands, abstract environment, Computing Classification System
%% environment and commands, and keywords command.
\maketitle

\section{Introduction}\label{sec:intro}

% Start with example
Consider the task of implementing a function \T{firstJust def mbs},
which extracts the first non-empty value from a list of options \T{mbs},
and if none exists, returns a default value \T{def}.
Rather than writing a recursive function,
you suspect you can implement it more concisely 
and idiomatically using components from a standard library.
If you are a Haskell programmer, at this point you 
will likely fire up Hoogle~\cite{Hoogle},
the Haskell's API search engine, and query it with 
the intended type of \T{firstJust}, \ie 
\T{a -> [Maybe a] -> a}.
The search results will be disappointing, however, 
since no single API function matches this type%
\footnote{We tested this query at the time of writing with the default Hoogle configuration (Hoogle 4).}.
In fact, to implement \T{firstJust} you need 
a snippet that composes three library functions 
from the standard \T{Data.Maybe} library, like so:
\T{\\def mbs -> fromMaybe def (listToMaybe (catMaybes mbs))}.
Wouldn't you like a tool that could automatically 
synthesize such snippets from type queries?

\mypara{Scalable Synthesis via Graph Reachability}
In general, our problem of type-directed 
\emph{component-based synthesis}, reduces 
to that of finding inhabitants for a given 
query type \cite{urzyczyn97}.
Consequently, one approach is to develop synthesizers 
based on proof search in intuitionistic logics \cite{djinn}.
However, search becomes intractable in the presence of 
libraries with hundreds or thousands of components.
Several papers address the issue of scalability 
by rephrasing the problem as one of reachability 
in a \emph{type transition network} (TTN), \ie a graph 
that encodes the library of components.
Each type is represented as a \emph{state}, and each 
component is represented as a directed \emph{transition} 
from the component's input type to its output type.
The synthesis problem then reduces to finding 
a \emph{path} in the network that begins at the 
query's input type and ends at the output 
type~\cite{Mandelin05}.
To model components (functions) that take 
\emph{multiple} inputs, we need only generalize 
the network to a \emph{Petri net} which has 
\emph{hyper}-transitions that link multiple 
input states with a single output. 
With this generalization, the synthesis problem 
can, once again, be solved by finding a path from 
the query's input types to the desired output 
yielding a scalable synthesis method for 
Java~\cite{FengM0DR17}.
% and (a subset of) Scala~\cite{GveroKKP13}.

\mypara{Challenge: Polymorphic Data and Components}
Graph-based approaches crucially rely on the assumption
that the size of the TTN is \emph{finite} (and manageable).
% the scalability afforded by TTNs 
% comes at a steep price: the path search process 
% crucially requires a \emph{finite} set of states 
% (types), and 
This assumption breaks down in the presence of \emph{polymorphic} 
% and \emph{higher-order} 
components that are ubiquitous 
in libraries for modern functional languages.
\begin{inparaenum}[(a)]
\item With polymorphic \emph{datatypes} the set of 
      types that might appear in a program is unbounded: 
      for example, two type constructors \T{[]} and \T{Int} 
      give rise to an \emph{infinite} set of types 
      (\T{Int}, \T{[Int]}, \T{[[Int]]}, \etc).
\item Even if we bound the set of types, polymorphic \emph{components} 
      lead to a combinatorial explosion in the number of transitions:
      for example, the pair constructor with the type 
      \T{a -> b -> (a,b)} creates a transition from 
      \emph{every pair of types} in the system.
\end{inparaenum}
In other words, polymorphic data and components 
explode the size of the graph that must be searched, 
rendering synthesis intractable.

%% Intuitively, these features are challenging 
%% for type-directed component-based synthesis
%% because they explode the space of types that 
%% the synthesizer must search. 
% with more fine-grained types comes 
% a larger search space.
%% A naive application of this approach to polymorphic 
%% APIs leads to an explosion in the size of the transition 
%% system:
%

\mypara{Type-Guided Abstraction Refinement}
In this work we introduce \emph{type-guided abstraction refinement} (\tygar),
a new approach to scalable type-directed synthesis over polymorphic datatypes and 
components.
A high-level view of \tygar is depicted in \autoref{fig:workflow}.
The algorithm maintains an \emph{abstract transition network} (ATN) 
that finitely \emph{overapproximates} the infinite network 
comprising all monomorphic instances of the polymorphic data and components.
We use existing SMT-based techniques to find a suitable path in the compact ATN,
which corresponds to a candidate term.
If the term is well-typed, it is returned as the solution.
Due to the overapproximation, however, the ATN can contain \emph{spurious} paths,
which correspond to ill-typed terms. 
In this case, the ATN is \emph{refined} in order to exclude this spurious path,
along with similar ones.
We then repeat the search with the refined ATN until a well-typed solution is found.
As such, \tygar extends \emph{synthesis using abstraction refinement} (SYNGAR)~\cite{WangDS18},
from the domain of values to the domain of types.
\tygar's support for polymorphism also allows us to handle 
\emph{higher-order} components, which take functions as input, 
by representing functions (arrows) as a binary type 
constructor.
Similarly, \tygar can handle 
% \emph{constrained} components that arise due to 
Haskell's ubiquitous \emph{type classes}, 
by following the dictionary-passing translation~\cite{WadlerB89}, 
which again, relies crucially on support for parametric polymorphism.
%
% However, there exists a tradeoff between type refinement and search speed:
% refinement introduces many types into the ATN and the further search 
% in the large graph gets slow.
% %
% We therefore develops several variants of \tygar
% which combines abstraction refinement and pure enumeration.
%
% Thus, in summary, this paper makes the following contributions:

\begin{figure}[t]
\includegraphics[width=0.8\textwidth]{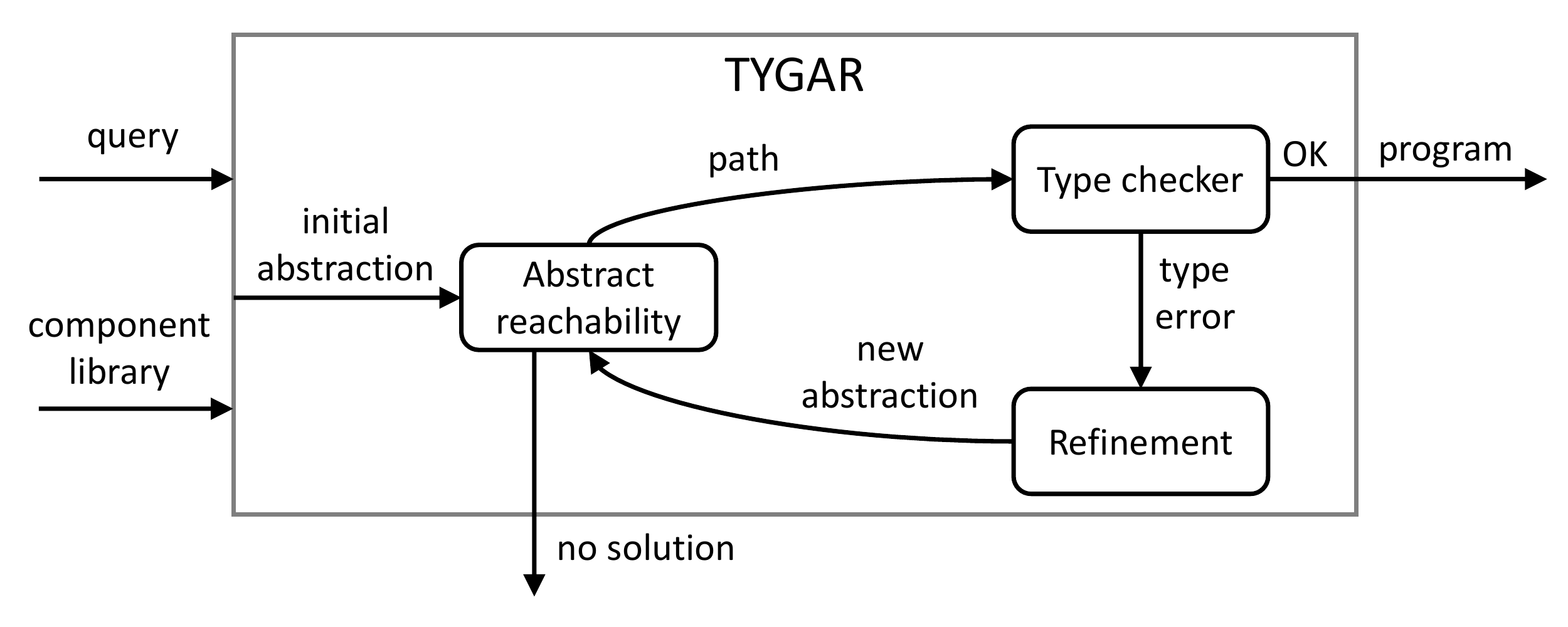}
\caption{Overview of the TYGAR synthesis algorithm.}\label{fig:workflow}
\end{figure}

\mypara{Contributions} In summary, this paper makes the following contributions:

\mypara{1. Abstract Typing} 
Our first contribution is a novel notion of 
\emph{abstract typing} grounded in the framework 
of abstract interpretation~\cite{Cousot1977}.
Our abstract domain is parameterized by a finite 
collection of polymorphic types, each of which 
abstracts a potentially infinite set of ground 
instances.
Given an abstract domain, we automatically derive 
an over-approximate type system, which we use to 
build the ATN.
This is inspired by predicate abstraction 
\cite{GrafSaidi97}, where the abstract domain 
is parameterized by a set of predicates,
and abstract program semantics at different 
levels of detail can be derived automatically 
from the domain.

\mypara{2. Type Refinement}
Our second contribution is a new algorithm that, given a spurious program, 
refines the abstract domain so that the program no longer type-checks abstractly.
To this end, the algorithm constructs a compact \emph{proof of untypeability} of the program:
it annotates each subterm with a type that is just precise enough to refute the program.

\mypara{3. \tool}
Our third contribution is an implementation of \tygar in \tool, 
a tool that takes as input a set of Haskell libraries and a type, 
and returns a ranked list of straight-line programs that have the desired type 
and can use any function from the provided libraries.
% and can use any first-order function from the provided libraries%
% \footnote{Higher-order components are challenging for graph-based synthesis techniques,
% and are left to future work.}.
%
To keep in line with \hoogle's user interaction model 
familiar to Haskell programmers, \tool does not require 
any user input beyond the query type; this is in contrast 
to prior work on component-based synthesis~\cite{FengM0DR17,Shi2019}, 
where the programmer provides input-output examples 
to disambiguate their intent.
This setting poses an interesting challenge:
given that there might be hundreds of programs 
of a given type (including nonsensical ones 
like \T{head []}), how do we select just the 
\emph{relevant} programs, likely to be useful 
to the programmer?
We propose a novel mechanism for 
filtering out irrelevant programs
using GHC's \emph{demand analysis}~\cite{SergeyVJB17} 
to eliminate terms where some of the inputs 
are unused.
We have evaluated \tool on a set of \benchmarkCount queries collected 
from different sources (including \hoogle and \sover),
using a set of popular Haskell libraries with a total of 
\componentCount components.
Our evaluation shows that \tool is able to find a well-typed program 
for \tygarQBSolnCount out of \benchmarkCount queries
within the timeout of \timeout seconds.
It finds the first well-typed program 
within \firstSolutionTime seconds on average.
In \interestingCount out of \benchmarkCount queries, 
the top five results contains a useful solution%
\footnote{Unfortunately, ground truth solutions are not available 
for \hoogle benchmarks; we judge usefulness by manual inspection.}.
% Moreover, for a restricted set of queries with a known solution,
% the expected solution was in the top 5
% in 4 out of 6 cases. 
%
Further, our evaluation demonstrates that 
both abstraction and refinement are important for efficient synthesis.
A naive approach that \emph{does not use abstraction} and instead instantiates all polymorphic 
datatypes up to even a small depth of 1 yields a 
massive transition network, 
and is unable to solve any benchmarks within the timeout.
On the other hand, an approach that uses a fixed small ATN but \emph{no refinement}
works well on simple queries, 
but fails to scale as the solutions get larger.
Instead, the best performing search algorithm uses \tygar to start with a small initial ATN
and gradually extend it, up to a given size bound, with instances that are relevant for a given synthesis query.

% However, there exists a tradeoff between type refinement and search speed:
% refinement introduces many types into the ATN and the further search 
% in the large graph gets slow.
% %
% We therefore develops several variants of \tygar
% which combines abstraction refinement and pure enumeration.

% with over \TODO{2000} 
% transitions on average, over which reachability 
% analysis becomes intractable. 
%
% In contrast, by precisely targeting those 
% instances that are relevant for a given 
% synthesis query, \tygar yields
% ATNs with fewer than \TODO{500} 
% transitions on average, 
% thereby making synthesis scale.

\section{Background and Overview}\label{sec:examples}

We start with some examples that illustrate the
prior work on component-based synthesis that \sys
builds on (\autoref{sec:examples:sypet}),
the challenges posed by polymorphic components,
and our novel techniques for addressing those challenges.

\subsection{Synthesis via Type Transition Nets}\label{sec:examples:sypet}

The starting point of our work is \sypet~\cite{FengM0DR17},
a component-based synthesizer for Java.
Let us see how \sypet works by using the example query from
the introduction: \T{a -> [Maybe a] -> a}.
For the sake of exposition, we assume that our library
only contains three components listed in \autoref{fig:sypet} (left).
Hereafter, we will use Greek letters $\alpha, \beta, \ldots$
to denote \emph{existential type variables}---\ie the type
variables of components, which have to be instantiated by
the synthesizer---as opposed to $a, b, \ldots$ for
\emph{universal type variables} found in the \emph{query},
which, as far as the synthesizer is concerned, are just nullary type constructors.
Since \sypet does not support polymorphic components,
% \footnote{The implementation has limited support for Java generics, discussed in \S~\ref{sec:related}.},
let us assume for now that an oracle provided
us with a small set of monomorphic types that suffice to answer this query,
% namely, \T{a}, \T{Maybe a}, \T{[a]}, and \T{[Maybe a]}.
namely, \T{a}, \T{Maybe a}, \T{Maybe (Maybe a)}, \T{[a]}, and \T{[Maybe a]}.
For the rest of this section, we abbreviate the names
of components and type constructors to their first letter
(for example, we will write \T{L (M a)} for \T{[Maybe a]})
and refer to the query arguments as \T{x1}, \T{x2}.

\begin{figure}
\begin{minipage}{.45\textwidth}
\begin{lstlisting}
-- | Value stored in the option
-- or default if the option is empty
fromMaybe :: _a -> Maybe _a -> _a
-- | All values from a list of options
catMaybes :: List (Maybe _a) -> List _a
-- | Head of the list
-- or empty option if the list is empty
listToMaybe :: List _a -> Maybe _a
\end{lstlisting}
% -- | Head of the list (partial function)
% head :: [_a] -> _a
\end{minipage}
\begin{minipage}{.45\textwidth}
\centering
\begin{tikzpicture}[node distance=1.3cm,>=stealth',bend angle=15,auto]

  \begin{scope}
    \node [blank]          (bl1)                                          {};
    \node [place]          (Ma)  [above of=bl1, label=above:\T{M a}]      {};
    \node [place]          (La)  [below of=bl1, label=below:\T{L a}]      {};
    \node [blank]          (bl2) [right of=La]                            {};
    \node [final,tokens=1] (a)   [right of=bl2, label=below:\T{a}]        {};
    \node [blank]          (bl3) [left of=La]                             {};    
    \node [place,tokens=1] (LMa) [left of=bl3, label=below:\T{L (M a)}]   {};
    \node [blank]          (bl4) [left of=Ma]                             {};    
    \node [place]          (MMa) [left of=bl4, label=above:\T{M (M a)}]      {};

    \node [transition] (f1) [right of=bl1] {\T{f<a>}}
      edge [pre,bend left,sol]   (a)
      edge [pre,sol]             (Ma)
      edge [post,bend right,sol] (a);
      
    \node [transition] (f2) [left of=Ma] {\T{f<M a>}}
      edge [pre,bend left]       (Ma)
      edge [pre]                 (MMa)
      edge [post,bend right]     (Ma);
      
    \node [transition] (l2) [below of=Ma] {\T{l<a>}}
      edge [pre,sol]                 (La)
      edge [post,sol]                (Ma);
      
    \node [transition] (l3) [below of=MMa] {\T{l<M a>}}
      edge [pre]                 (LMa)
      edge [post]                (MMa);

    \node [transition] (c) [left of=La] {\T{c}}
      edge [pre,sol]                 (LMa)
      edge [post,sol]                (La);  
  
    % \node [blank] (bl1) {};
    % \node [final,tokens=1] (a)   [above of=bl1, label=above:\T{a}]         {};
    % \node [blank] (bl2) [right of=a]       {};
    % \node [place]          (la)  [right of=bl2, label=above:\T{L a}]           {};
    % \node [blank] (bl3) [below of=la]       {};
    % \node [place]          (ma)  [below of=bl1, label=below:\T{M a}]       {};
    % \node [place,tokens=1] (lma) [below of=bl3, label=below:\T{L (M a)}]        {};
    % % \node [place,tokens=1] (w2) [below of=c2]             {};

    % \node [transition] (f) [below of=a] {\T{f}}
      % edge [pre,bend left,sol]              (a)
      % edge [pre,sol]                        (ma)
      % edge [post,bend right,sol]            (a);

    % \node [transition] (c) [below of=la] {\T{c}}
      % edge [pre,sol]                        (lma)
      % edge [post,sol]                       (la);

    % \node [transition] (l) [right of=f] {\T{l}}
      % edge [pre,sol]                        (la)
      % edge [post,sol]                       (ma);

    % \node [transition] (h1) [right of=a] {\T{h<a>}}
      % edge [pre]                            (la)
      % edge [post]                           (a);
      
    % \node [transition] (h2) [right of=ma] {\T{h<M a>}}
      % edge [pre]                            (lma)
      % edge [post]                           (ma);
  \end{scope}

\end{tikzpicture}
\end{minipage}
\caption{(left) A tiny component library.
(right) A Type Transition Net for this library and query \T{a -> List (Maybe a) -> a}.
The transitions \T{l<a>}, \T{f<a>} (resp. \T{l<M a>}, \T{f<M a>}) correspond to 
the polymorphic instances of the components \T{listToMaybe}, \T{fromMaybe} at 
type \T{a} (resp. \T{M a}).}
\label{fig:sypet}
\end{figure}
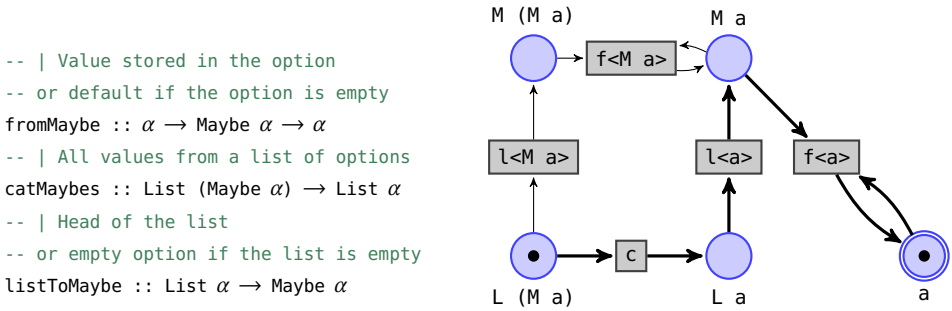

\mypara{Components as Petri Nets}
\sypet uses a Petri-net representation
of the search space, which we
refer to as the \emph{type transition net} (TTN).
The TTN for our running example is shown
in~\autoref{fig:sypet} (right).
Here \emph{places} (circles) correspond to types,
\emph{transitions} (rectangles) correspond to components,
and \emph{edges} connect components with their input and output types.
Since a component might require multiple inputs
of the same type, edges can be annotated with
\emph{multiplicities} (the default multiplicity is 1).
A \emph{marking} of a TTN assigns a non-negative
number of \emph{tokens} to every place.
The TTN can step from one marking to the next
by \emph{firing} a transition: if the input
places of a transition have sufficiently many tokens, % (according to the arc multiplicities),
the transition can fire, consuming those input
tokens and producing a token in the output place.
For example, given the marking in \autoref{fig:sypet},
transition \T{c} can fire, consuming the token
in \T{L (M a)} and producing one in \T{L a}; however
transition \T{f<a>} cannot fire as there is no token in \T{M a}.

\mypara{Synthesis via Petri-Net Reachability}
Given a synthesis query $T_1 \to \ldots \to T_n \to T$,
we set the \emph{initial marking} of the TTN to contain
one token for each input type $T_i$,
and the \emph{final marking} to contain a single token
in the type $T$.
The synthesis problem then reduces to finding
a \emph{valid path}, \ie a sequence of fired
transitions that gets the net from the initial
marking to the final marking.
\autoref{fig:sypet} shows the initial marking
for our query, and also indicates the final
marking with a double border around the return
type \T{a} (recall that the final marking
of a TTN always contains a single token
in a given place).
The final marking is reachable via the path $[\T{c}, \T{l}, \T{f}]$,
marked with thick arrows,
which corresponds to a well-typed program \T{f x1 (l (c x2))}.
In general, a path might correspond to multiple programs---%
if several tokens end up in the same place at any point along the path---%
of which at least one is guaranteed to be well-typed;
the synthesizer can then find the well-typed program using explicit or symbolic enumeration.
% a
% a set of programs each of which corresponds
% to a permutation of arguments passed to each
% component
% %
% We can compactly represented this set as
% a \emph{program sketch} \T{f \{x1, l (c x2)\}}
% that corresponds to
% two programs---\T{f x1 (l (c x2))} and \T{f (l (c x2)) x1}---%
% only the first one of which is well-typed.
% %
% Given a path, it is relatively straightforward
% to pick out the well-typed program; for more detail,
% we refer the reader to the original presentation~\cite{FengM0DR17}.

\subsection{Polymorphic Synthesis via Abstract Type Transition Nets}\label{sec:examples:abstract}

Libraries for modern languages like Haskell
provide highly polymorphic components that
can be used at various different instances.
For example, our universe contains three type
constructors---\T{a}, \T{L}, and \T{M}---%
which can give rise to infinitely many types,
so creating a place for each type is out of
question.
Even if we limit ourselves to those constructors
that are reachable from the query types by
following the components, we might still end
up with an infinite set of types: for example,
following \T{head :: List _a -> _a} backwards from \T{a} yields
\T{L a}, \T{L (L a)}, and so on.
This poses a challenge for Petri-net based
synthesis: \emph{which finite set of (monomorphic)
instances do we include in the TTN?}

On the one hand, we have to be careful not to include \emph{too many} instances.
In the presence of polymorphic components,
these instances can explode the number of transitions.
%
% \autoref{fig:sypet} illustrates this for the \T{head} component,
% which gives rise to two transitions, \T{h<a>} and \T{h<M a>},
% by instantiating its type variable $\alpha$ with two different TTN places, \T{a} and \T{Maybe a}.
\autoref{fig:sypet} illustrates this for the \T{f} and \T{l} components,
each giving rise to two transitions,
by instantiating their type variable $\alpha$ with two different TTN places, \T{a} and \T{Maybe a}.
This proliferation of transitions is especially severe
for components with multiple type variables.
On the other hand, we have to be careful not to include \emph{too few}
instances. We cannot, for example, just limit ourselves
to the monomorphic types that are explicitly present in
the query (\T{a} and \T{L (M a)}), as this will preclude
the synthesis of terms that generate intermediate values
of some other type, \eg \T{L a} as returned by the component
\T{c}, thereby preventing the synthesizer from
finding solutions.

\mypara{Abstract Types}
To solve this problem, we introduce the notion of an \emph{abstract type}%
\footnote{Not to be confused with existing notions of \emph{abstract data type} and \emph{abstract class}.
We use ``abstract'' here is the sense of abstract interpretation~\cite{Cousot1977},
\ie an abstraction of a set of concrete types.},
which stands for (infinitely) many monomorphic instances.
We represent abstract types simply as polymorphic types,
\ie types with free type variables.
For example, the abstract type $\tau$ stands for the set of all types,
while \T{L _t} stands for the set $\{\T{L}~t \mid t\in Type\}$.
This representation supports different levels of detail:
for example, the type \T{L (M a)}
can be abstracted into itself, \T{L (M _t)}, \T{L _t}, or \T{_t}.

\mypara{Abstract Transition Nets}
A Petri net constructed out of abstract types,
which we dub an \emph{abstract transition net} (ATN),
can finitely represent all types in our universe,
and hence all possible solutions to the synthesis
problem.
% can represent all types in our universe---%
% and hence all possible solutions to the synthesis problem---%
% with a finite set of places.
%
The ATN construction is grounded in the theory of abstract
interpretation and ensures that the net \emph{soundly over-approximates}
the concrete type system, \ie that every well-typed program corresponds
to some valid path through the ATN.
\autoref{fig:abs-ref} (2) shows the ATN for our
running example with places \T{_t}, \T{L _t} and \T{a}.
In this ATN, the \emph{rightmost} \T{f} transition
takes \T{a} and \T{_t} as inputs and returns \T{a} as output.
This transition represents the set of monomophic types $\{\T{a} \to t \to \T{a} \mid t \in Type\}$
and \emph{over-approximates} the set of instances of \T{f}
where the first argument unifies with \T{a} and the second argument unifies with \T{_t}
(which in this case is a singleton set $\{\T{a -> M a -> a}\}$).
Due to the over-approximation, some of the ATN's paths
yield \emph{spurious} ill-typed solutions.
For example, via the highlighted path, this ATN
produces the term \T{f x1 (l x2)}, which is ill-typed
since the arguments to \T{f} have the types \T{a}
and \T{M (M a)}.

How do we pick the right level of detail for the ATN?
If the places are too abstract,
there are too many spurious solutions,
leading, in the limit, to a brute-force enumeration of programs.
If the places are too concrete, the net becomes too large,
and the search for valid paths is too slow.
Ideally, we would like to pick a minimal set of
abstract types that only make distinctions pertinent to the query at hand.

\begin{figure}
\resizebox{\textwidth}{!}{
\begin{tikzpicture}[node distance=1.3cm,>=stealth',bend angle=15,auto]

  \begin{scope}
    \node [final,tokens=2] (top)   [label=right:\T{_t}]         {};

    \node [transition] (f) [left of=top] {\T{f}}
      edge [pre,bend left,sol] node {2}         (top)
      edge [post,bend right,sol]                (top);

    \node [transition] (c) [above of=top] {\T{c}}
      edge [pre,bend left]                            (top)
      edge [post,bend right]                           (top);

    \node [transition] (l) [below of=top] {\T{l}}
      edge [pre,bend left]                            (top)
      edge [post,bend right]                           (top);
  \end{scope}
  
  \begin{scope}[xshift=4.5cm]
    \node [blank] (bl1) {};
    \node [final] (top)   [above of=bl1, label=above:\T{_t}]         {};
    \node [place,tokens=1] (L)  [below of=bl1, label=below:\T{L _t}]      {};
    \node [final,tokens=1] (a)  [right of=L, label=below:\T{a}]      {};

    \node [transition] (f1) [left of=top] {\T{f}}
      edge [pre,bend left] node {2}       (top)
      edge [post,bend right]              (top);

    \node [transition] (f2) [below of=top] {\T{f}}
      edge [pre,bend left]       (L)
      edge [pre]                 (top)
      edge [post,bend right]     (L);

    \node [transition] (f3) [above of=a] {\T{f}}
      edge [pre,bend left,sol]       (a)
      edge [pre,sol]                 (top)
      edge [post,bend right,sol]     (a);

    % \node [transition] (l1) [right of=top] {\T{l}}
      % edge [pre,bend left]       (top)
      % edge [post,bend right]     (top);

    \node [transition] (l2) [left of=bl1] {\T{l}}
      edge [pre,sol]                 (L)
      edge [post,sol]                (top);
            
    % \node [transition] (c1) [right of=L] {\T{c}}
      % edge [pre,bend left]                  (L)
      % edge [post,bend right]                (L);
      
    \node [transition] (c2) [left of=l2] {\T{c}}
      edge [pre]                  (top)
      edge [post]                (L);
      
  \end{scope}
    
  \begin{scope}[xshift=10cm]
    \node [blank] (bl1) {};
    \node [final] (top)   [above of=bl1, label=above:\T{_t}]         {};
    \node [place] (L)  [below of=bl1, label=below:\T{L _t}]      {};
    \node [blank] (bl2) [right of=L] {};
    \node [final,tokens=1] (a)  [right of=bl2, label=below:\T{a}]      {};
    \node [blank] (bl3) [left of=L] {};    
    \node [place,tokens=1] (LM)  [left of=bl3, label=below:\T{L (M _t)}]      {};
    \node [blank] (bl4) [left of=top] {};    
    \node [place] (MM)  [left of=bl4, label=above:\T{M (M _t)}]      {};

    \node [transition] (f1) [right of=top] {\T{f}}
      edge [pre,bend right] node [yshift=.5cm] {2}       (top)
      edge [post,bend left]                              (top);

    \node [transition] (f2) [right of=bl1] {\T{f}}
      edge [pre,bend left]       (L)
      edge [pre]                 (top)
      edge [post,bend right]     (L);

    \node [transition] (f3) [above of=a] {\T{f}}
      edge [pre,bend left,sol]       (a)
      edge [pre,sol]                 (top)
      edge [post,bend right,sol]     (a);
      
    \node [transition] (f4) [left of=top, yshift=.5cm] {\T{f}}
      edge [pre,bend left]       (MM)
      edge [pre]                 (top)
      edge [post,bend right]     (MM);
      
    \node [transition] (f6) [left of=top, yshift=-.5cm] {\T{f}}
      edge [pre]                 (MM)
      edge [pre,bend left]       (top)
      edge [post,,bend right]    (top);

    \node [transition] (f5) [left of=bl1] {\T{f}}
      edge [pre,bend left]       (LM)
      edge [pre]                 (top)
      edge [post,bend right]     (LM);

    % \node [transition] (l1) [right of=top] {\T{l}}
      % edge [pre,bend left]       (top)
      % edge [post,bend right]     (top);

    \node [transition] (l2) [below of=top] {\T{l}}
      edge [pre,sol]                 (L)
      edge [post,sol]                (top);
      
    \node [transition] (l3) [below of=MM] {\T{l}}
      edge [pre]                 (LM)
      edge [post]                (MM);

    \node [transition] (c1) [left of=L] {\T{c}}
      edge [pre,sol]                 (LM)
      edge [post,sol]                (L);

    % \node [transition] (c2) [right of=L] {\T{c}}
      % edge [pre,bend left]                  (L)
      % edge [post,bend right]                (L);
  
  \end{scope}
  
\end{tikzpicture}
}

% \resizebox{\textwidth}{!}{
\begin{tikzpicture}[node distance=5mm and 5mm,>=stealth',auto]

  \begin{scope}
    \node [draw=black] (f) [label=right:\labels{$\bot$}{$\bot$}]  {\T{f}};
    \node [draw=black] (x1) [below= of f,xshift=-5mm,label=left:\labels{\T{_t}}{\T{a}}]  {\T{x1}};
    \node [draw=black] (x2) [below= of f,xshift=5mm,label=right:\labels{\T{L _t}}{\T{L (M a)}}]  {\T{x2}};
    
    \path (f) edge (x1);
    \path (f) edge (x2);
    
    \node (label1) [above of=f,yshift=5mm] {(1)};
  \end{scope}
  
  \begin{scope}[yshift=-2cm]
    \node [draw=black] (f) [label=right:\labels{$\bot$}{$\bot$}]  {\T{f}};
    \node [draw=black] (x2) [below= of f,xshift=-5mm,label=left:\labels{\T{_t}}{\T{L (M a)}}]  {\T{x2}};
    \node [draw=black] (x1) [below= of f,xshift=5mm,label=right:\labels{\T{a}}{\T{a}}]  {\T{x1}};
    
    \path (f) edge (x1);
    \path (f) edge (x2);
    
  \end{scope}

  \begin{scope}[xshift=4.5cm]
    \node [draw=black] (f)  [label=right:\labels{$\bot$}{$\bot$}]  {\T{f}};
    \node [draw=black] (x1) [below= of f,xshift=-5mm,label=left:\labels{\T{a}}{\T{a}}]  {\T{x1}};
    \node [draw=black] (l)  [below= of f,xshift=5mm,label=right:\labels{\T{M (M _t)}}{\T{M (M a)}}]  {\T{l}};
    \node [draw=black] (x2) [below= of l,label=right:\labels{\T{L (M _t)}}{\T{L (M a)}}]  {\T{x2}};
    
    \path (f) edge (x1);
    \path (f) edge (l);
    \path (l) edge (x2);
  
    \node (label2) [above of=f,yshift=5mm] {(2)};
  \end{scope}
    
  \begin{scope}[xshift=9cm]
    \node [draw=black] (f)  [label=right:\labelc{\T{a}}]  {\T{f}};
    \node [draw=black] (x1) [below= of f,xshift=-5mm,label=left:\labelc{\T{a}}]  {\T{x1}};
    \node [draw=black] (l)  [below= of f,xshift=5mm,label=right:\labelc{\T{M a}}]  {\T{l}};
    \node [draw=black] (c)  [below= of l,label=right:\labelc{\T{M a}}]  {\T{c}};
    \node [draw=black] (x2) [below= of c,label=right:\labelc{\T{L (M a)}}]  {\T{x2}};
    
    \path (f) edge (x1);
    \path (f) edge (l);
    \path (l) edge (c);
    \path (c) edge (x2);
  
    \node (label3) [above of=f,yshift=5mm] {(3)};
  \end{scope}
  
\end{tikzpicture}
% }
\caption{Three iterations of abstraction refinement: ATNs (above)
and corresponding solutions (below).
Some irrelevant transitions are omitted from the ATNs for clarity.
Solutions 1 and 2 are spurious, solution 3 is valid.
Each solution is annotated with its concrete typing (in red);
each spurious solution is additionally annotated with its proof
of untypeability (in blue).
These blue types are added to the ATN in the next iteration.
}\label{fig:abs-ref}
\end{figure}
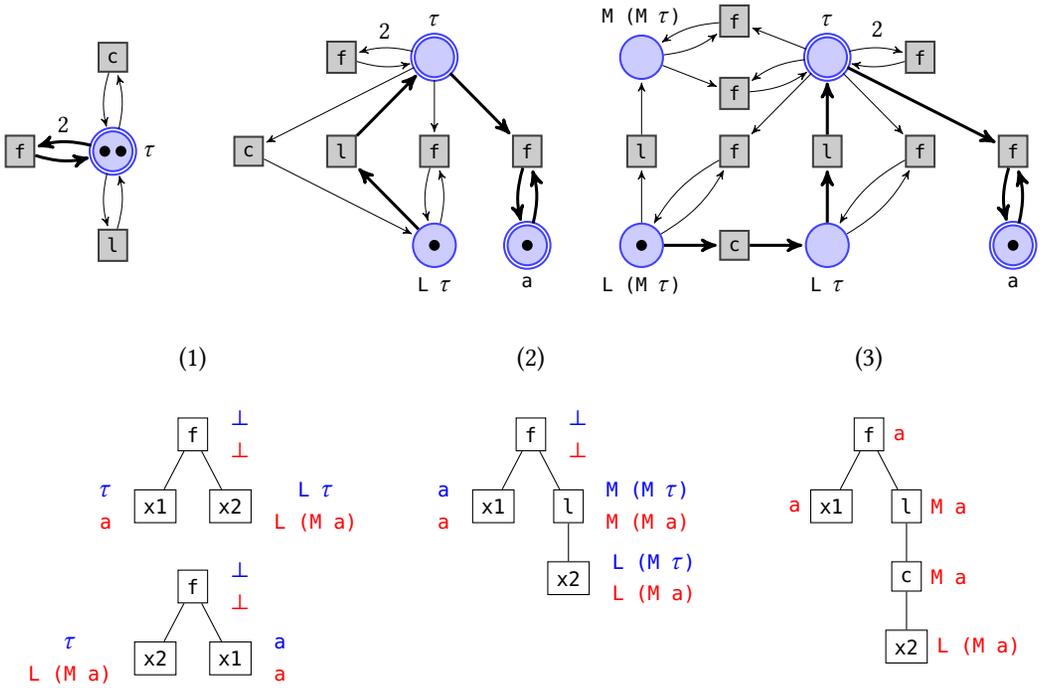

\mypara{Type-Guided Abstraction Refinement}
\tool solves this problem using an iterative process we call
\emph{type-guided abstraction refinement} (\tygar)
where an initial coarse abstraction is incrementally refined
using the information from the type errors found in spurious solutions.
Next, we illustrate \tygar using the running example
from \autoref{fig:abs-ref}.

%\begin{itemize}
\paragraph{Iteration 1}
We start with the coarsest possible abstraction,
where all types are abstracted to \T{_t}, yielding
the ATN in \autoref{fig:abs-ref} (1).
% \footnote{In the interest of clarity,
% we omit some transitions from \autoref{fig:abs-ref}:
% all \T{h} transitions,
% as well as \emph{identity transitions},
% \ie self-loops with edge multiplicity one,
% which do not affect the marking and hence cannot be on the shortest path.}.
%
The shortest valid path is just $[\T{f}]$,
which corresponds to two programs: \T{f x1 x2} and \T{f x2 x1}.
Next, we type-check these programs to determine whether they are valid or spurious.
During type checking, we compute the principal type of each sub-term
and propagate this information bottom-up through the AST;
the resulting \emph{concrete typing} is shown in red at the bottom of \autoref{fig:abs-ref} (1).
Since both candidate programs are ill-typed
(as indicated by the annotation $\bot$ at the root of either AST),
the current path is spurious.
Although we could simply enumerate more valid paths
until we find a well-typed program,
such brute-force enumeration does not scale with the number of components.
Instead, we refine the abstraction so that this path
(and hopefully many similar ones) becomes invalid.

Our refinement uses the type error information
obtained while type-checking the spurious programs.
Consider \T{f x1 x2}:
the program is ill-typed because the concrete type of \T{x2}, \T{L (M a)},
does not unify with the second argument of \T{f}, \T{M _a}.
To avoid making this type error in the future,
we need to make sure that the \emph{abstraction} of \T{L (M a)}
also fails to unify with \T{M _a}.
To this end, we need to extend our ATN with new 
abstract types, that suffice to reject the 
program \T{f x1 x2}.
These new types will update the ATN with new 
places that will \emph{reroute} the transitions 
so that the path that led to the term \T{f x1 x2}
is no longer feasible.
We call this set of abstract types 
a \emph{proof of untypeability} of 
the program.
We could use \T{x2}'s concrete type 
\T{L (M a)} as the proof, but we want 
the proof to be as general as possible,
so that it can reject more programs.
To compute a better proof, the \tygar 
algorithm \emph{generalizes} the concrete 
typing of the spurious program, repeatedly 
weakening concrete types with fresh variables 
while still preserving untypeability.
In our example, the generalization step 
yields \T{_t} and \T{L _t} (see blue 
annotations in \autoref{fig:abs-ref}).
This general proof also rejects other programs 
that use a list as the second argument to \T{f},
such as \T{f x1 (c x2)}.
Adding the types from the untypeability 
proofs of both spurious programs to the 
ATN results in a refined net shown in 
\autoref{fig:abs-ref} (2).

% Specifically, consider checking the candidate program
% \T{\\x1 x2 -> f x1 x2} against the query \T{a -> L (M a) -> a},
% as depicted in \autoref{fig:abs-ref} (a).
% %
% % \TODO{not sure how i feel about the (a), (b), (c) figures}
% %
% Type checking \T{f x1} instantiates
% instantiating \T{f} as \T{a -> M a -> a} which yields a
% type mismatch between \T{M a} and \T{L (M a)}
% for the second argument \T{x2}.
% %
% Intuitively, the spurious path resulted from abstracting
% these two mismatched types into the same abstract type.
% %
% Hence to avoid making this type error in the future,
% we refine our abstraction to distinguish these two types.
% %
% We do so by \emph{splitting} the abstract type \T{*} into \T{L *} and \T{!L}
% \footnote{A split into \T{M *} and \T{!M} would also suffice; we simply pick one of the possible splits.},
% producing the ATN in \autoref{fig:abs-ref} (2).

\paragraph{Iteration 2}
The new ATN in \autoref{fig:abs-ref} (2) has no valid paths of length one,
but has the (highlighted) path $[\T{l}, \T{f}]$ of length two,
which corresponds to a single program \T{f x1 (l x2)}
(since the two tokens never cross paths).
This program is ill-typed,
so we refine the abstraction based on its untypeability,
as depicted at the bottom of \autoref{fig:abs-ref} (2).
To compute the proof of untypeability,
we start as before, by generalizing the concrete types of \T{f}'s arguments
as much as possible as long as the application remains ill-typed,
arriving at the types \T{a} and \T{M (M _t)}.
Generalization then propagates top-down through the AST:
in the next step, we compute the most general abstraction for the type of \T{x2}
such that \T{l x2} has type \T{M (M _t)}.
The generalization process stops at the leaves of the AST
(or alternatively when the type of some node cannot be generalized).
Adding the types \T{M (M _t)} and \T{L (M _t)} from the untypeability proof to the ATN
leads to the net in \autoref{fig:abs-ref} (3).

\paragraph{Iteration 3}
The shortest valid path in the third  ATN is $[\T{c}, \T{l}, \T{f}]$, which
corresponds to a well-typed program \T{f x1 (l (c x2))} (see the bottom of \autoref{fig:abs-ref} (3)),
which we return as the solution.

% \subsection{Pruning Uninteresting Solutions with Relevant Types}
\subsection{Pruning Irrelevant Solutions via Demand Analysis}\label{sec:examples:demand}

Using a query type as the sole input to synthesis has its pros and cons.
On the one hand, types are programmer-friendly:
unlike input-output examples, which often become verbose and cumbersome for data other than lists,
types are concise and versatile,
and their popularity with Haskell programmers is time-tested by the \hoogle API search engine.
%
% In addition, a type-directed approach does not require the source code of components,
% making it applicable to a wider range of libraries.
%
On the other hand,
a query type only partially captures the programmer's intent;
in other words, not all well-typed programs are equally desirable.
In our running example, the program \T{\\x1 x2 -> x1} has the right type,
but it is clearly uninteresting. % not what the programmer intended.
Hence, the important challenge for \tool is:
how do we filter out uninteresting solutions
\emph{without} requiring additional input from the user?

\mypara{Relevant Typing} \label{sec:example:relevant}
\sypet offers an interesting approach to this problem:
they observe that a programmer is unlikely to include an argument in a query
if this argument is not required for the solution.
To leverage this observation, they propose to use a \emph{relevant} type system~\cite{Pierce2004},
which requires each variable to be used \emph{at least once},
making programs like \T{\\x1 x2 -> x1} ill-typed.
TTNs naturally enforce relevancy during search:
in fact, TTN reachability as described so far encodes a stricter \emph{linear} type system,
where all arguments must be used \emph{exactly once}.
%
% To relax this requirement, we allow the initial marking to contain
% an arbitrary positive number of tokens for each argument type.
%
This requirement can be relaxed by adding special ``copy''
transitions that consume one token from a place and produce two token in the same place.

\mypara{Demand Analysis} \label{sec:examples:demand}
Unfortunately, with expressive polymorphic components
the synthesizer discovers ingenious ways to circumvent the relevancy requirement.
For example, the terms \T{fst (x1, x2)}, \T{const x1 x2}, and \T{fromLeft x1 (Right x2)}
are all functionally equivalent to \T{x1},
even though they satisfy the letter of relevant typing.
%
% To filter out solutions like these,
% we explore two ways to strengthen \sypet's relevancy requirement.
% %
% First, we use GHC's \emph{demand analysis}~\cite{SergeyVJB17} to post-process solutions returned by the ATN
% and filter out those with unused variables.
To filter out solutions like these,
we use GHC's \emph{demand analysis}~\cite{SergeyVJB17} to post-process solutions returned by the ATN
and filter out those with unused variables.
Demand analysis is a whole-program analysis that peeks inside the component implementation,
and hence is able to infer in all three cases above that the variable \T{x2} is unused.
As we show in \autoref{sec:eval}, demand analysis significantly improves the quality of solutions.

\subsection{Higher-Order Functions}\label{sec:examples:hof}

Next we illustrate how ATNs scale up to account for higher-order functions
and type classes, using the component library in \autoref{fig:ho} (left),
which uses both of these features.

\mypara{Example: Iteration}
Suppose the user poses a query \T{(a -> a) -> a -> Int -> a},
with the intention to apply a function $g$
to an initial value $x$ some number of times $n$.
Perhaps surprisingly, this query can be solved using
components in \autoref{fig:ho} by creating a list with
$n$ copies of $g$, and then \emph{folding} function
application over that list with the seed $x$ -- that is,
via the term \T{\\g x n -> foldr ($) x (replicate n g)}.

\begin{figure}
\begin{minipage}{.47\textwidth}
\small
\begin{lstlisting}
-- | Function application
($) :: (_a -> _b) -> _a -> _b
-- | List with n copies of a value
replicate :: Int -> _a -> [_a]
-- | Fold a list
foldr :: (_a -> _b -> _b) -> _b -> [_a] -> _b
-- | Value stored in the option
fromJust :: Maybe _a -> _a
-- | Lookup element by key
lookup :: Eq _a => _a -> [(_a, _b)] -> Maybe _b
\end{lstlisting}
\end{minipage}
\hfill
\begin{minipage}{.52\textwidth}
\centering
\resizebox{\textwidth}{!}{
\begin{tikzpicture}[node distance=1.3cm,>=stealth',bend angle=15,auto]
    
  \begin{scope}
    \node [blank] (bl1) {};
    \node [place,tokens=1] (Int) [above of=bl1, label=above:\T{Int}]         {};
    \node [final,tokens=1] (a)   [right of=Int, label=above:\T{a}]     {};
    \node [place,tokens=1] (F)   [below of=bl1, label=below:\T{F _t _t}]         {};
    \node [place] (L)   [right of=F, label=below:\T{L _t}]         {};
    % \node [place] (FF)  [right of=F, label=below:\T{F (F a a) (F a a)}] {};
    
    \node [transition] (f) [below of=a] {\T{foldr}}
      edge [pre,bend left,sol]              (a)
      edge [pre,sol]                        (F)
      edge [pre,sol]                        (L)
      edge [post,bend right,sol]            (a);
      
    \node [transition] (r) [below of=Int] {\T{rep}}
      edge [pre,sol]                        (Int)
      edge [pre,sol]                        (F)
      edge [post,sol]                       (L);
      
    \node [transition] (app1) [left of=F] {\T{'$}}
      edge [post,sol]                       (F);
      
    \node [transition] (app2) [left of=r] {\T{$}}
      edge [pre,bend left]                  (Int)
      edge [pre]                            (F)
      edge [post,bend right]                (Int);

    % \node [transition] (x1) [right of=a] {\T{x1}}
      % edge [pre,bend left]                  (a)
      % edge [post,bend right]                (a);      
  \end{scope}
  
  \begin{scope}[xshift=3cm]  
    \node [blank] (bl1) {};
    \node [place,tokens=1] (a) [above of=bl1, label=above:\T{a}]         {};
    \node [place,tokens=1] (L) [right of=a,   label=above:\T{L (P a b)}]     {};
    \node [place,tokens=1] (Eq)   [below of=bl1, label=below:\T{EqD a}]         {};
    \node [place] (M)   [right of=Eq, label=below:\T{M b}]         {};
    \node [final] (b)  [right of=M, label=below:\T{b}] {};
    
    \node [transition] (l) [below of=a] {\T{l}}
      edge [pre,sol]                        (a)
      edge [pre,sol]                        (Eq)
      edge [pre,sol]                        (L)
      edge [post,sol]                       (M);
      
    \node [transition] (fJ) [below of=L] {\T{fJ}}
      edge [pre,sol]                        (M)
      edge [post,sol]                       (b);
      
    % \node [transition] (app1) [left of=F] {\T{'$}}
      % edge [post,sol]                       (F);
      
    % \node [transition] (app2) [left of=r] {\T{$}}
      % edge [pre,bend left]                  (Int)
      % edge [pre]                            (F)
      % edge [post,bend right]                (Int);

    % \node [transition] (x1) [right of=a] {\T{x1}}
      % edge [pre,bend left]                  (a)
      % edge [post,bend right]                (a);      
  \end{scope}
  
\end{tikzpicture}
}
\end{minipage}
\caption{(left) A library with higher-order functions and type-class constraints.
(center) Fragment of an ATN for the query \T{(a -> a) -> a -> Int -> a}.
(right) Fragment of an ATN for the query \T{Eq a => [(a,b)] -> a -> b}.}\label{fig:ho}
\end{figure}

Can we generate this solution using an ATN?
As described so far, ATNs only assign places to base (non-arrow) types,
and hence cannot synthesize terms that use higher-order components,
such as the application of \T{foldr} to the function \T{($)} above.
Initially, we feared that supporting higher-order components
would require generating \emph{lambda terms} within the
Petri net (to serve as their arguments) which would be
beyond the scope of this work.
However, in common cases like our example, the higher-order
argument can be written as a single variable (or component).
Hence, the full power of lambda terms is not required.

\mypara{HOF Arguments via Nullary Components}
We support the common use case --- where higher-order arguments
are just components or applications of components ---  simply by
desugaring a higher-order library into a first-order library
supported by ATN-based synthesis.
To this end, we
\begin{inparaenum}[(1)]
\item introduce a binary type constructor \T{F _a _b} to represent arrow types as if they were base types; and
\item for each component \T{c :: B1 -> ... -> Bn -> B} in the original library,
we add a \emph{nullary} component \T{'c :: F B1 (... F Bn B)}.
\end{inparaenum}
Intuitively, an ATN distinguishes between functions it \emph{calls}
(represented as transitions) and functions it uses as \emph{arguments}
to other functions (represented as tokens in corresponding \T{F} places).

\autoref{fig:ho} (center) depicts a fragment of an ATN for our example.
Note that the \T{($)} component gives rise both to a binary
transition \T{$}, which we would take if we were to apply this
component, and a nullary transition \T{'$}, which is actually
taken by our solution, since \T{($)} is used as an argument to \T{foldr}.
Since \T{F} is just an ordinary type constructor as far as the ATN is concerned,
all existing abstraction and refinement mechanisms apply to it unchanged:
for example, in \autoref{fig:ho} both \T{a -> a} and \T{(a -> a) -> a -> a} 
are abstracted into the same place \T{F _t _t}.
%
% \TODO{Should we say that the HO argument of the query becomes both a transition and a token?}

\mypara{Completeness via Point-Free Style}
While our method was inspired by the common use case
where the higher-order arguments were themselves
components, note that with a sufficiently rich
component library, \eg one that has representations
of the \T{S}, \T{K} and \T{I} combinators, our method
is \emph{complete} as every term that would have required
an explicit lambda-subterm for a function argument, can
now be written in a point-free style, only using
variables, components and their applications.

% \footnote{This approach can be used more generally to support the idiomatic ``point-free style'',
% where higher-order arguments can be variables or applications.}

\subsection{Type classes}\label{sec:examples:tc}

Type classes are widely used in Haskell
to support \emph{ad-hoc} polymorphism~\cite{WadlerB89}.
For example, consider the type of component \T{lookup} in \autoref{fig:ho}:
this function takes as input a key $k$ of type \T{_a}
and a list of key-value pairs of type \T{[(_a, _b)]},
and returns the value that corresponds to $k$, if one exists.
In order to look up $k$, the function has to compare
keys for equality; to this end, its signature imposes
a \emph{bound} \T{Eq _a} on the type of keys, enforcing
that any concrete key type be an instance of the
\emph{type class} \T{Eq} and therefore be equipped
with a definition of equality.

Type classes are implemented by a translation to
parametric polymorphism called \emph{dictionary passing},
where each class is translated into a record
whose fields implement the different functions supported
by the type class.
Happily, \tool can use dictionary passing to desugar
synthesis with type classes into a synthesis problem
supported by ATNs.
For example, the type of \T{lookup} is desugared into
an unbounded type with an extra argument:
\T{EqD _a -> _a -> [(_a,_b)] -> _b}.
Here \T{EqD _a}, is a \emph{dictionary}: a record datatype
that stores the implementation of equality on \T{_a};
the exact definition of this datatype is unimportant,
we only care whether \T{EqD _a} for a given \T{_a}
is inhabited.

\mypara{Example: Key-Value Lookup}
As a concrete example, suppose the user wants to
perform a lookup in a key-value list \emph{assuming}
the key is present, and poses a query \T{Eq a => [(a,b)] -> a -> b}.
The intended solution to this query is \T{\\xs k -> fromJust (lookup k xs)},
\ie look up the key and then extract the value from the option, assuming it is nonempty.
A fragment of an ATN for this query is shown in \autoref{fig:ho} (right).
Note that the transition \T{l}---%
the instance of \T{lookup} with $\alpha\mapsto \T{a}, \beta\mapsto \T{b}$---%
has \T{EqD a} as one of its incoming edges.
This corresponds to our intuition about type classes:
in order to fire \T{l}, the ATN first has to prove that \T{a} satisfies \T{Eq},
or in other words, that \T{EqD a} is inhabited.
In this case, the proof is trivial:
because the query type is also desugared in the same way,
the initial marking contains a token in \T{EqD a}%
\footnote{As we explain in \autoref{sec:impl:desugar},
dictionaries can also be inhabited via instances and functional dependencies.}.
A welcome side-effect of relevant typing is
that any solution \emph{must use} the token
in \T{EqD a}, which matches our intuition
that the user would not specify the bound \T{Eq a}
if they did not mean to compare keys for equality.
This example illustrates that the combination
of (bounded) polymorphism and relevant typing
gives users a surprisingly powerful mechanism
to disambiguate their intent.
Given the query above (and a library of \componentCount components),
\tool returns the intended solution as the first result.
In contrast, given a monomorphic variant of this query 
\T{[(Int, b)] -> Int -> b} (where the key type is just 
an \T{Int}) \tool produces a flurry of irrelevant results, 
such as \T{\\xs k -> snd (xs !! k)}, which uses \T{k} 
as an \emph{index} into the list, and not as a key as 
we intended.

\section{Abstract Type Checking}

\label{sec:algo}
\label{sec:check}

Next, we formally define the syntax of our target 
language \corelang and its type system, and use the 
framework of \emph{abstract interpretation} to develop
an algorithmic \emph{abstract} type system for \corelang.
This framework allows us to parameterize the checker by the 
desired level of detail, crucially enabling our novel 
\tygar synthesis algorithm formalized in \autoref{sec:synth}.

\subsection{The \corelang Language}\label{sec:algo:concrete}

\corelang is a simple first-order language with a 
prenex-polymorphic type system, whose syntax and 
typing rules are shown in \autoref{fig:lang}.
We stratify the terms into \emph{application} terms 
which comprise variables $x$, library components $c$ 
and applications; 
and \emph{normal-form} terms which 
are lambda-abstractions over application terms.
%
% (We synthesize terms in normal-form \autoref{sec:synth}.)

The \emph{base} types $B$ include type 
variables $\tau$, as well as applications 
of a type constructor to zero or more base 
types $C\ \many{B}$.
We write $\many{X}$ to denote zero or more 
occurrences of a syntactic element $X$.
Types $T$ include base types and first-order 
function types (with base-typed arguments).
Syntactic categories $b$ and $t$ are the 
ground counterparts to $B$ and $T$ 
(\ie they contain no type variables).
A \emph{component library} $\Lambda$ is 
a finite map from a set of components $c$ 
to the components' poly-types. 
A \emph{typing environment} $\Gamma$ is a map 
from variables $x$ to their ground base types.
%
% $\Lambda$ also comes equipped with a well-formedness 
% relation $\jwf{T}$,  which intuitively means that all 
% type constructors inside $T$ are correctly and fully applied;
% well-formedness extends straightforwardly to polytypes 
% and environments. 
%
A \emph{substitution} $\sigma = [\subst{\tau_1}{B_1}, \ldots, \subst{\tau_n}{B_n}]$ 
is a mapping from type variables to base types
that maps each $\tau_i$ to $B_i$ and is identity elsewhere.
We write $\sapp{\sigma}{T}$ to denote the application of $\sigma$ to type $T$, 
which is defined in a standard way.

A \emph{typing judgment} $\jtyping{\Gamma}{\nex}{t}$ is only 
defined for ground types $t$.
Polymorphic components are instantiated into ground 
monotypes by the \textsc{Comp} rule, which angelically 
picks ground base types to substitute for all the 
universally-quantified type variables in the component signature
(the rule implicitly requires that $\sapp{\sigma}{T}$ be ground).
%
% Although we only present the declarative version of the type system,
% algorithmic type checking can be implemented using Hindley-Milner type inference.

% \TODO{Uncurry applications?}

\begin{figure}
\begin{minipage}{.45\textwidth}
\small
\textbf{Syntax}
$$
\begin{array}{llll}
% Terms    
% e &::= x \mid c \mid \eapp{e}{e}                         & \text{\emph{Terms}}\\  
e &::= x \mid c \mid \eapp{e}{e}                         & \text{\emph{Application Terms}}\\  
\nex & ::= e \mid \elam{x}{\nex}                         & \text{\emph{Normal-Form Terms}}\\
% \\
% Types
b &::= C\ \many{b}                                       & \text{\emph{Ground Base Types}}\\  
t &::= b \mid b \to t                                    & \text{\emph{Ground Types}}\\  
B &::= \tau \mid C\ \many{B}                             & \text{\emph{Base Types}}\\  
T &::=  B \mid B \to T                                   & \emph{Types}\\  
P &::=  \many{\forall\tau}.T                             & \text{\emph{Polytypes}}\\
\Gamma &::=  \cdot \mid x:b,\Gamma                       & \text{\emph{Environments}}\\
\sigma &::=  [\many{\subst{\tau}{B}}]                    & \text{\emph{Substitutions}}
\end{array}
$$
\end{minipage}
\begin{minipage}{.45\textwidth}
\small
\textbf{Typing}\quad$\boxed{\jtyping{\Gamma}{\nex}{t}}$
\begin{gather*}
\inference[\textsc{T-Var}]
{\Gamma(x) = b}
{\jtyping{\Gamma}{x}{b} }
% \\
% \inference[\textsc{T-Comp}]
% {\ty{c} = \many{\forall\tau}.T & \many{\jwf{b}}}
% {\jtyping{\Gamma}{c}{\sapp{[\many{\subst{\tau}{b}}]}{T}} }
\\
\inference[\textsc{T-Comp}]
{\ty{c} = \many{\forall\tau}.T}
{\jtyping{\Gamma}{c}{\sapp{\sigma}{T}} }
\\
\inference[\textsc{T-App}]
{ \jtyping{\Gamma}{e_1}{b \to t} &  
  \jtyping{\Gamma}{e_2}{b}}
{\jtyping{\Gamma}{\eapp{e_1}{e_2}}{t}}
\\
\inference[\textsc{T-Fun}]
{ \jtyping{\Gamma, x:b}{\nex}{t} }
{\jtyping{\Gamma}{\elam{x}{\nex}}{b \to t}}
\end{gather*}
\end{minipage}
\caption{\corelang: syntax and declarative type system.}\label{fig:lang}
\end{figure}

\subsection{Type Checking as Abstract Interpretation}

% \TODO{How do we make type transformers generate fresh type variables every time?}

\mypara{Type subsumption lattice}
We say that type $T'$ \emph{is more specific than} type $T$ 
(or alternatively, that $T$ \emph{is more general than} or 
\emph{subsumes} $T'$) written $T' \sub T$, iff there exists 
$\sigma$ such that $T' = \sapp{\sigma}{T}$.
The relation $\sub$ is a partial order on types.
For example, in a library with two nullary type 
constructors \T{A} and \T{B}, and a binary type 
constructor \T{P}, we have $\T{P A B} \sub \T{P _a B} \sub \T{P _a _b} \sub \tau$.
This partial order induces an equivalence relation $T_1 \equiv T_2 \triangleq T_1 \sub T_2 \wedge T_2 \sub T_1$
(equivalence up to variable renaming).
% known as $\alpha$-equivalence. 
%
The order (and equivalence) relation extends to substitutions in a standard way:
$\sigma' \sub \sigma \triangleq \exists \rho . \forall \tau . \sapp{\sigma'}{\tau} = \sapp{\rho}{\sapp{\sigma}{\tau}}$.

We augment the set of types with a special bottom 
type $\bot$ % ($\mathbf{B^*} = \mathbf{B} \cup \{\bot\}$)
that is strictly more specific than every other base type;
we also consider a bottom substitution $\sigma_{\bot}$ 
and define $\sapp{\sigma_{\bot}}{B} = \bot$ for any $B$.
A \emph{unifier} of $B_1$ and $B_2$ is a substitution $\sigma$ 
such that $\sapp{\sigma}{B_1} = \sapp{\sigma}{B_2}$;
note that $\sigma_{\bot}$ is a unifier for any two types.
The most general unifier (MGU) is unique up to $\equiv$,
and so, by slight abuse of notation, 
we write it as a function $\mgu{B_1,B_2}$.
We write $\mgu{\many{B_1,B_2}}$ 
for the MGU of a sequence of type pairs, where the MGU 
of an empty sequence is the identity substitution ($\mgu{\cdot} = []$).
The \emph{meet} of two base types is defined as 
$B_1 \meet B_2 = \sapp{\sigma}{B_1} (= \sapp{\sigma}{B_2})$,
where $\sigma = \mgu{B_1, B_2}$.
For example, $\T{P _a B} \meet \T{P A _b} = \T{P A B}$ 
while $\T{P _a B} \meet \T{P _b A} = \bot$.
The \emph{join} of two base types can be defined as their 
anti-unifier, but we elide a detailed discussion as joins 
are not required for our purposes. % \TODO{use $\tau$ / top?} 

% The \emph{join} of two base types $B_1 \join B_2$ is 
% the anti-unifier of $B_1$ and $B_2$.
% For example, $\T{P A A} \join \T{P B B} = \T{P _a _a}$ 
% while $\T{P A A} \join \T{B} = \tau$.

We write $\basebots = \bases \cup \{\bot\}$ for the 
set of base types augmented with $\bot$.
Note that $\langle \basebots, \sub, \join, \meet \rangle$ 
is a lattice with bottom element $\bot$ and top 
element $\tau$ and is isomorphic to \citet{plotkin1970lattice}'s 
\emph{subsumption lattice} on first-order logic terms.

% \mypara{Concretization function}
% %
% The \emph{concretization} of a base type $B \in \basebots$ is the set of its ground instances:
% $\gamma(B) = \{b \mid b \sub B\}$.
% %
% We can show that $T' \sub T$ iff $\gamma(T') \subseteq \gamma(T)$.
%
%
% Finally, $\gamma(\bot) = \emptyset$ while $\gamma(\T{P _a B})$ is the infinite set \setof{\T{P A B}, \T{P B B}, \T{P (P A A) B}, \T{P (P A B) B}, \ldots}.

\mypara{Type Transformers}
A component signature can be interpreted 
as a partial function that maps (tuples of) 
ground types to ground types.
For example, intuitively, a component \T{l :: forall _b.L _b -> M _b} 
maps \T{L A} to \T{M A}, \T{L (M A)} to \T{M (M A)}, and \T{A} to $\bot$.
This gives rise to \emph{type transformer} 
semantics for components, which is similar 
to predicate transformer semantics in predicate 
abstraction and SYNGAR~\cite{WangDS18}, but 
instead of being designed by a domain expert
can be derived automatically from the component 
signatures.

More formally,
we define a \emph{fresh instance} of a polytype
$\fresh{\many{\forall\tau}.T} \triangleq \sapp{[\many{\subst{\tau}{\tau'}}]}{T}$,
where $\many{\tau'}$ are fresh type variables.
Let $c$ be a component and $\fresh{\ty{c}} = \many{B'_i} \to B'$;
then a \emph{type transformer} for $c$ is a function $\sem{c}_{\Lambda}\colon \many{\basebots} \to \basebots$ defined as follows:
$$
\sem{c}_{\Lambda}(\many{B_i}) = \sapp{\sigma}{B'} \quad \text{where}\ \sigma = \mgu{\many{B_i, B'_i}}
$$
%
% More formally,
% a \emph{type transformer} for a component $c$
% where $\ty{c} = \many{\forall\tau} . \many{B'_i} \to B'$
% is a function $\sem{c}_{\Lambda}\colon \many{\basebots} \to \basebots$ defined as follows:
% $$
% \sem{c}_{\Lambda}(\many{B_i}) = \sapp{\sigma}{B''} \quad \text{where}\ \sigma = \mgu{\many{B_i, B''_i}}
% \quad \many{B''_i} \to B'' = \sapp{[\many{\tau'/\tau}]}{\many{B'_i} \to B'} \quad \mbox{using fresh}\ \many{\tau'} 
% $$
%
We omit the subscript $\Lambda$ where the library is clear from the context.
For example, for the component \T{l} above:
$\sem{\T{l}}(\T{L (M _t)}) = \T{M (M _t)}$, 
$\sem{\T{l}}(\tau) = \T{M}\ \tau_1$ (where $\tau_1$ is a fresh type variable), 
and $\sem{\T{l}}(\T{A}) = \bot$ (because $\mgu{\T{L}\ \tau_2, \T{A}} = \sigma_{\bot}$).
We can show that this type transformer is \emph{monotone}:
applying it to more specific types yield a more specific type.
The transformer is also \emph{sound} 
in the sense that in any concrete type derivation where 
the argument to \T{l} is more specific than some $B$, its 
result is guaranteed to be more specific than $\sem{\T{l}}(B)$.

\begin{lemma}[Trans. Monotonicity]\label{lemma:trans-monotone}
If $\many{B^1_i \sub B^2_i}$ then $\sem{c}(\many{B^1_i}) \sub \sem{c}(\many{B^2_i})$.
\end{lemma}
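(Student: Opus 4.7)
The plan is to unwind the definition of $\sem{c}$ and reduce monotonicity to the universal property of the MGU. Using the same fresh instance $\fresh{\ty{c}} = \many{B'_i} \to B'$ on both sides (valid since $\sem{c}$ is only defined up to $\equiv$), write $\sem{c}(\many{B^k_i}) = \sapp{\sigma_k}{B'}$ where $\sigma_k = \mgu{\many{B^k_i, B'_i}}$ for $k=1,2$. From $\many{B^1_i \sub B^2_i}$, after alpha-renaming so that the type variables of the two tuples are disjoint from each other and from the fresh variables, I obtain a single substitution $\rho$ with $B^1_i = \sapp{\rho}{B^2_i}$ for every $i$, whose domain is contained in $\tvars{\many{B^2_i}}$ and is therefore disjoint from $\tvars{\many{B'_i}, B'}$.

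I then case-split on whether $\sigma_2 = \sigma_\bot$. If so, I will show $\sigma_1 = \sigma_\bot$ as well: any unifier $\sigma$ of $\many{(B^1_i, B'_i)}$ would, by disjointness of $\rho$'s domain from $\tvars{\many{B'_i}}$, satisfy $\sapp{\sigma}{\sapp{\rho}{B^2_i}} = \sapp{\sigma}{B^1_i} = \sapp{\sigma}{B'_i} = \sapp{\sigma}{\sapp{\rho}{B'_i}}$, giving a unifier of $\many{(B^2_i, B'_i)}$ and contradicting $\sigma_2 = \sigma_\bot$. In this case both sides equal $\bot$. If instead $\sigma_2 \neq \sigma_\bot$ but $\sigma_1 = \sigma_\bot$, the left side is $\bot$, and the conclusion is immediate from $\bot \sub X$ for every $X$.

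The main case is when $\sigma_1$ and $\sigma_2$ are both proper. Here the key move is to form the composition $\sigma^*$ defined by $\sapp{\sigma^*}{\tau} = \sapp{\sigma_1}{\sapp{\rho}{\tau}}$ and check that $\sigma^*$ unifies each pair $(B^2_i, B'_i)$: since $\rho$ is identity on $\tvars{B'_i}$ by freshness, $\sapp{\sigma^*}{B^2_i} = \sapp{\sigma_1}{\sapp{\rho}{B^2_i}} = \sapp{\sigma_1}{B^1_i} = \sapp{\sigma_1}{B'_i} = \sapp{\sigma^*}{B'_i}$. By the universal property of the MGU, $\sigma^* \sub \sigma_2$, so there exists $\rho^*$ with $\sapp{\sigma^*}{\tau} = \sapp{\rho^*}{\sapp{\sigma_2}{\tau}}$ for every $\tau$. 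Restricting to $\tau \in \tvars{B'}$, where $\rho$ is again the identity, this reads $\sapp{\sigma_1}{B'} = \sapp{\rho^*}{\sapp{\sigma_2}{B'}}$, which is exactly $\sem{c}(\many{B^1_i}) \sub \sem{c}(\many{B^2_i})$.

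The main obstacle I anticipate is the variable-disjointness bookkeeping. The argument depends crucially on $\rho$'s domain being confined to $\tvars{\many{B^2_i}}$ and disjoint from the fresh $\tvars{\many{B'_i}, B'}$, and on $\sigma_1, \sigma_2$ having domains within the expected variable sets so that the compositions commute with applications; the componentwise reading of $\many{B^1_i \sub B^2_i}$ additionally requires a preliminary alpha-renaming to combine the per-index witnesses into the single $\rho$ used above. Once these conventions are set and the MGU's universal property is invoked, the computation collapses to a one-liner, and nothing about the structure of types (arrows, constructors) is used beyond standard first-order unification.
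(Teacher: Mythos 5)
Your proof is correct and follows essentially the same route as the paper's: both reduce the claim to monotonicity of unification via the universal property of the MGU applied to the fresh instance $\fresh{\ty{c}} = \many{B'_i}\to B'$, and then transfer the resulting ordering on substitutions to the return type $B'$. The only difference is one of granularity --- the paper cites monotonicity of unification and of substitution application as separate (unproven) auxiliary lemmas, whereas you prove those steps inline, including the $\sigma_{\bot}$ cases and the variable-disjointness bookkeeping that the paper elides.
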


\begin{lemma}[Trans. Soundness]\label{lemma:trans-sound}
If $\fresh{\ty{c}} = \many{B_i} \to B$ and 
$\many{\sapp{\sigma}{B_i} \sub B'_i}$
then $\sapp{\sigma}{B} \sub \sem{c}(\many{B'_i})$.
\end{lemma}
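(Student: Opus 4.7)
The plan is to factor the definition of $\sem{c}(\many{B'_i})$ through a unifier hand-built from $\sigma$, and then invoke the universal property of the MGU to obtain the desired specificity relation.

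First, I would set up notation so that the fresh instance named in the hypothesis coincides with the one used implicitly in the definition of $\sem{c}$. Any two fresh instances of $\ty{c}$ differ only by renaming of the fresh variables, and $\sub$ is invariant under such renaming, so we may assume both the hypothesis and the definition refer to $\many{B_i} \to B$. Let $X = \tvars{\many{B_i}} \cup \tvars{B}$; by freshness, $X$ is disjoint from $\tvars{\many{B'_i}}$. Unfolding the definition gives $\sem{c}(\many{B'_i}) = \sapp{\rho}{B}$ with $\rho = \mgu{\many{B_i, B'_i}}$.

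Second, I would unpack the hypothesis $\many{\sapp{\sigma}{B_i} \sub B'_i}$ into an explicit witness: a single substitution $\tau$ satisfying $\sapp{\sigma}{B_i} = \sapp{\tau}{B'_i}$ for every $i$ (if the hypothesis is read pointwise, the per-$i$ witnesses can be patched into a single $\tau$ using the fact that each witness only needs to act on $\tvars{\many{B'_i}}$). I then construct a unifier $\mu$ of all pairs $\many{B_i, B'_i}$ by setting $\mu(\alpha) = \sapp{\sigma}{\alpha}$ when $\alpha \in X$ and $\mu(\alpha) = \sapp{\tau}{\alpha}$ otherwise; the two cases do not interfere thanks to disjointness. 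A direct calculation, using that $B_i$ contains only variables from $X$ and $B'_i$ only variables outside $X$, shows $\sapp{\mu}{B_i} = \sapp{\sigma}{B_i} = \sapp{\tau}{B'_i} = \sapp{\mu}{B'_i}$, so $\mu$ is a genuine unifier. For the same reason $\sapp{\mu}{B} = \sapp{\sigma}{B}$.

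Finally, since $\rho$ is the MGU, every unifier factors through it: there exists $\pi$ with $\mu \equiv \pi \circ \rho$. Applying both sides to $B$ gives $\sapp{\sigma}{B} = \sapp{\mu}{B} = \sapp{\pi}{\sapp{\rho}{B}}$, which is precisely the witness that $\sapp{\sigma}{B} \sub \sapp{\rho}{B} = \sem{c}(\many{B'_i})$. The main obstacle I anticipate is the boundary case in which some $\sapp{\sigma}{B_i} = \bot$ (\ie $\sigma = \sigma_\bot$): here $\sapp{\sigma}{B} = \bot$ is the bottom of the lattice, so the conclusion holds vacuously and is best discharged as a short separate case before running the main construction. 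A minor care point is that $\mu$ is a bona fide (non-bottom) unifier, which rules out $\rho = \sigma_\bot$ in the principal case and makes the factoring step applicable.
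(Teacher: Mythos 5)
Your proof is correct, but it takes a genuinely different route from the paper's. The paper derives the result in two lines from the already-established monotonicity lemma: since $\many{\sapp{\sigma}{B_i} \sub B'_i}$, monotonicity gives $\sem{c}(\many{\sapp{\sigma}{B_i}}) \sub \sem{c}(\many{B'_i})$, and then it observes that $\sem{c}(\many{\sapp{\sigma}{B_i}}) = \sapp{\sigma}{B}$ because $\mgu{\many{B_i, \sapp{\sigma}{B_i}}} \equiv \sigma$. You instead argue directly from first principles: you build an explicit unifier $\mu$ of the pairs $\many{B_i, B'_i}$ by gluing $\sigma$ (on the fresh variables of the instance) to the subsumption witness $\tau$ (on the variables of the $\many{B'_i}$), and then invoke the universal property of the MGU to factor $\mu$ through $\rho$, which yields the witness for $\sapp{\sigma}{B} \sub \sapp{\rho}{B}$. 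The paper's route buys brevity by leaning on its auxiliary lemmas (monotonicity of substitution and of unification, which it states without proof); your route is self-contained and effectively inlines the content of those lemmas, while also making explicit two points the paper glosses over: the freshness-based disjointness of $\tvars{\many{B_i}} \cup \tvars{B}$ from $\tvars{\many{B'_i}}$ that makes the gluing well-defined, and the degenerate case $\sigma = \sigma_{\bot}$. The one step you wave at --- patching the per-$i$ witnesses into a single $\tau$ --- silently assumes the $\many{B'_i}$ either have pairwise disjoint variables or that the witnesses agree on shared ones; this is the same implicit convention the paper relies on (abstract argument types are renamed apart), so it is acceptable, but it would be worth one sentence to flag it.
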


The proofs of these and following results can be found in \autoref{appendix:props}.

%% Keeping around if you prefer it.
%% \begin{lemma}[Soundness of Type Transformers]\label{lemma:trans-sound}
%% Let $c$ be a component such that $\ty{c} = \many{\forall\tau} . \many{B_i} \to B$.
%% For any types $\many{B'_i}$ and substitution $\sigma$, 
%% such that $\many{\sapp{\sigma}{B_i} \sub B'_i}$,
%% we have $\sapp{\sigma}{B} \sub \sem{c}(\many{B'_i})$.
%% \end{lemma}

%% \begin{lemma}[Monotonicity of Type Transformers]\label{lemma:trans-monotone}
%% For any component $c$ and any types $\many{B^1_i}$ and $\many{B^2_i}$, 
%% such that $\many{B^1_i \sub B^2_i}$,
%% we have $\sem{c}(\many{B^1_i}) \sub \sem{c}(\many{B^2_i})$.
%% \end{lemma}

\begin{figure}
% \begin{minipage}{.45\textwidth}
% \small
% \textbf{Type Inference}\quad$\boxed{\jtinfer{\Gamma}{e}{B}}$
% \begin{gather*}
% \inference[\textsc{I-Var}]
% {\Gamma(x) = b}
% {\jtinfer{\Gamma}{x}{b} }
% \\
% \inference[\textsc{I-App}]
% { \many{\jtinfer{\Gamma}{e_i}{B_i}} }
% {\jtinfer{\Gamma}{\eapp{c}{\many{e_i}}}{\sem{c}(\many{B_i})}}
% \end{gather*}
% \textbf{Type Checking}\quad$\boxed{\jtcheck{\Gamma}{e}{t}}$
% \begin{gather*}
% \inference[\textsc{C-Fun}]
% { \jtcheck{\Gamma, x:b}{e}{t} }
% {\jtcheck{\Gamma}{\elam{x}{e}}{b \to t}}
% \\
% \inference[\textsc{C-Base}]
% {\jtinfer{\Gamma}{e}{B} & b \sub B}
% {\jtcheck{\Gamma}{e}{b} }
% \end{gather*}
% \end{minipage}
% %
% \begin{minipage}{.45\textwidth}
% \small
\textbf{(Abstract) Type Inference}\quad$\boxed{\jainfer{\Gamma}{e}{B}{\memph{\abset}}}$ \\[0.1in]
\begin{gather*}
\inference[\textsc{I-Var}]
{\Gamma(x) = b}
{\jainfer{\Gamma}{x}{\memph{\alpha_{\abset}(\mnoemph{b})}}{\memph{\abset}} }
\quad\quad
\inference[\textsc{I-App}]
{ \many{\jainfer{\Gamma}{e_i}{B_i}{\memph{\abset}}} }
{\jainfer{\Gamma}{\eapp{c}{\many{e_i}}}{\memph{\alpha_{\abset}\left(\mnoemph{\sem{c}(\many{B_i})}\right)}}{\memph{\abset}}}
\end{gather*} 
\\[0.2in]
\textbf{(Abstract) Type Checking}\quad$\boxed{\jacheck{\Gamma}{\nex}{t}{\memph{\abset}}}$ \\[0.1in]
\begin{gather*}
\inference[\textsc{C-Fun}]
{ \jacheck{\Gamma, x:b}{\nex}{t}{\memph{\abset}} }
{\jacheck{\Gamma}{\elam{x}{\nex}}{b \to t}{\memph{\abset}}}
\quad\quad
\inference[\textsc{C-Base}]
{\jainfer{\Gamma}{e}{B}{\memph{\abset}} & b \sub B}
{\jacheck{\Gamma}{e}{b}{\memph{\abset}} }
\end{gather*}
% \end{minipage}
\caption{Abstract type checking for \corelang. 
Treating $\alpha_{\abset}$ as the identity function yields concrete type checking.}\label{fig:abstract-typing}
\end{figure}

\mypara{Bidirectional Typing}
We can use type transformers to define algorithmic type checking for \corelang,
as shown in \autoref{fig:abstract-typing}.
For now, ignore the parts of the rules highlighted in red,
or, in other words, assume that $\alpha_{\abset}$ is the identity function;
the true meaning of this function is explained in the next section.
As is standard in bidirectional type checking~\cite{PierceTu00},
the type system is defined using two judgments:
the \emph{inference judgment} \jtinfer{\Gamma}{e}{B} generates the (base) type $B$ from the term $e$,
while the \emph{checking judgment} \jtcheck{\Gamma}{\nex}{t} checks $\nex$ against a known (ground) type $t$.
%
% We leverage the fact that \corelang is a first-order language
% and that any well-typed term has an equivalent $\eta$-long $\beta$-normal form,
% to only consider terms of the form $\elam{x_1}{\ldots \elam{x_n}{e'}}$,
% where $e'$ is an \emph{application term},
% \ie contains only variables and fully applied components.
%
Algorithmic typing assumes that the term is in $\eta$-long form, 
\ie there are no partial applications.
During type checking, the outer $\lambda$-abstractions are handled by the checking rule \textsc{C-Fun},
and then the type of inner application term is inferred 
and compared with the given type $b$ in \textsc{C-Base}.

The only interesting case is the inference rule \textsc{I-App},
which handles (uncurried) component applications using their 
corresponding type transformers.
Nullary components are handled by the same rule
(note that in this case $\sem{c} = \fresh{\ty{c}}$).
This type system is algorithmic, because we have eliminated the angelic choice of polymorphic component instantiations
(recall the \textsc{T-Comp} rule in the declarative type system).
Moreover, type inference for application terms can be thought of as abstract interpretation,
where the abstract domain is the type subsumption lattice:
for any application term $e$, the inference computes its ``abstract value'' $B$
(known in type inference literature as its \emph{principal type}).
%
% \TODO{This sentence is strange, because for us the full TSL is the \emph{concrete} domain, not abstract.
% How do we say this better?}
%
We can show that 
% on terms in normal form, 
the algorithmic system is sound and 
complete with respect to the declarative one.

\begin{theorem}[Type Checking is Sound and Complete]\label{thm:algo-sound}
%Let $e$ be a term in normal form. For any ground type $t$, 
\jtyping{\cdot}{\nex}{t} iff \jtcheck{\cdot}{\nex}{t}.
\end{theorem}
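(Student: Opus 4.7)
The plan is to prove each direction by induction on the relevant typing derivation, after strengthening the statement for application terms so that the inference judgment produces a \emph{principal} type. Specifically, I would establish four mutually recursive claims, each quantified over arbitrary environments $\Gamma$ (not only $\cdot$). For \textbf{completeness}: if $\jtyping{\Gamma}{e}{t}$ with $e$ an application term, then there is some $B$ with $\jtinfer{\Gamma}{e}{B}$ and $t \sub B$; and if $\jtyping{\Gamma}{\nex}{t}$, then $\jtcheck{\Gamma}{\nex}{t}$. For \textbf{soundness}: if $\jtinfer{\Gamma}{e}{B}$ and ground $t$ satisfies $t \sub B$, then $\jtyping{\Gamma}{e}{t}$; and if $\jtcheck{\Gamma}{\nex}{t}$, then $\jtyping{\Gamma}{\nex}{t}$. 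The theorem then follows by taking $\Gamma = \cdot$.

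For completeness, I would induct on the declarative derivation. The \textsc{T-Var} and \textsc{T-Fun} cases match their algorithmic counterparts directly. The interesting case is a chain of \textsc{T-App}s whose head is a \textsc{T-Comp}: given $e = c\,e_1\cdots e_n$ fully applied (the $\eta$-long assumption guarantees this is the only shape for application terms of base type), the declarative derivation picks a $\sigma$ with $\sapp{\sigma}{T} = t_1\to\cdots\to t_n \to t$, where $t_i$ is the derived ground type of $e_i$. The inductive hypothesis gives $\jtinfer{\Gamma}{e_i}{B_i}$ with $t_i \sub B_i$, and Lemma~\ref{lemma:trans-sound} applied to $c$ with substitution $\sigma$ yields $t \sub \sem{c}(\many{B_i})$, which is exactly what \textsc{I-App} infers.

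For soundness, I would induct on the algorithmic derivation. The base cases \textsc{I-Var}, \textsc{C-Fun}, and \textsc{C-Base} are immediate. For \textsc{I-App}, suppose $\jtinfer{\Gamma}{c\,\many{e_i}}{\sem{c}(\many{B_i})}$ with $t \sub \sem{c}(\many{B_i})$ for ground $t$. Unfolding the transformer, $\sem{c}(\many{B_i}) = \sapp{\rho}{B'}$ where $\fresh{\ty{c}} = \many{B_i'} \to B'$ and $\rho = \mgu{\many{B_i, B_i'}}$. From $t \sub \sapp{\rho}{B'}$ I extract a $\sigma'$ with $t = \sapp{\sigma' \circ \rho}{B'}$, extending it if needed to ground the remaining free variables; the resulting $\sigma$ makes each $t_i \triangleq \sapp{\sigma}{B_i'}$ ground and, using $\sapp{\rho}{B_i'} = \sapp{\rho}{B_i}$, satisfies $t_i \sub B_i$. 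The inductive hypothesis then yields $\jtyping{\Gamma}{e_i}{t_i}$, and combining these with a single \textsc{T-Comp} under $\sigma$ followed by $n$ applications of \textsc{T-App} delivers $\jtyping{\Gamma}{c\,\many{e_i}}{t}$.

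The main obstacle I anticipate is the currying mismatch: the declarative \textsc{T-App} processes one argument at a time and uses \textsc{T-Comp} once at the head, while \textsc{I-App} consumes all arguments of a component in a single step. Careful bookkeeping is needed to see that a fully-curried declarative chain of $n$ applications of $c$ can always be regrouped as a single uncurried application (relying on $\eta$-longness to rule out partial applications) and, conversely, that the single substitution chosen by \textsc{T-Comp} in the soundness direction can be recovered from the MGU implicit in \textsc{I-App}. Once this bridge is in place, the remaining work reduces to routine substitution manipulations guided by Lemmas~\ref{lemma:trans-monotone} and \ref{lemma:trans-sound}.
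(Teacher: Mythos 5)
Your proposal is correct and follows essentially the same route as the paper's proof: the same split into soundness/completeness of inference for application terms (generalized to arbitrary $\Gamma$) plus soundness/completeness of checking, the same use of the type-transformer soundness lemma in the completeness direction, the same substitution-extraction-and-grounding argument for the \textsc{I-App} soundness case, and the same bridging of the currying mismatch (which the paper handles by stating an explicitly uncurried variant of the declarative system equivalent on $\eta$-long terms). No gaps.
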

% \begin{proof}[Proof Sketch]
% Proof proceeds by induction on the derivation of \jtyping{\Gamma}{\nex}{t} and \jtcheck{\Gamma}{\nex}{t}
% and uses lemmas that $\jtinfer{\Gamma}{e}{B} \Rightarrow \jtyping{\Gamma}{e}{b}$ for all $b \sub B$, 
% and $\jtyping{\Gamma}{e}{b} \Rightarrow \jtinfer{\Gamma}{e}{B}$ for some $B \sqsupseteq b$.
% \end{proof}

\subsection{Abstract Typing}

The algorithmic typing presented so far is 
just a simplified version of Hindley-Milner type inference.
% not particularly interesting: it is pretty-standard Hindley-Milner type inference
% a Hindley-Milner system simplified to a first-order 
% language where the top-level type is given,
% so we never have to guess types of $\lambda$-binders.
%
However, casting type inference as abstract interpretation
gives us the flexibility to tune the \emph{precision} of 
the type system by restricting the abstract domain to a 
\emph{sub-lattice} of the full type subsumption lattice.
This is similar to predicate abstraction, 
where precision is tuned by restricting 
the abstract domain to boolean combinations 
of a finite set of predicates.

\mypara{Abstract Cover}
An \emph{abstract cover} $\abset = \{A_1, \ldots, A_n\}$ 
is a set of base types $A_i \in \basebots$
that contains $\tau$ and $\bot$, and is a 
sub-lattice of the type subsumption lattice 
(importantly, it is closed under $\meet$).
% 
% and join).
% \TODO{We actually only care about closed under meet. But why?}
%
For example, in a library with a nullary constructor \T{A} and two unary constructors \T{L} and \T{M},
$\abset_0 = \{\tau, \bot\}$,
$\abset_1 = \{\tau, \T{A}, \T{L _t}, \bot\}$,
and $\abset_2 = \{\tau, \T{A}, \T{L _t}, \T{L (M _t)}, \T{M (M _t)}, \bot\}$
are abstract covers.
Note that in a cover, the scope of a type variable is each individual base type,
so the different instances of $\tau$ above are unrelated.
We say that an abstract cover $\abset'$ \emph{refines} a cover $\abset$ ($\abset' \refines \abset$)
if $\abset$ is a sub-lattice of $\abset'$.
In the example above, $\abset_2 \refines \abset_1 \refines \abset_0$.

\mypara{Abstraction function}
Given an abstract cover $\abset$, 
the \emph{abstraction} $\alpha_{\abset}\colon \basebots \to \basebots$ of a base type $B$
is defined as the most specific type in $\abset$ that subsumes $B$:
$$
\alpha_{\abset}(B) = A\in \abset\ \text{such that}\ B\sub A\ \text{and}\ \forall A'\in \abset.B\sub A'\Rightarrow A\sub A'
$$
We can show that $\alpha_{\abset}(B)$ is unique,
because $\abset$ is closed under meet.
%
% \TODO{Should we explain how to compute it? Do we actually compute it efficiently, or just by iterating over $\abset$?}
%
In abstract interpretation, it is customary to define a dual \emph{concretization} function.
In our case, the abstract domain $\abset$ is a sub-lattice of the concrete domain $\basebots$,
and hence our concretization function is the identity function $\mathit{id}$.
It is easy to show that $\alpha_{\abset}$ and $\mathit{id}$ form a \emph{Galois insertion},
because $B \sub \mathit{id}(\alpha_{\abset}(B))$ and $A = \alpha_{\abset}(\mathit{id}(A))$
both hold by definition of $\alpha_{\abset}$.

\mypara{Abstract Type Checking}
Armed with the definition of abstraction function,
let us now revisit \autoref{fig:abstract-typing} and consider the highlighted parts we omitted previously.
The two abstract typing judgments---for checking and inference---are parameterized by the abstract cover. 
The only interesting changes are in the \emph{abstract type inference} judgment $\jainfer{\Gamma}{e}{B}{\abset}$,
which applies the abstraction function to the inferred type at every step.
For example, recall the covers $\abset_1$ and $\abset_2$ defined above,
and consider a term \T{l xs} where $\ty{\T{l}} = \forall \beta.\T{L}\ \beta \to \T{M}\ \beta$
and $\Gamma(\T{xs}) = \T{L (M A)}$.
Then in $\abset_1$ we infer \jainfer{\Gamma}{\T{l xs}}{\tau}{\abset_1},
since $\alpha_{\abset_1}(\T{L (M A)}) = \T{L _t}$
and $\sem{\T{l}}(\T{L _t}) = \T{M _t}$, but \T{M _t} is abstracted to $\tau$.
However, in $\abset_2$ we infer \jainfer{\Gamma}{\T{l xs}}{\T{M (M _t)}}{\abset_2},
since $\alpha_{\abset_2}(\T{L (M A)}) = \T{L (M _t)}$,
and $\sem{\T{l}}(\T{L (M _t)}) = \T{M (M _t)}$, 
which is abstracted to itself.

We can show that abstraction preserves typing: 
%with a refined cover is no less precise: 
\ie $\nex$ has type $t$ in an abstraction $\abset$ 
whenever it has type $t$ in a \emph{more refined} 
abstraction $\abset' \refines \abset$:
\begin{theorem}[Typing Preservation]\label{thm:refine}
% for any $\Lambda$, $\Gamma$, $e$, $t$, $\abset$, $\abset'$,
If $\abset' \refines \abset$
and $\jacheck{\Gamma}{\nex}{t}{\abset'}$
then $\jacheck{\Gamma}{\nex}{t}{\abset}$.
\end{theorem}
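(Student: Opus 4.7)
The plan is to prove the statement by induction on the structure of the normal-form term $\nex$, mirroring the bidirectional typing rules. The high-level intuition is that a coarser cover $\abset$ is a sound overapproximation of a more refined cover $\abset'$, so any type inferred under $\abset'$ will be \emph{more specific} than the type inferred under $\abset$; then the subsumption check in rule \textsc{C-Base} still succeeds, because $b \sub B' \sub B$.

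First I would establish two preparatory monotonicity lemmas about the abstraction function. \emph{Lemma A (monotonicity in the argument):} if $B_1 \sub B_2$ then $\alpha_{\abset}(B_1) \sub \alpha_{\abset}(B_2)$, which follows directly from the ``most specific upper bound'' definition of $\alpha_{\abset}$ together with transitivity of $\sub$. \emph{Lemma B (monotonicity in the cover):} if $\abset' \refines \abset$ then for every $B$, $\alpha_{\abset'}(B) \sub \alpha_{\abset}(B)$. The proof of Lemma B is: since $\abset$ is a sub-lattice of $\abset'$, the element $\alpha_{\abset}(B) \in \abset$ also lies in $\abset'$ and subsumes $B$, so the minimality clause in the definition of $\alpha_{\abset'}(B)$ forces $\alpha_{\abset'}(B) \sub \alpha_{\abset}(B)$.

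The core technical step is an inference-judgment lemma: if $\abset' \refines \abset$ and $\jainfer{\Gamma}{e}{B'}{\abset'}$, then there exists $B$ with $\jainfer{\Gamma}{e}{B}{\abset}$ and $B' \sub B$. I would prove this by induction on $e$. In the \textsc{I-Var} case, $B' = \alpha_{\abset'}(\Gamma(x))$ and $B = \alpha_{\abset}(\Gamma(x))$, and Lemma B gives $B' \sub B$. In the \textsc{I-App} case for $\eapp{c}{\many{e_i}}$, the induction hypothesis gives types $B_i$ with $B'_i \sub B_i$ under the two covers; transformer monotonicity (Lemma 3.1) then yields $\sem{c}(\many{B'_i}) \sub \sem{c}(\many{B_i})$, and Lemma A gives $\alpha_{\abset'}\bigl(\sem{c}(\many{B'_i})\bigr) \sub \alpha_{\abset'}\bigl(\sem{c}(\many{B_i})\bigr)$, while Lemma B gives $\alpha_{\abset'}\bigl(\sem{c}(\many{B_i})\bigr) \sub \alpha_{\abset}\bigl(\sem{c}(\many{B_i})\bigr)$; chaining these via transitivity of $\sub$ closes the case.

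Finally, I lift the inference lemma to the checking judgment by induction on the derivation of $\jacheck{\Gamma}{\nex}{t}{\abset'}$. The \textsc{C-Fun} case is a direct appeal to the inductive hypothesis on the body with the extended environment $\Gamma, x{:}b$. The \textsc{C-Base} case is where everything pays off: from $\jainfer{\Gamma}{e}{B'}{\abset'}$ and $b \sub B'$, the inference lemma produces $B$ with $\jainfer{\Gamma}{e}{B}{\abset}$ and $B' \sub B$, and transitivity gives $b \sub B$, so \textsc{C-Base} reapplies under $\abset$. The main obstacle I expect is getting the inference lemma's statement right — in particular, being careful that the ``fresh'' type variables introduced in $\fresh{\ty{c}}$ inside the type transformer are treated as local per-application, so that the subsumption relation $B' \sub B$ is comparing types over compatible variable scopes; once that bookkeeping is handled, the chain of monotonicity steps goes through cleanly.
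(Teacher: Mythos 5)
Your proposal is correct and follows essentially the same route as the paper's proof: an auxiliary monotonicity fact about $\alpha_{\abset}$ with respect to the cover, an inference-refinement lemma ($\jainfer{\Gamma}{e}{B'}{\abset'}$ implies $\jainfer{\Gamma}{e}{B}{\abset}$ with $B' \sub B$) proved by induction using transformer monotonicity, and a final induction on the checking derivation that closes \textsc{C-Base} by transitivity of $\sub$. The only cosmetic difference is that the paper packages your Lemmas A and B into a single combined statement ($B' \sub B$ and $\abset' \refines \abset$ imply $\alpha_{\abset'}(B') \sub \alpha_{\abset}(B)$), which is exactly the composition of your two monotonicity steps.
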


As $\basebots \refines \abset$ for any $\abset$, 
the above \autoref{thm:refine} implies that abstract 
typing conservatively \emph{over-approximates} concrete 
typing:

\begin{corollary}\label{thm:over-typecheck}
% for any $\Lambda$, $\Gamma$, $e$, $t$, $\abset$,
If $\jtcheck{\cdot}{\nex}{t}$ then $\jacheck{\cdot}{\nex}{t}{\abset}$.
\end{corollary}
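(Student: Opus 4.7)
The plan is to derive this corollary as an immediate consequence of Theorem~\ref{thm:refine}, using the observation that concrete type checking is simply abstract type checking instantiated at the maximal cover $\basebots$. So the proof reduces to two small observations and one invocation.

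First, I would establish that $\jtcheck{\cdot}{\nex}{t}$ is literally the same judgment as $\jacheck{\cdot}{\nex}{t}{\basebots}$. The excerpt remarks that ``treating $\alpha_{\abset}$ as the identity function yields concrete type checking,'' so I only need to check that $\alpha_{\basebots}$ \emph{is} the identity on $\basebots$. This is immediate from the definition of the abstraction function: for any $B \in \basebots$, the most specific type in $\basebots$ that subsumes $B$ is $B$ itself. A quick induction on the typing derivation (driven by \textsc{C-Fun}, \textsc{C-Base}, \textsc{I-Var}, and \textsc{I-App}) then confirms that the highlighted and unhighlighted versions of the rules in \autoref{fig:abstract-typing} agree when $\alpha_{\abset} = \mathit{id}$.

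Second, I would observe that $\basebots \refines \abset$ for every abstract cover $\abset$. This is essentially by definition: an abstract cover is a sub-lattice of the full type subsumption lattice $\basebots$, and refinement $\abset' \refines \abset$ was defined as ``$\abset$ is a sub-lattice of $\abset'$.''

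With both facts in hand, the corollary follows in one step: from $\jtcheck{\cdot}{\nex}{t}$ we get $\jacheck{\cdot}{\nex}{t}{\basebots}$ by the first observation, and then by \autoref{thm:refine} applied with $\abset' = \basebots$ and the given $\abset$, we obtain $\jacheck{\cdot}{\nex}{t}{\abset}$. There is no real obstacle here; the only mildly subtle point is being explicit that $\alpha_{\basebots}$ genuinely acts as the identity, which is why I would spell that out before invoking the main theorem.
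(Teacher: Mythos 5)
Your proposal is correct and follows exactly the paper's own route: the paper derives this corollary from \autoref{thm:refine} together with the observation that $\basebots \refines \abset$ for any cover $\abset$, with concrete checking being abstract checking at $\basebots$ (where $\alpha_{\basebots}$ is the identity). Your extra care in spelling out why $\alpha_{\basebots} = \mathit{id}$ is a reasonable elaboration of the paper's remark that ``treating $\alpha_{\abset}$ as the identity function yields concrete type checking,'' but it does not change the argument.
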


% \begin{proof}
% Follows from \autoref{thm:refine} because 
% \end{proof}
\section{Synthesis}
\label{sec:synth}

Next, we formalize the concrete and abstract 
synthesis problems, and use the notion of 
abstract type checking from \autoref{sec:check} 
to develop the \tygar synthesis algorithm, which 
solves the (concrete) synthesis problem 
by solving a sequence of abstract synthesis 
problems with increasing detail.

\mypara{Synthesis Problem}
A \emph{synthesis problem} $(\Lambda, t)$ 
is a pair of a component library and 
\emph{query type}. % such that $\jwf{t}$.
A \emph{solution} to the synthesis problem 
is a normal-form term $\nex$ such that 
$\jtyping{\cdot}{\nex}{t}$.
Note that the normal-form requirement 
does not restrict the solution space:
\corelang has no higher-order functions 
or recursion, hence any well-typed program 
has an equivalent $\eta$-long $\beta$-normal form. 
We treat the query type as a monotype 
without loss of generality: any query 
polytype $\many{\forall\tau}.T$ is 
equivalent to $\sapp{[\many{\subst{\tau}{C}}]}{T}$
where $\many{C}$ are fresh nullary 
type constructors.
%
%% Hereafter, we will only consider 
%% solutions in \emph{normal form} 
%% $\elam{x_1}{\ldots \elam{x_n}{e'}}$,
%% where $n$ is the arity of the query 
%% type and $e'$ is an \emph{application term},
%% \ie contains only variables and fully applied components%
%
The synthesis problem in \corelang is 
\emph{semi-decidable}: if a solution $\nex$ 
exists, it can be found by enumerating programs 
of increasing size.
% in the order of increasing size. 
%
Undecidability follows from a reduction from 
Post's Correspondence Problem (see \autoref{appendix:props}).

\mypara{Abstract Synthesis Problem}
An \emph{abstract synthesis problem} $(\Lambda, t, \abset)$ 
is a triple of a component library, query type, and 
abstract cover.
A \emph{solution} to the abstract synthesis problem 
is a program term $\nex$ such that $\jacheck{\cdot}{\nex}{t}{\abset}$. 
We can use \autoref{thm:over-typecheck} and \autoref{thm:algo-sound}, 
to show that any solution to a concrete synthesis problem is 
also a solution to any of its abstractions:

\begin{theorem}\label{thm:over}
% for any $\Lambda$, $\Gamma$, $t$, $\abset$, $e$,
If $\nex$ is a solution to $(\Lambda, t)$,
then $\nex$ is also a solution to $(\Lambda, t, \abset)$.
\end{theorem}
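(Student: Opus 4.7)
The plan is to chain together the two results explicitly named in the statement, treating Theorem~\ref{thm:over} as essentially a corollary of the soundness of algorithmic type checking and the over-approximation property of abstract type checking.

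First, I would unfold the definitions. A solution $\nex$ to $(\Lambda, t)$ is, by definition, a normal-form term with $\jtyping{\cdot}{\nex}{t}$, while a solution to $(\Lambda, t, \abset)$ is a normal-form term with $\jacheck{\cdot}{\nex}{t}{\abset}$. The query $t$ is ground (the excerpt notes that we treat the query as a monotype without loss of generality), so the algorithmic checking judgment is well-defined. It therefore suffices to start from $\jtyping{\cdot}{\nex}{t}$ and derive $\jacheck{\cdot}{\nex}{t}{\abset}$ for an arbitrary abstract cover $\abset$.

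Next, I would apply \autoref{thm:algo-sound} in the forward direction to convert the declarative typing derivation $\jtyping{\cdot}{\nex}{t}$ into the algorithmic concrete checking derivation $\jtcheck{\cdot}{\nex}{t}$. This is exactly the ``only if'' half of the iff. Then \autoref{thm:over-typecheck} (which is itself a corollary of \autoref{thm:refine} applied with $\abset' = \basebots$) immediately yields $\jacheck{\cdot}{\nex}{t}{\abset}$. Re-folding the definition of an abstract solution concludes the proof.

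There is no real obstacle here: all the semantic content has already been discharged in \autoref{thm:algo-sound} (soundness/completeness of bidirectional typing, which rests on \textsc{T-Comp}'s angelic choice being realized by MGU computation) and in \autoref{thm:refine} (preservation of typing under coarsening, whose proof in turn hinges on \autoref{lemma:trans-monotone} applied pointwise at each \textsc{I-App}). The only thing to be careful about is the quantifier: the statement holds for \emph{every} $\abset$, so I would make the $\abset$ explicit and arbitrary at the start of the argument and avoid any tacit assumption that $\abset$ already contains $t$ or its subcomponents, since the abstraction function $\alpha_{\abset}$ is total on $\basebots$ and handles any cover uniformly.
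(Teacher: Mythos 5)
Your proposal is correct and matches the paper's intended argument exactly: the text introducing Theorem~\ref{thm:over} states that it follows by combining \autoref{thm:algo-sound} (to pass from the declarative judgment to algorithmic concrete checking) with \autoref{thm:over-typecheck} (to pass from concrete to abstract checking for an arbitrary cover $\abset$). Your additional care about quantifying over an arbitrary $\abset$ is sound but does not change the substance.
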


% \begin{proof}
% Follows from \autoref{thm:over-typecheck} and \autoref{thm:algo-sound}.
% \end{proof}

\subsection{Abstract Transition Nets}\label{sec:algo:atn}

Next we discuss how to construct an abstract transition net (ATN)
for a given abstract synthesis problem $(\Lambda, t, \abset)$,
and use ATN reachability to find a solution to this synthesis problem.

\mypara{Petri Nets}
A \emph{Petri net} $N$ is a triple $(P, T, E)$,
where $P$ is a set of places,
$T$ is a set of transitions,
$E\colon (P\times T)\cup(T\times P)\to \nat$ is a matrix of edge multiplicities
(absence of an edge is represented by a zero entry).
A \emph{marking} of a Petri net is a mapping $M\colon P\to \nat$
that assigns a non-negative number of tokens to every place.
A \emph{transition firing} is a triple $M_1 \steps{t} M_2$,
such that for all places $p$: $M_1(p) \geq E(p, t) \wedge M_2(p) = M_1(p) - E(p, t) + E(t, p)$.
A sequence of transitions $t_1,\ldots,t_n$ is a \emph{path} between $M$ and $M'$
if $M \steps{t_1} M_1 \ldots M_{n - 1} \steps{t_n} M'$ is a sequence of transition firings.

\mypara{ATN Construction}
Consider an abstract synthesis problem $(\Lambda, t, \abset)$,
where $t = b_1 \to \ldots \to b_n \to b$.
% let $\args{S}$ be the sequence of argument types of $S$ and $\ret{S}$ be the return type of $S$.
%
An \emph{abstract transition net} $\net(\Lambda, t, \abset)$ 
is a 5-tuple $(P, T, E, I, F)$, where $(P, T, E)$ is a 
Petri net, $I\colon P\to \nat$ is a multiset of \emph{initial places}
and $F \subseteq P$ is a set of \emph{final places} defined as follows:
% is a triple $(N, \mathbf{M}, \mathbf{M'})$, where $N$ is a 
% Petri net, and $\mathbf{M}$, $\mathbf{M'}$ are sets of 
% \emph{initial} and \emph{final} markings defined as follows:
\begin{enumerate}
\item the set of \emph{places} $P = \abset \setminus \setof{\bot}$;
\item \emph{initial places} are abstractions of query arguments: 
      for every $i \in [1, n]$, add 1 to $I(\alpha_{\abset}(b_i))$;
\item \emph{final places} are all places that subsume the query result: 
      $F = \setof{A \in P \mid b \sub A}$.
\item \label{step:trans}
      for each component $c \in \Lambda$ 
      and for each tuple $A, A_1, \ldots, A_m \in P$, where $m$ is the arity of $c$,
      add a \emph{transition} $t$ to $T$ iff $\alpha_{\abset}\left(\sem{c}(A_1, \ldots, A_m)\right) \equiv A$;
      set $E(t, A) = 1$ and add 1 to $E(A_j, t)$ for every $j \in [1, m]$;
% \item an \emph{initial marking} $M \in \mathbf{M}$ 
      % assigns zero tokens everywhere except 
      % in $\places{\mathbf{M}} \triangleq \setof{\alpha_{\abset}(b_i) \mid i \in (1, n)}$;
% \item a \emph{final marking} $M' \in \mathbf{M'}$ 
      % assigns exactly one token to some 
      % place in $\places{\mathbf{M'}} \triangleq \setof{A \mid b \sub A}$.
\item for each initial place $\{p \in P \mid I(p) > 0\}$, 
      add a self-loop \emph{copy transition} $\kappa$ to $T$, setting $E(p, \kappa) = 1$ and $E(\kappa, p) = 2$,
      and a self-loop \emph{delete transition} $\delta$ to $T$, setting $E(p, \delta) = 1$ and $E(\delta, p) = 0$.
\end{enumerate}
Given an ATN $\net = (P, T, E, I, F)$,
% $M$ is a \emph{valid initial marking} if assigns zero tokens everywhere except the initial places:
% $\forall p\in P . I(p) = 0 \Rightarrow M(p) = 0$.
%
$M_F$ is a \emph{valid final marking} if it assigns exactly one token to some final place: 
$\exists f\in F . M_F(f) = 1  \wedge \forall p \in P . p \neq f \Rightarrow M_F(p) = 0$.
A path $\pi = [t_1, \ldots, t_n]$ is a \emph{valid path} of the ATN 
($\pi \models \net$),
if it is a path in the Petri net $(P, T, E)$ 
% from some valid initial marking $M$ to some valid final marking $M'$.
from the marking $I$ to some valid final marking $M_F$.

\mypara{From Paths to Programs}
Any valid path $\pi$ corresponds to a set of 
normal-form terms $\fromPath{\pi}$.
The mapping from paths to programs has been 
defined in prior work on \sypet, so we do not 
formalize it here. 
Intuitively, multiple programs arise  because 
a path does not distinguish between different 
tokens in one place and has no notion of order 
of incoming edges of a transition.
%
% Importantly, $\fromPath{\pi}$ only contains 
% terms in normal form.
%
% every term in $\fromPath{\pi}$ is of the form
% $\elam{x_1}{\ldots\ \elam{x_n}{e}}$,
% where $n$ is the number of tokens in the initial marking $M$,
% and $e$ consists only of variables $x_1 \ldots x_n$ and \emph{fully applied} components.

\mypara{Guarantees}
ATN reachability is both sound and complete with respect to (abstract) typing:

\begin{theorem}[ATN Completeness]\label{thm:atn-complete}
% For any $\Lambda$, $\nex$, $t$, $\abset$, $\pi$,
If $\jacheck{\cdot}{\nex}{t}{\abset}$ and $\nex \in \fromPath{\pi}$
then $\pi \models \net(\Lambda, t, \abset)$.
\end{theorem}

\begin{theorem}[ATN Soundness]\label{thm:atn-sound}
% For any $\Lambda$, $t$, $\abset$, $\pi$,
If $\pi \models \net(\Lambda, t, \abset)$,
then % there exists $\nex$ such that 
$\exists \nex \in \fromPath{\pi}$ s.t. $\jacheck{\cdot}{\nex}{t}{\abset}$.
\end{theorem}

\mypara{Abstract Synthesis Algorithm}
\autoref{alg:synthesis} (left) presents an algorithm
for solving an abstract synthesis problem $(\Lambda, t, \abset)$.
The algorithm first constructs the ATN $\net(\Lambda, t, \abset)$.
Next, the function \textproc{\shortestpath} uses a constraint solver 
to find a shortest valid path $\pi \models \net$%
\footnote{\autoref{sec:impl:encoding} details our encoding of ATN reachability into constraints.}.
From \autoref{thm:atn-complete}, 
we know that if no valid path exists (no final marking is reachable from any initial marking),
then the abstract synthesis problem has no solution,
so the algorithm returns $\bot$.
Otherwise, it enumerates all programs $\nex \in \fromPath{\pi}$
and type-checks them abstractly, until it encounters an $\nex$ 
that is abstractly well-typed (such an $\nex$ must exists per 
\autoref{thm:atn-sound}).

\begin{figure}
\begin{minipage}{.45\textwidth}
  \small
  \begin{algorithmic}[1]
    \Require{Abstract synthesis problem ($\Lambda, t, \abset$)}
    \Ensure{Solution $e$ or $\bot$ if no solution}  
    \Statex
    \Function{\abssynth}{$\Lambda, t, \abset$}
      \Let{$\net$}{$\net(\Lambda, t, \abset)$}
      \Let{$\pi$}{\Call{\shortestpath}{$\net$}}
      \If{$\pi = \bot$}
        \State \Return{$\bot$}
      \Else 
        \For{$\nex \in \fromPath{\pi}$}
          \If{$\jacheck{\cdot}{\nex}{t}{\abset}$}
            \State \Return{$\nex$}
          \EndIf
        \EndFor             
      \EndIf
      \Statex
    \EndFunction
  \end{algorithmic}
\end{minipage}
\begin{minipage}{.45\textwidth}
  \small
  \begin{algorithmic}[1]
    \Require{Synthesis problem ($\Lambda, t$), initial cover $\abset_0$}
    \Ensure{Solution $\nex$ or $\bot$ if no solution}  
    \Statex
    \Function{Synthesize}{$\Lambda, t, \abset_0$}
      \Let{$\abset$}{$\abset_0$}
      \While{true}
        \Let{$\nex$}{\Call{\abssynth}{$\Lambda, t, \abset$}}\label{algo:cand}
        \If{$\nex = \bot$}
          \State \Return{$\bot$}
        \ElsIf{$\jtcheck{\cdot}{\nex}{t}$}\label{algo:tc}
          \State \Return{$\nex$}
        \Else
          \Let{$\abset$}{\Call{Refine}{$\abset, \nex, t$}}
        \EndIf
      \EndWhile
    \EndFunction
  \end{algorithmic}
\end{minipage}  
\caption{(left) Algorithm for the abstract synthesis problem. 
(right) The \tygar algorithm.}\label{alg:synthesis}
\end{figure}

\mypara{ATN versus TTN}
Our ATN construction is inspired by but different from 
the TTN construction in \sypet \cite{WangDS18}.
In the monomorphic setting of \sypet, it suffices to 
add a single transition per component. 
To account for our \emph{polymorphic} components,
we need a transition for every \emph{abstract instance}
of the component's polytype.
To compute the set of abstract instances, we consider 
all possible $m$-tuples of places, and for each, we 
compute the result of the abstract type transformer 
$\alpha_{\abset}\left(\sem{c}(A_1, \ldots, A_m)\right)$.
This result is either $\bot$, in which case no transition 
is added, or some $A \in P$, in which case we add 
a transition from $A_1, \ldots, A_m$ to $A$.

Due to abstraction, unlike \sypet, where the final 
marking contains a single token in the result type 
$b$, we must allow for several possible final markings.
Specifically, we allow the token to end up in any 
place $A$ that subsumes $b$, not just in its most 
precise abstraction $\alpha_{\abset}(b)$.
This is because, like any abstract interpretation, 
abstract type inference might lose precision, and 
so requiring that it infer the most precise type 
$\alpha_{\abset}(b)$ for the solution would lead 
to incompleteness.

% \TODO{These theorems are nontrivial; 
% they should have been proved in Sypet, but weren't:
% they did not even have a type system there.}

\mypara{Enforcing Relevance}
Finally, consider copy transitions $\kappa$ and delete transitions $\delta$:
in this section, we describe an ATN that 
implements a simple, structural type system,
where each function argument can be used zero 
or more times.
Hence we allow the ATN to duplicate tokens in the initial marking $I$
using $\kappa$ transitions 
and discard them using $\delta$ transitions.
We can easily adapt the ATN definition to 
implement a relevant type system by eliminating the $\delta$ transitions
(this is what our implementation does, see \autoref{sec:impl:encoding});
a linear type system can be supported by eliminating both.

% Finally, consider the definition of a valid initial marking $M$:
% in this section, we describe an ATN that 
% implements a simple, structural type system,
% where each function argument can be used zero 
% or more times.
% %
% Hence we allow $M$ to contain \emph{any non-negative} number 
% of tokens in each place that abstracts a query argument.
% %
% We can easily adapt the ATN definition to 
% implement a relevant type system by requiring 
% \emph{at least one token} in each of those places:
% $\forall p \in P . (0 = I(p) = M(p)) \vee (0 \neq I(p) \leq M(p))$
% (this is what our implementation does, see \autoref{sec:impl:encoding}).
% %
% A linear type system can be supported by 
% requiring \emph{exactly one} token per argument:
% $M = I$.

\subsection{The \tygar Algorithm}\label{sec:synthesis:tygar}

The abstract synthesis algorithm from
\autoref{alg:synthesis} either returns 
$\bot$, indicating that there is no solution 
to the synthesis problem, or a term $\nex$ 
that is abstractly well-typed.
However, this term may not be (concretely) 
well-typed, and hence, may not be a solution 
to the synthesis problem.
We now turn to the core of our technique: the 
\emph{type-guided abstraction refinement} (\tygar) 
algorithm which iteratively refines an abstract 
cover $\abset$ (starting with some $\abset_0$) until 
it is specific enough that a solution to an 
abstract synthesis problem is also well-typed 
in the concrete type system. 

\autoref{alg:synthesis} (right) describes the 
pseudocode for the \tygar procedure which takes 
as input a (concrete) synthesis problem $(\Lambda, t)$ 
and an initial abstract cover $\abset_0$,
and either returns a solution $\nex$ to the 
synthesis problem or $\bot$ if $t$ cannot be 
inhabited using the components in $\Lambda$.
In every iteration, \tygar first solves the 
abstract synthesis problem at the current 
level of abstraction $\abset$, using the 
previously defined algorithm \textproc{\abssynth}.
If the abstract problem has no solution,
then neither does the concrete one 
(by \autoref{thm:over}), so the 
algorithm returns $\bot$.
Otherwise, the algorithm type-checks the 
term $\nex$ against the concrete query type.
If it is well-typed, then $\nex$ is a 
solution to the synthesis problem $(\Lambda, t)$;
otherwise $\nex$ is \emph{spurious}.

\mypara{Refinement}
The key step in the \tygar algorithm 
is the procedure \textproc{Refine},
which takes as input the current cover 
$\abset$ and a spurious program $\nex$
and returns a refinement $\abset'$ of 
the current cover ($\abset' \refines \abset$)
such that $\nex$ is abstractly ill-typed 
in $\abset'$ ($\jancheck{\cdot}{\nex}{t}{\abset'}$).
Procedure \textproc{Refine} is detailed 
in \autoref{sec:algo:refine}, but the 
declarative description above suffices 
to see how it helps the synthesis algorithm 
make progress: in the next iteration, 
\textproc{\abssynth} cannot return the 
same spurious program $\nex$, as it no 
longer type-checks abstractly.
Moreover, the intuition is that along with 
$\nex$ the refinement rules out many other 
spurious programs that are ill-typed 
``for a similar reason''.

\mypara{Initial Cover}
The choice of initial cover $\abset_0$
has no influence on the correctness of 
the algorithm.
A natural choice is the most general 
cover $\abset_{\top} = \{\tau, \bot\}$.
In our experiments (\autoref{sec:eval})
we found that synthesis is more efficient 
if we pick the initial cover $\abset_Q(\many{b_i}\to b) = \close{\setof{\tau, \many{b_i}, b, \bot}}$%
\footnote{Here $\close{\abset}$ closes the cover under meet, 
as required by the definition of sublattice.},
which represents the query type $t = \many{b_i}\to b$ concretely.
Intuitively, the reason is that the 
distinctions between the types in $t$
are very likely to be important for 
solving the synthesis problem, 
so there is no need to make 
the algorithm re-discover
them from scratch.

\mypara{Soundness and Completeness}
\textproc{Synthesize} is a %sound and complete 
semi-algorithm for the synthesis problem in \corelang. 
% \TODO{is ``Complete'' a right word?}

\begin{theorem}[Soundness]
% Any program the algorithm returns is well-typed: for any $\Lambda, t, \abset_0$, 
If $\text{\textproc{Synthesize}}(\Lambda, t, \abset_0)$
returns $\nex$ then $\jtyping{\cdot}{\nex}{t}$.
\end{theorem}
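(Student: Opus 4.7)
The plan is to inspect the control flow of \textproc{Synthesize} and observe that there is essentially only one branch that returns a non-$\bot$ term, so soundness reduces to a one-line appeal to \autoref{thm:algo-sound}.

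First, I would argue by inspection of \autoref{alg:synthesis} (right) that whenever \textproc{Synthesize} returns a term $\nex$, it must do so via the branch at line~\ref{algo:tc}, since the only other \textbf{return} statement yields $\bot$ (and the \textbf{while} loop has no other exit). In particular, control reaches line~\ref{algo:tc} only after the preceding check $\nex = \bot$ has failed, and the branch is entered only when $\jtcheck{\cdot}{\nex}{t}$ holds. Hence if $\text{\textproc{Synthesize}}(\Lambda, t, \abset_0) = \nex$ for some term $\nex$, then $\jtcheck{\cdot}{\nex}{t}$.

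Next, I would invoke \autoref{thm:algo-sound}, which states that algorithmic checking is sound with respect to the declarative type system: $\jtcheck{\cdot}{\nex}{t}$ implies $\jtyping{\cdot}{\nex}{t}$. Combining this with the previous paragraph gives the desired conclusion.

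I do not anticipate any real obstacles here: the refinement loop, the behaviour of \textproc{\abssynth}, the \textproc{Refine} procedure, and the choice of $\abset_0$ are all irrelevant to \emph{soundness}, because the algorithm performs a concrete type check on each candidate before returning it. Abstract typing only affects which candidates are considered and how quickly they are found, not whether a returned candidate is well-typed. The one small subtlety worth flagging is that \textproc{\abssynth} is assumed to return a syntactically well-formed normal-form term (so that $\jtcheck{\cdot}{\nex}{t}$ is even a meaningful judgment); this is immediate from the construction $\nex \in \fromPath{\pi}$ together with the normal-form restriction of \corelang. Completeness, by contrast, would require a termination argument about the refinement loop and is not claimed by this theorem.
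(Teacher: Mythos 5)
Your proposal is correct and matches the paper's argument: the paper's proof sketch likewise observes that a non-$\bot$ result can only be returned after the concrete type check at line~\ref{algo:tc} succeeds, with the step from $\jtcheck{\cdot}{\nex}{t}$ to $\jtyping{\cdot}{\nex}{t}$ supplied by \autoref{thm:algo-sound}. You merely spell out explicitly what the paper calls ``trivial,'' which is fine.
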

\begin{proof}[Proof Sketch]
This follows trivially from the type check in line~\ref{algo:tc} of the algorithm.
\end{proof}

\begin{theorem}[Completeness]
% If a solution exists, the algorithm will find one:
%for any $\Lambda, t, \abset_0$, 
If $\exists \nex.\ \jtyping{\cdot}{\nex}{t}$
then $\text{\textproc{Synthesize}}(\Lambda, t, \abset_0)$
returns some $\nex' \neq \bot$.
\end{theorem}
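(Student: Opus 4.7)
The plan is to show that when a concrete solution $\nex^*$ exists, the algorithm (a) never returns $\bot$ prematurely, and (b) terminates after finitely many iterations. The main obstacle is termination, which I would establish via a pigeonhole argument over a finite pool of bounded-size candidate programs that shrinks monotonically across iterations.

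First, I would show the call to \textproc{\abssynth} at line~\ref{algo:cand} never returns $\bot$. By Theorem~\ref{thm:over}, the hypothesized concrete solution $\nex^*$ is abstractly well-typed in every cover $\abset$. By Theorem~\ref{thm:atn-complete} applied to $\nex^*$ and the path $\pi^*$ corresponding to it, $\pi^* \models \net(\Lambda, t, \abset)$ holds for every $\abset$ the algorithm considers, so \textproc{\shortestpath} always finds some valid path. The enumeration inside \textproc{\abssynth} then succeeds by Theorem~\ref{thm:atn-sound}. Consequently, the algorithm can only terminate via the success branch at line~\ref{algo:tc}, and it suffices to prove that it does terminate.

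For termination, let $k^* = |\pi^*|$, the length of the path associated with the fixed concrete solution $\nex^*$. Since \textproc{\shortestpath} returns a shortest valid path and $\pi^*$ is always valid, the path selected at iteration $i$ has length at most $k^*$. Therefore every candidate $\nex_i$ produced by \textproc{\abssynth} lies in the finite set $\mathcal{P}$ of \corelang{} terms over $\Lambda$ whose number of applications is bounded by $k^*$. Now I would invoke the specification of \textproc{Refine}: whenever $\nex_i$ is spurious, $\abset_{i+1}$ satisfies $\abset_{i+1} \refines \abset_i$ and $\jancheck{\cdot}{\nex_i}{t}{\abset_{i+1}}$. By the contrapositive of Theorem~\ref{thm:refine}, any further refinement $\abset_j$ with $j > i$ also satisfies $\jancheck{\cdot}{\nex_i}{t}{\abset_j}$, so $\nex_i$ cannot be returned by \textproc{\abssynth} in any subsequent iteration (since that procedure only returns terms that pass the abstract check).

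Combining these observations gives the result: each spurious iteration permanently removes a distinct element of the finite set $\mathcal{P}$ from the pool of programs that \textproc{\abssynth} can subsequently return. After at most $|\mathcal{P}|$ spurious iterations, therefore, \textproc{\abssynth} must return a program that passes the concrete type check at line~\ref{algo:tc}, and the algorithm returns $\nex' \neq \bot$. The key subtle point I would want to make precise in writing out the argument is the claim that \textproc{\abssynth} cannot ``reintroduce'' a previously ruled-out spurious program after refinement; this is exactly where the monotonicity direction of Theorem~\ref{thm:refine}, together with the abstract check inside \textproc{\abssynth}, does the essential work, and I anticipate this being the most delicate step to state cleanly since it relies on the refinement contract of \textproc{Refine} (detailed in \autoref{sec:algo:refine}) being interpreted in the strong form described in the text.
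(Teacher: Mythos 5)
Your proof is correct and follows essentially the same route as the paper's: bound all candidates by the size of a shortest concrete solution (which stays abstractly well-typed at every cover, so \textproc{\abssynth} never returns $\bot$ nor anything longer), and conclude by pigeonhole using the fact that \textproc{Refine} permanently eliminates each spurious candidate from a finite pool. Your explicit appeal to \autoref{thm:refine} to show that a rejected program remains rejected under all \emph{subsequent} refinements is a detail the paper's sketch leaves implicit in its ``cannot return the same solution twice'' step, and it is indeed the right way to make that step rigorous.
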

\begin{proof}[Proof Sketch]
Let $\nex_0$ be some shortest solution 
to $(\Lambda, t)$ and let $k$ be the 
number of all syntactically valid 
programs of the same or smaller size than $\nex_0$
(here, the size of the program is the number of component applications).
Line~\ref{algo:cand}
cannot return $\bot$ or a program 
$\nex$ that is larger than 
$\nex_0$, since $\nex_0$ is abstractly 
well-typed at any $\abset$ by 
\autoref{thm:over-typecheck},
and \textproc{\abssynth} always 
returns a shortest abstractly
well-typed program, when one 
exists by \autoref{thm:atn-complete}.
Line~\ref{algo:cand} also cannot 
return the same solution twice
by the property of \textproc{Refine}.
Hence the algorithm must find 
a solution in at most $k$ iterations.
\end{proof}

When there is no solution, our algorithm might not terminate.
This is unavoidable, since the synthesis problem is only semi-decidable,
as we discussed at the beginning of this section.
In practice, we impose an upper bound on the length of the solution,
which then guarantees termination.

\subsection{Refining the Abstract Cover}\label{sec:algo:refine}

\begin{figure}
\centering
\begin{minipage}{.48\textwidth}
  \small
  \begin{algorithmic}[1]
    \Require{$\abset, \nex, t$ s.t. $\jntyping{\cdot}{\nex}{t}$}
    \Ensure{$\abset' \refines \abset$ s.t. $\jancheck{\cdot}{\nex}{t}{\abset'}$}  
    \Function{Refine}{$\abset, \elam{\many{x_i}}{\ebody}, \many{b_i}\to b$}
      \Let{$\Lambda$}{$\Lambda \cup (r :: b\to b)$}
      \Let{$e^*$}{$\eapp{r}{\ebody}$}
      \For{$e_j \in \subterms{e^*}$}\label{algo:refine:p_start}
        \State{\jtinfer{\many{x_i:b_i}}{e_j}{U[e_j]}}\label{algo:refine:p_init}
      \EndFor
      \Let{$U$}{\Call{Generalize}{$U, e^*$}}\label{algo:refine:p_end}
      \State \Return{$\close{\abset \cup \rng{U}}$}\label{algo:refine:return}
    \EndFunction
  \end{algorithmic}
\end{minipage}
\begin{minipage}{.5\textwidth}
  \small
  \begin{algorithmic}[1]
    \Require{$U, e$ s.t. $\inv_1 \wedge \inv_2 \wedge \inv_3$}
    \Ensure{$U'$ s.t. $\inv_1 \wedge \inv_2 \wedge \inv_3$}  
    \Function{Generalize}{$U, e$}
        \If{$e = x$}
          \State \Return{$U$}
        \ElsIf{$e = \eapp{c}{\many{e_j}}$}\label{algo:gen:rec_start}
          \Let{$\many{B_j}$}{weaken $\many{U[e_j]}$ while $\sem{c}(\many{B_j}) \sub U[e]$}
          \Let{$U'$}{$U[\many{e_j \mapsto B_j}]$}
          \LineFor{$e_j$}{\Call{Generalize}{$U', e_j$}}\label{algo:gen:rec_end}
        \EndIf
    \EndFunction
    % \Require{$\many{B_j}, \phi$ s.t. $\phi(\many{B_j})$}
    % \Ensure{$\many{B'_j}$ s.t. $\many{B_j \sub B'_j}$ and $\phi(\many{B'_j})$}      
    % \Function{Weaken}{$\many{B_j}, \phi$}
      % \Let{$\many{B'_j}$}{$\many{B_j}$}
      % \While{$\exists k . \phi(B'_1, \ldots, \up{B'_k}, \ldots)$}
        % \Let{$B'_k$}{$\up{B'_k}$}
      % \EndWhile
    % \EndFunction
  \end{algorithmic}
\end{minipage}  
\caption{Refinement algorithm.}\label{alg:refinement}
\end{figure}

This section details the refinement step of the \tygar algorithm.
The pseudocode is given in \autoref{alg:refinement}.
The top-level function \textproc{Refine}($\abset, \nex, t$)
takes as inputs an abstract cover $\abset$, a term $\nex$, and a goal type $t$,
such that $\nex$ is ill-typed concretely ($\jntyping{\cdot}{\nex}{t}$),
but well-typed abstractly ($\jacheck{\cdot}{\nex}{t}{\abset}$).
It produces a refinement of the cover $\abset' \refines \abset$, 
such that $\nex$ is ill-typed abstractly in that new 
cover (\jancheck{\cdot}{\nex}{t}{\abset'}).

\mypara{Proof of untypeability}
At a high-level, \textproc{Refine} works by constructing a \emph{proof of untypeability} of $\nex$,
\ie a mapping $U \colon \mathbf{e}\to \basebots$ from subterms of $\nex$ to types,
such that if $\rng{U}\subseteq \abset'$ then \jancheck{\cdot}{\nex}{t}{\abset'}
(in other words, the types in $U$ contain enough information to reject $\nex$).
Once $U$ is constructed, line~\ref{algo:refine:return} adds its range to $\abset$,
and then closes the resulting set under meet.

Let us now explain how $U$ is constructed.
Let $\nex \doteq \elam{\many{x_i}}{\ebody}$, $t \doteq \many{b_i}\to b$, and $\Gamma \doteq \many{x_i\colon b_i}$.
There are two reasons why $\nex$ might not type-check against $t$:
either $\ebody$ on its own is ill-typed % (\jtinfer{\Gamma}{e}{\bot}, where $\Gamma = \many{x_i:b_i}$),
or it has a non-bottom type that nevertheless does not subsume $b$. % (\jtinfer{\Gamma}{e}{B} but $b \not\sub B$).
To unify these two cases, \textproc{Refine} constructs a new application term $e^* = \eapp{r}{\ebody}$,
where $r$ is a dedicated component of type $b \to b$;
such $e^*$ is guaranteed to be ill-typed on its own: \jtinfer{\Gamma}{e^*}{\bot}.
Lines~\ref{algo:refine:p_start}--\ref{algo:refine:p_init} initialize $U$ 
for each subterm of $e^*$ with the result of concrete type inference.
% (this corresponds to red labels in \autoref{fig:abs-ref}).
%
At this point $U$ already constitutes a valid proof of untypeability,
but it contains too much information;
in line~\ref{algo:refine:p_end} the call to \textproc{Generalize}
removes as much information from $U$ as possible 
while maintaining enough to prove that $e^*$ is ill-typed.
% (this corresponds to blue labels in \autoref{fig:abs-ref}).
%
More precisely, \textproc{Generalize} maintains three crucial invariants
that together guarantee that $U$ is a proof of untypeability:
\begin{description}
\item[$\inv_1$:] \emph{($U$ subsumes concrete typing)} For any $e \in \subterms{e^*}$, 
if $\jtinfer{\Gamma}{e}{B}$, then $B \sub U[e]$;
\item[$\inv_2$:] \emph{($U$ abstracts type transformers)} For any application subterm $e = \eapp{c}{\many{e_j}}$,
$\sem{c}(\many{U[e_j]}) \sub U[e]$;
\item[$\inv_3$:] \emph{($U$ proves untypeability)} $U[e^*] = \bot$.
\end{description}

\begin{lemma}\label{lemma:invp}
If $\inv_1 \wedge \inv_2 \wedge \inv_3$ then $U$ is a proof of untypeability:
if $\rng{U}\subseteq \abset'$ then \jancheck{\cdot}{\nex}{t}{\abset'}.
\end{lemma}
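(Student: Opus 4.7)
The goal is to show that under the three invariants, any $\abset' \supseteq \rng{U}$ refutes $\nex$ abstractly. The plan is to argue by contradiction: assume $\jacheck{\cdot}{\nex}{t}{\abset'}$ and derive a contradiction with $\inv_3$, using $\inv_1$, $\inv_2$, and the monotonicity of type transformers (\autoref{lemma:trans-monotone}). The hinge is an auxiliary claim saying that $U$ bounds abstract inference in $\abset'$ from above.

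\textbf{Auxiliary claim and induction.} First I would establish, by structural induction on $e$, that for every $e \in \subterms{e^*}$, if $\jainfer{\Gamma}{e}{B}{\abset'}$ then $B \sub U[e]$, where $\Gamma = \many{x_i\colon b_i}$. For the base case $e = x$, the rule \textsc{I-Var} gives $B = \alpha_{\abset'}(\Gamma(x))$; $\inv_1$ supplies $\Gamma(x) \sub U[x]$, and since $U[x] \in \rng{U} \subseteq \abset'$, the minimality of $\alpha_{\abset'}$ forces $B \sub U[x]$. For the inductive case $e = \eapp{c}{\many{e_j}}$ (which also covers nullary components by letting $\many{e_j}$ be empty), rule \textsc{I-App} gives $B = \alpha_{\abset'}(\sem{c}(\many{B_j}))$ with premises $\jainfer{\Gamma}{e_j}{B_j}{\abset'}$; the induction hypothesis yields $B_j \sub U[e_j]$, \autoref{lemma:trans-monotone} yields $\sem{c}(\many{B_j}) \sub \sem{c}(\many{U[e_j]})$, and $\inv_2$ yields $\sem{c}(\many{U[e_j]}) \sub U[e]$; since $U[e] \in \abset'$, the minimality of $\alpha_{\abset'}$ finishes with $B \sub U[e]$.

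\textbf{Closing via contradiction.} Suppose $\jacheck{\cdot}{\nex}{t}{\abset'}$. Stripping the outer lambdas via \textsc{C-Fun} leaves $\jacheck{\Gamma}{\ebody}{b}{\abset'}$, and then \textsc{C-Base} exhibits some $B$ with $\jainfer{\Gamma}{\ebody}{B}{\abset'}$ and $b \sub B$. The auxiliary claim gives $B \sub U[\ebody]$, so by transitivity $b \sub U[\ebody]$. Because $r$ is the dedicated component of ground type $b \to b$, this subsumption makes $b$ and $U[\ebody]$ unifiable with $\sem{r}(U[\ebody]) = b$, which is non-bottom. Now $\inv_2$ applied to $e^* = \eapp{r}{\ebody}$ yields $b = \sem{r}(U[\ebody]) \sub U[e^*]$, and $\inv_3$ gives $U[e^*] = \bot$, so $b \sub \bot$, contradicting the fact that $b$ is a ground, non-bottom base type.

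\textbf{Main obstacle.} The delicate point is the interface between the concrete facts that the invariants express in $\basebots$ and the abstract inference that takes values in $\abset'$; the entire argument rests on the property of $\alpha_{\abset'}$ as the smallest element of $\abset'$ above its input, which requires that the targets $U[e]$ indeed lie in $\abset'$, i.e.\ that $\rng{U} \subseteq \abset'$. A secondary point to handle carefully is the freshness convention used by $\sem{c}$: each invocation picks its own fresh type variables, but \autoref{lemma:trans-monotone} already abstracts this away, so I would simply cite it rather than reason about renamings directly.
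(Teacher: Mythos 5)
Your proposal is correct and mirrors the paper's argument: the auxiliary claim is exactly the paper's ``Precision of Inference'' lemma (\autoref{ap:invp-infer}), proved by the same induction using $\inv_1$ for variables and $\inv_2$ plus \autoref{lemma:trans-monotone} and minimality of $\alpha_{\abset'}$ for applications, and the contradiction at the root via $\inv_3$ matches the paper's \autoref{ap:invp} (you extract the contradiction by evaluating $\sem{r}(U[\ebody]) = b \not\sub \bot$ directly, whereas the paper applies the precision lemma at $e^*$ and argues $\mgu{B,b}$ must be $\sigma_{\bot}$ --- an immaterial rearrangement of the same final step).
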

\begin{proof}[Proof Sketch]
We can show by induction on the derivation that for any $\abset' \supseteq \rng{U}$
and node $e$,
$\jainfer{\Gamma}{e}{B}{\abset'} \sub U[e]$
(base case follows from $\inv_1$, and inductive case follows from $\inv_2$).
Hence, $\jainfer{\Gamma}{e^*}{B}{\abset'} \sub U[e^*] = \bot$ (by $\inv_3$),
so $\jainfer{\Gamma}{\ebody}{B}{\abset'} \not\sub b$, and \jancheck{\cdot}{\nex}{t}{\abset'}.
\end{proof}

\mypara{Correctness of \textproc{Generalize}}
Now that we know that invariants $\inv_1$--$\inv_3$ are sufficient for correctness,
let us turn to the inner workings of \textproc{Generalize}.
This function starts with the initial proof $U$ (concrete typing),
and recursively traverses the term $e^*$ top-down.
At each application node $e = \eapp{c}{\many{e_j}}$
it \emph{weakens} the argument labels $\many{U[e_j]}$
(lines~\ref{algo:gen:rec_start}--\ref{algo:gen:rec_end}).
The weakening step performs \emph{lattice search} 
to find more general values for $\many{U[e_j]}$ allowed by $\inv_2$.
More concretely, each new value $B_j$ starts out as the initial value of $\many{U[e_j]}$;
at each step, weakening picks one $B_j \neq \bot$ and moves it upward in the lattice 
by replacing a ground subterm of $B_j$ with a type variable;
the step is accepted as long as $\sem{c}(\many{B_j}) \sub U[e]$.
The search terminates when there is no more $B_j$ that can be weakened.
Note that in general there is no unique most general value for $\many{B_j}$,
we simply pick the first value we find that cannot be weakened any further.
The correctness of the algorithm does not depend on the choice of $\many{B_j}$,
and only rests on two properties:
\begin{inparaenum}[(1)]
\item $\many{U[e_j] \sub B_j}$ and 
\item $\sem{c}(\many{B_j}) \sub U[e]$.
\end{inparaenum}

We can show that \textproc{Generalize} maintains the invariants $\inv_1$--$\inv_3$.
$\inv_1$ is maintained by property (1) of weakening
(we start from concrete types and only move up in the lattice).
$\inv_2$ is maintained between $e$ and its children $\many{e_j}$ by property (2) of weakening,
and between each $e_j$ and its children because the label of $e_j$ only goes up.
Finally,
$\inv_3$ is trivially maintained since we never update $U[e^*]$.

% \begin{wrapfigure}[10]{r}{0.28\textwidth} 
% \vspace{-.3cm}
% \centering
% \input{refinement}
% \end{wrapfigure}

\begin{figure}
\begin{minipage}{.3\textwidth}
\centering
\begin{tikzpicture}[node distance=5mm and 5mm,>=stealth',auto]
    \node [draw=gray]  (r)  [label=right:\labels{$\bot$}{$\bot$}]  {$r$};
    \node [draw=black] (f)  [below= of r,label=right:\labels{$\bot$}{$\bot$}]  {\T{f}};
    \node [draw=black] (x1) [below= of f,xshift=-5mm,label=left:\labels{\T{A}}{\T{A}}]  {\T{x1}};
    \node [draw=black] (l)  [below= of f,xshift=5mm,label=right:\labels{\T{M (M _t)}}{\T{M (M A)}}]  {\T{l}};
    \node [draw=black] (x2) [below= of l,label=right:\labels{\T{L (M _t)}}{\T{L (M A)}}]  {\T{x2}};
    
    \path (r) edge (f);
    \path (f) edge (x1);
    \path (f) edge (l);
    \path (l) edge (x2);
\end{tikzpicture}
\caption{\textproc{Refine} in the second iteration of the running example.}\label{fig:refine}
\end{minipage}
\hfill\vline\hfill
\begin{minipage}{.65\textwidth}
\centering
\resizebox{\textwidth}{!}{
\begin{tikzpicture}[node distance=1cm,>=stealth',bend angle=15,auto]

  \begin{scope}
    \node [final] (top)   [label=right:\T{_t}]         {};

    \node [transition] (f) [above of=top] {\T{f}}
      edge [post,sol]        (top);

    \node [transition] (g) [left of=top] {\T{g}}
      edge [pre,bend left]          (top)
      edge [post,bend right]        (top);
  \end{scope}
  
  \begin{scope}[yshift=-1.3cm,node distance=.8cm]
  
    \node [draw=gray]  (r)  [label=right:\labels{$\bot$}{$\bot$}]  {$r$};        
    \node [draw=black] (f)  [below of=r,label=left:\labels{\T{B _a _b}}{\T{B _t _t}}]  {\T{f}};
    \path (r) edge (f);
    
  \end{scope}
    
  \begin{scope}[xshift=2cm]
    \node [final] (top)   [label=below:\T{_t}]         {};
    \node [blank]          (bl2) [right of=top]                            {};
    \node [place] (B)  [right of=bl2, label=below:\T{B _a _b}]      {};    

    \node [transition] (f) [above of=B] {\T{f}}
      edge [post,sol]        (B);

    \node [transition] (g1) [above of=top] {\T{g}}
      edge [pre,bend left]          (top)
      edge [post,bend right]        (top);
      
    \node [transition] (g2) [right of=top] {\T{g}}
      edge [pre,sol]          (B)
      edge [post,sol]         (top);      
  \end{scope}
  
  \begin{scope}[xshift=3cm,yshift=-1.3cm,node distance=.8cm]
  
    \node [draw=gray]  (r)  [label=right:\labels{$\bot$}{$\bot$}]  {$r$};        
    \node [draw=black] (g)  [below of=r,label=left:\labels{$\bot$}{$\bot$}]  {\T{g}};
    \node [draw=black] (f)  [below of=g,label=right:\labels{\T{B _t _t}}{\T{B _t _t}}]  {\T{f}};
    \path (r) edge (g);
    \path (g) edge (f);
    
  \end{scope}

  \begin{scope}[xshift=6cm]
    \node [final] (top)   [label=below:\T{_t}]         {};
    \node [blank]          (bl2) [right of=top]                            {};
    \node [place] (Btt)  [above of=bl2, label=above:\T{B _t _t}]      {};    
    \node [place] (B)  [right of=bl2, label=below:\T{B _a _b}]      {};    

    \node [transition] (f) [right of=Btt] {\T{f}}
      edge [post]      (Btt);

    \node [transition] (g1) [above of=top] {\T{g}}
      edge [pre,bend left]          (top)
      edge [post,bend right]        (top);
      
    \node [transition] (g2) [right of=top] {\T{g}}
      edge [pre]          (B)
      edge [post]         (top);      
  \end{scope}
\end{tikzpicture}
}
\vspace{-.4cm}
\caption{\textproc{Synthesize} on an unsatisfiable problem.}\label{fig:unsat}
\end{minipage}
\end{figure}

\mypara{Example 1}
Let us walk through the refinement step in iteration 2 of our running example from \autoref{sec:examples:abstract}.
As a reminder,
$\ty{\T{f}} = \forall \alpha . \alpha \to \T{M}\ \alpha \to \alpha$ and
$\ty{\T{l}} = \forall \beta . \T{L}\ \beta\to \T{M}\ \beta$.
Consider a call to $\textproc{Refine}(\abset, \nex, t)$,
where $\abset = \{\tau, \T{A}, \T{L _t}, \bot\}$,
$\nex = \lambda x_1\ x_2 . \T{f}\ x_1\ (\T{l}\ x_2)$
and $t = \T{A -> L (M A) -> A}$.
Let us denote $\Gamma = x_1:\T{A},x_2:\T{L (M A)}$.
It is easy to see that $\nex$ is ill-typed concretely but well-typed abstractly,
since, as explained above, \jainfer{\Gamma}{\T{l}\ x_2}{\tau}{\abset},
and hence \jainfer{\Gamma}{\T{f}\ x_1\ (\T{l}\ x_2)}{\T{A}}{\abset}.
\textproc{Refine} first constructs $e^* = \eapp{r}{\ebody}$;
the AST for this term is shown on \autoref{fig:refine} (left).
It then initializes the mapping $U$ with concrete inferred types,
depicted as red labels;
as expected $U[e^*] = \bot$.
The blue labels show $U'$ obtained by calling \textproc{Generalize}
through the following series of recursive calls:
\begin{itemize}
\item In the initial call to \textproc{Generalize}, the term $e$ is $\eapp{r}{\ebody}$;
although it is an application, we do not weaken the label for $\ebody$
since its concrete type is $\bot$, which cannot be weakened.
\item We move on to $\ebody = \eapp{\eapp{\T{f}}{x_1}}{l}$ with $U[x_1] = \T{A}$ and $U[l] = \T{M (M A)}$.
The former type cannot be weakened: an attempt to replace \T{A} with $\tau$
causes $\sem{\T{f}}$ to produce $\T{M A} \not\sub \bot$.
The latter type can be weakened by replacing \T{A} with $\tau$
(since $\sem{\T{f}}(\T{A}, \T{M (M _t)}) = \bot$), but no further.
\item The first child of \T{f}, $x_1$, is a variable so $U$ remains unchanged.
\item For the second child of \T{f}, $l = \eapp{\T{l}}{x_2}$,
\T{l}'s signature allows us to weaken $U[x_2]$ to \T{L (M _t)} but no further,
since $\sem{\T{l}}(\T{L (M _t)}) = \T{M (M _t)}$
but $\sem{\T{l}}(\T{L _t}) = \T{M _t} \not\sub \T{M (M _t)}$.
\item Since $x_2$ is a variable, \textproc{Generalize} terminates.  
\end{itemize}

\mypara{Example 2}
We conclude this section with an end-to-end application of \tygar
to a very small but illustrative example.
Consider a library $\Lambda$ with three type constructors, \T{Z}, \T{U}, and \T{B}
(with arities 0, 1, and 2, respectively),
and two components, \T{f} and \T{g}, such that:
$\Lambda(\T{f}) = \forall\alpha . \T{B}\ \alpha\ \alpha$
and $\Lambda(\T{g}) = \forall\beta . \T{B}\ (\T{U}\ \beta)\ \beta \to \T{Z}$.
Consider the synthesis problem $(\Lambda, \T{Z})$, which has no solutions:
the only way to obtain a \T{Z} is from \T{g},
which requires a \T{B} with \emph{distinct} parameters,
but we can only construct a \T{B} with \emph{equal} parameters (using \T{f}). 
Assume that the initial abstract cover is $\abset_0 = \setof{\tau, \bot}$, 
as shown in the upper left of \autoref{fig:unsat}.
$\textproc{SynAbstract}(\Lambda, \T{Z}, \abset_0)$ returns a program \T{f},
which is spurious, hence we invoke $\textproc{Refine}(\abset_0, \T{f}, \T{Z})$.
The results of concrete type inference are shown as red labels in \autoref{fig:unsat};
in particular, note that because \T{f} is a nullary component, 
$\sem{\T{f}}$ is simply a fresh instance of its type, here \T{B _t _t},
which can be generalized to \T{B _a _b}:
the root cause of the type error is that $r$ does not accept a \T{B}.
In the second iteration, $\abset_0 = \setof{\tau, \T{B _a _b}, \bot}$
and $\textproc{SynAbstract}(\Lambda, \T{Z}, \abset_1)$ returns \T{g f},
which is also spurious.
In this call to \textproc{Refine}, however, the concrete type of \T{f} can no longer be generalized:
the root cause of the type error is that $g$ accepts a \T{B} with distinct parameters.
Adding \T{B _t _t} to the cover, results in the ATN on the right,
which does not have a valid path
($\textproc{SynAbstract}$ returns $\bot$).

There are three interesting points to note about this example.
\begin{inparaenum}[(1)]
\item In general, even concrete type inference may produce non-ground types,
for example: \jtinfer{\cdot}{\T{f}}{\T{B}\ \tau\ \tau}.
\item \textproc{Synthesize} can \emph{sometimes} detect that there is no solution, 
even when the space of all possible ground base types is infinite.
% \footnote{Similarly to how backward chaining in first-order logic can sometimes detect that a goal is unprovable.}.
\item To prove untypeability of \T{g f},
our abstract domain must be able to express non-linear type-level terms
(\ie types with repeated variables, like $\T{B}\ \tau\ \tau$);
we could not, for example, replace type variables with a single construct \T{?}, 
as in gradual typing~\cite{Siek06gradualtyping}.
\end{inparaenum}

\section{Implementation}\label{sec:impl}

We have implemented the \tygar synthesis algorithm in Haskell,
in a tool called \tool.
The tool relies on the Z3 SMT solver~\cite{deMoura-Bjorner:TACAS08}
to find paths in the ATN.
This section focuses on interesting implementation details, such as
desugaring Haskell libraries into first-order components accepted by \tygar,
an efficient and incremental algorithm for ATN construction,
and the SMT encoding of ATN reachability.
% and post-processing to eliminate irrelevant solutions.

\subsection{Desugaring Haskell Types}\label{sec:impl:desugar}

The Haskell type system is significantly more expressive than 
that of our core language \corelang, and many of its advanced 
features are not supported by \tool.
% \TODO{say something scary about GADTs, type families, undecidable instances, dependent Haskell, etc?}
%
However, two type system features are ubiquitous in Haskell:
higher-order functions and type classes.
As we illustrated in \autoref{sec:examples:hof} and \autoref{sec:examples:tc},
\tool handles both features by desugaring them into \corelang.
Next, we give more detail on how \tool 
translates a Haskell synthesis problem $(\hask{\Lambda}, \hask{t})$
into a \corelang synthesis problem $(\Lambda, t)$:
% which might include components with higher-order types and type class bounds,
% as well as type class and instance declarations.
\begin{enumerate}
\item $\Lambda$ includes a fresh binary type constructor \T{F _a _b}
(used to represent function types).

\item Every declaration of type class \T{C _t} 
with methods $m_i :: \forall \tau . T_i$ in $\hask{\Lambda}$
gives rise to a type constructor \T{CD _t} (the dictionary type) 
and components $m_i :: \forall \tau . \T{CD _t} \to T_i$ in $\Lambda$.
For example, a type class declaration \T{class Eq _a where (==) :: a -> a -> Bool}
creates a fresh type constructor \T{EqD _a}
and a component \T{(==) :: EqD _a -> _a -> _a -> Bool}.
% ---with no constructors---%
% and one new function for each function as part of the typeclass.
%
% In this case, the only new function is \T{show :: ShowD a -> a -> String}.

\item Every instance declaration \T{C B} in $\hask{\Lambda}$ 
produces a component that returns a dictionary \T{CD B}.
So \T{instance Eq Int} creates a component \T{eqInt :: EqD Int},
while a subclass instance like \T{instance Eq a => Eq [a]}
creates a component 
\T{eqList :: EqD a -> EqD [a]}.
Note that the exact implementation of the type class 
methods inside the instance is irrelevant; all we care 
about is that the instance inhabits the type class dictionary.

\item For every component $c$ in $\hask{\Lambda}$, 
we add a component $c$ to $\Lambda$ and define $\Lambda(c) = \fo\left(\hask{\Lambda}(c)\right)$,
where the translation function $\fo$, which eliminates type class constraints
and higher-order types, is defined as follows:
\begin{align*}
\fo\left(\forall \many{\tau} . (\T{C}_1\ \tau_1,\ldots, \T{C}_n\ \tau_n)\Rightarrow T\right) &= \forall \many{\tau} . \T{CD}_1\ \tau_1 \to \ldots \to \T{CD}_n\ \tau_n \to \fo(T) \span \span\\
\fo(T_1 \to T_2) &= \toBase{T_1} \to \fo(T_2)             & \fo(B) &= B\\
\toBase{T_1 \to T_2} &= \T{F}\ \toBase{T_1}\ \toBase{T_2} & \toBase{B} &= B
\end{align*}
For example, Haskell components on the left are translated into \corelang components on the right:
\begin{tabular}{lll}
\T{member :: Eq _a => _a -> [_a] -> Bool}  & \quad\quad & \T{member :: EqD _a -> _a -> [_a] -> Bool}\\
\T{any    :: (_a -> Bool) -> [_a] -> Bool} & \quad\quad & \T{any    :: F _a Bool -> [_a] -> Bool}\\
\end{tabular}

\item For every non-nullary component and type class method $c$ in $\hask{\Lambda}$,
we add a nullary component $c'$ to $\Lambda$ and define $\Lambda(c') = \toBase{\Lambda(c)}$.
For example: \T{any' :: F (F _a Bool) (F [_a] Bool)}.

\item Finally, the \corelang query type $t$ is defined as $\fo(\hask{t})$.
% \TODO{Technically, the higher-order params of $\hask{t}$ are turned into components 
% and translated. But then we need to change Figure 4.}
\end{enumerate}

\mypara{Limitations}
Firstly, in modern Haskell, type classes often constrain \emph{higher-kinded} type variables;
for example, the \T{Monad} type class in the signature \T{return :: Monad m => a -> m a}
is a constraint on \emph{type constructors} rather than \emph{types}.
Support for higher-kinded type variables is beyond the scope of this paper.
Secondly, in theory our encoding of higher-order functions 
(\autoref{sec:examples:hof}) is complete, as any program 
can be re-written in \emph{point-free style}, 
\ie without lambda terms, using an appropriate 
set of components \cite{Barendregt} including 
an \emph{apply} component \T{($) :: F _a _b -> _a -> _b} 
that enables synthesizing terms containing 
partially applied functions. 
However, in practice we found that adding a 
nullary version for every component  
significantly increases the size of 
the search space and is infeasible 
for component libraries of nontrivial size.
Hence, in our evaluation we 
% restrict 
% higher-order arguments to query parameters and 
only generate nullary variants of 
a selected subset of popular components.

% \mypara{Type classes}
% %
% In their original paper, ~\cite{WadlerB89}, showed a correspondence
% between their typeclass bounds and dictionaries containing implementations.
% %
% For our purposes, we do not have to produce an executable implementation
% dictionary for a typeclass, but instead demonstrate that it is inhabited
% for synthesis to proceed.

% The conversion from Haskell's declarations to those of \corelang
% is done in three parts to support synthesis of terms using typeclasses.
% %
% Given a set of modules containing typeclass,
% function signatures using typeclasses, and instance declarations
% we convert from Haskell to \corelang as follows:

% Importantly, in modern Haskell, typeclasses do not need to be single-kinded types.
% %
% The \T{monad} typeclass requires support for a higher-kinded type variable
% (recall the type of \T{return :: Monad m => a -> m a}).
% %
% There is more theory at play than typeclass dictionary passing and \tool
% does not support them.
% %
% Haskell's type language is a superset of \corelang, so \tool cannot
% synthesize terms on all of Haskell's exquisitely expressive type system.

\subsection{ATN Construction}\label{sec:impl:construct}

\mypara{Incremental updates}
\autoref{sec:algo:atn} shows how to construct an ATN
given an abstract synthesis problem $(\Lambda,t, \abset)$.
% given a component library $\Lambda$, a query $t = b_1 \to \ldots \to b_n \to b$ 
% and an abstract cover $\abset$.
%
However, computing the set of ATN transitions and edges from scratch in each refinement iteration is expensive.
We observe that each iteration only makes small changes to the abstract cover,
which translate to small changes in the ATN. 

Let $\abset$ be the old abstract cover 
and $\abset' = \abset \cup \{\anew\}$ be the new abstract cover
(if a refinement step adds multiple types to $\abset$, we can consider them one by one).
Let $\parents$ be the direct successors of $\anew$ in the $\sub$ partial order;
for example, in the cover $\{\tau, \T{P _a _b}, \T{P A _b}, \T{P _a B}, \T{P A B}, \bot\}$,
the parents of $\T{P A B}$ are $\{\T{P A _b}, \T{P _a B}\}$.
Intuitively, adding $\anew$ to the cover can \emph{add} new transitions
and \emph{re-route} some existing transitions.
A transition is re-routed if a component $c$ returns a more precise type under $\abset'$ than it did under $\abset$,
given the same types as arguments.
Our insight is that the only candidates for re-routing 
are those transitions that return one of the types in $\parents$.
Similarly, all new transitions can be derived from those that take one of the types in $\parents$ as an argument.
More precisely, starting from the old ATN, we update its transitions $T$ and edges $E$
% ATN $((P, T, E), \mathbf{M}, \mathbf{M'})$ 
as follows:
\begin{enumerate}
    \item Consider a transition $t \in T$ that represents the abstract instance
    $\alpha_{\abset}\left(\sem{c}(\many{A_i})\right) = A$
    such that $A \in \parents$;
    if $\alpha_{\abset'}\left(\sem{c}(\many{A_i})\right) = \anew$,
    set $E(t, A) = 0$ and $E(t, \anew) = 1$.
    
    \item Consider a transition $t \in T$ that represents the abstract instance
    $\alpha_{\abset}\left(\sem{c}(\many{A_i})\right) = A$
    such that at least one $A_i \in \parents$;
    consider $\many{A'_i}$ obtained from $\many{A_i}$ by substituting at least one $A_i \in \parents$ with $\anew$;
    if $\alpha_{\abset'}\left(\sem{c}(\many{A'_i})\right) = A' \neq \bot$,
    add a new transition $t'$ to $T$, 
    set $E(t',A') = 1$ and add 1 to $E(A'_i, t')$ for each $A'_i$.
    % \item update the initial marking $M \in \mathbf{M}$ 
      % assigns zero tokens everywhere except
      % in $\alpha_{\abset'}(b_i)$ for any $i \in (1, n)$;
    % \item update the final marking $\mathbf{M'}$ 
          % by adding assignment $M'$ where exactly one token is assigned to place $A_{new}$ if $b \sub A_{new}$.
\end{enumerate}

\mypara{Transition coalescing}
The ATN construction algorithm in \autoref{sec:algo:atn} adds a separate transition 
for each abstract instance of each component in the library.
Observe, however, that different components may share the same abstract instance: for example in \autoref{fig:abs-ref} (1),
both $c$ and $l$ have the type $\T{_t -> _t}$.
Our implementation \emph{coalesces} equivalent transitions:
an optimization known in the literature as \emph{observational equivalence reduction}~\cite{WangDS18,AlurRU17}.
More precisely, we do not add a new transition 
if one already exists in the net with the same incoming and outgoing edges. 
Instead, we keep track of a mapping from each transition to a set of components. 
Once a valid path $[t_1, \ldots, t_n]$ is found,
where each transition $t_i$ represents a set of components, 
we select an arbitrary component from each set to construct the candidate program. 
In each refinement iteration, 
the transition mapping changes as follows:
\begin{enumerate}
    \item new component instances are coalesced into new groups and added to the map,
    each new group is added as a new ATN transition;
    \item if a component instance is re-routed,
    it is removed from the corresponding group;
    % \item an instance of component is removed from the map if
    % its outgoing edges goes into a new added place
    \item transitions with empty groups are removed from the ATN.
    % \item filter out all the transitions with an empty component set in the map,
    % disable them in the net
\end{enumerate}

\subsection{SMT Encoding of ATN Reachability}\label{sec:impl:encoding}

Our encoding differs slightly from that in previous work on \sypet.
Most notably, we use an SMT (as opposed to SAT) encoding,
in particular, representing transition firings as integer variables. % (instead of Boolean variables).
This makes our encoding more compact,
which is important in our setting, since, unlike \sypet,
we cannot pre-compute the constraints for a component library
and use them for all queries.
%
% Second, we cannot afford to recompute all tens of thousands of constraints from scratch in between refinements;
% for this reason, we designed the encoding in such a way
% that constraints can be updated \emph{incrementally},
% as transitions as added and re-routed.
% for this reason we have devised an \emph{incremental encoding},
% where refinement requires changing as few constraints as possible.

\newcommand{\tokens}[2]{\ensuremath{\T{tok}^{#1}_{#2}}}
\newcommand{\fire}[1]{\ensuremath{\T{fire}_{#1}}}

\mypara{ATN Encoding} 
Given a ATN $\net = (P, T, E, I, F)$, 
we show how to build an SMT formula $\phi$ that encodes all valid paths of a given length $\ell$;
the overall search will then proceed by iteratively increasing the length $\ell$.
We encode the number of tokens in each place $p \in P$ 
at each time step $k \in [0,\ell]$ as an integer variable \tokens{p}{k}.
% where $L$ is the set of all time steps between $0$ and $l$.
% We assign each transition in $T$ a unique integer id $t \in [1,|T|]$.
% and let $tr_{max}$ denote the maximum index for all transitions. 
%
We encode the transition firing at each time step $k \in [0,\ell)$ 
as an integer variable \fire{k} so that $\fire{k} = t$ indicates that 
the transition $t$ is fired at time step $k$.
For any $x \in \{P \cup T\}$, 
let the \emph{pre-image}  of $x$ be $\pre(x) = \{y \in P \cup T \mid E(y, x) > 0\}$
and the \emph{post-image} of $x$ be $\post(x) = \{y \in P \cup T \mid E(x, y) > 0\}$.

The formula $\phi$ is a conjunction of the following constraints:
\begin{enumerate}
\item At each time step, a valid transition is fired:
$
\bigwedge_{k=0}^{\ell-1} 1 \leq \fire{k} \leq |T|
$

\item If a transition $t$ is fired at time step $k$ 
then all places $p \in \pre(t)$ have sufficiently many tokens:
$
\bigwedge_{k=0}^{\ell-1} \bigwedge_{t=1}^{|T|} \fire{k} = t \implies \bigwedge_{p \in \pre(t)} \tokens{p}{k} \geq E(p, t)
$

\item If a transition $t$ is fired at time step $k$ 
then all places $p \in \pre(t) \cup post(t)$ will have their markings updated at time step $k+1$:
$
\bigwedge_{k=0}^{\ell-1} \bigwedge_{t=1}^{|T|} \fire{k} = t \implies \bigwedge_{p \in \pre(t) \cup \post(t)} \tokens{p}{k + 1} = \tokens{p}{k} - E(p, t) + E(t, p)
$

\item If none of the outgoing or incoming transitions of a place $p$ are fired
at time step $k$, then the marking in $p$ does not change:
$
\bigwedge_{k=0}^{\ell-1}\bigwedge_{p \in P}(\bigwedge_{t\in \pre(p)\cup \post(p)} \fire{k} \neq t) \implies \tokens{p}{k + 1} = \tokens{p}{k}
$

\item The initial marking is $I$: $\bigwedge_{p \in P} \tokens{p}{0} = I(p)$.
% \item The initial marking is valid:
% $\bigwedge_{p \in \{P \mid I(p) > 0\}} p_0 \geq I(p)$ and 
% $\bigwedge_{p \in \{P \mid I(p) = 0\}} p_0 = 0$.

\item\label{enc:final} The final marking is valid:
$\bigvee_{f \in F} \left( \tokens{f}{\ell} = 1 \wedge \bigwedge_{p \in P \setminus \{f\}} \tokens{p}{\ell} = 0\right)$.

% \item for each final marking $M' \in \mathbf{M'}$: $\bigvee \wedge_{p \in P} \ p_{\ell} = M'(p)$
% \item The validity of the final marking can be encoded as a disjunction
% $\bigvee_{p \in F} \ p_{\ell} = 1$.
% %
% However, we found that the quality of solutions improves
% if instead we iterate through $p \in F$ in the order from \emph{most to least precise};
% in each iteration we enforce $p_{\ell} = 1$ and 
% move to the next place if no solution exists.
% %
% This is because a path that ends in a more precise place loses less information, 
% and hence is more likely to correspond to a concretely well-typed program.
\end{enumerate}

\mypara{Optimizations}
Although the validity of the final marking can be encoded as in~(\ref{enc:final}) above,
we found that quality of solutions improves
if instead we iterate through $f \in F$ in the order from \emph{most to least precise};
in each iteration we enforce $\tokens{f}{\ell} = 1$ (and $\tokens{p}{\ell} = 0$ for $p \neq f$),
and move to the next place if no solution exists.
Intuitively, this works because paths that end in a more precise place lose less information, 
and hence are more likely to correspond to concretely well-typed programs.

As we mentioned in \autoref{sec:synth},
our implementation adds copy transitions but not delete transitions to the ATN,
thereby enforcing relevant typing.
We have also tried an alternative encoding of relevant typing,
which forgoes copy transitions, 
and instead allows the initial marking to contain extra tokens in initial places:
$\bigwedge_{p \in \{P \mid I(p) > 0\}} \tokens{p}{0} \geq I(p)$ and 
$\bigwedge_{p \in \{P \mid I(p) = 0\}} \tokens{p}{0} = 0$.
Although this alternative encoding often produces solutions faster
(due to shorter paths),
we found that the quality of solutions suffers.
We conjecture that the original encoding works well,
because it \emph{biases} the search towards linear consumption of resources,
which is common for desirable programs.

\section{Evaluation}\label{sec:eval}

Next, we describe an empirical evaluation of two research questions of \tool:

\begin{itemize}
  \item \textbf{Efficiency:} Is \tygar able to find well-typed programs quickly?
        % To what degree does abstract refinement aid search in an
        % abstractly typed environment?
  \item \textbf{Quality of Solutions:} Are the synthesized code snippets interesting?

\end{itemize}

\mypara{Component library}
We use the same set of \componentCount components in all experiments.
To create this set, we started with all components from 12 popular Haskell library modules,%
\footnote{
\T{Data.Maybe},
\T{Data.Either},
\T{Data.Int},
\T{Data.Bool},
\T{Data.Tuple},
\T{GHC.List},
\T{Text.Show},
\T{GHC.Char},
\T{Data.Int},
\T{Data.Function},
\T{Data.ByteString.Lazy},
\T{Data.ByteString.Lazy.Builder}.
}
and excluded seven components\footnote{
\T{id},
\T{const},
\T{fix},
\T{on},
\T{flip},
\T{&},
\T{(.)}.}
that are highly-polymorphic yet redundant % when synthesizing code in eta-long form
(and hence slowed down the search with no added benefit).

% we picked \componentCount popular functions
% from common modules of Haskell's \T{base} and \T{bytestring} packages.
% %
% These modules include
% %
% These modules feature type classes and polymorphism,
% to support many different applications.
% %
% % We removed 7 functions
% % that were never part of interesting solutions
% % and whose use use seemed to pollute our searches.
% % As we explain in \autoref{sec:impl:desugar},
% % Out of \componentCount components, we found 7 that adversely affected the quality for solutions,
% % because they are highly polymorphic but are never applied productively,
% % so we disabled their application (they can still be used as arguments to higher-order functions).
% Out of \componentCount components, we discarded 7 unproductive components
% that are highly polymorphic and not needed when we synthesize code in eta-long form.
% %
% These include
% \T{id},
% \T{const},
% \T{fix},
% \T{on},
% \T{flip},
% \T{&},
% \T{(.)}.
% %
% % We do permit \T{fst} and \T{snd} as long as both are using on the same pair
% % in a solution. \TODO{TMI?}

\mypara{Query Selection}
% The complete set of benchmark queries is found in the
% second column of \autoref{fig:results}.
% %
% At first, we collected queries from \hoogle
% which are representative of what Haskell programmers ask,
% but \hoogle queries do not come with expected solutions and also tend to be easy,
% so we ultimately collected queries from three sources:
We collected \benchmarkCount benchmark queries from three sources:
\begin{enumerate}
\item \emph{\hoogle.}
We started with all queries made to \hoogle between 1/2015 and 2/2019.
Among the 3.8M raw queries,
71K were syntactically unique,
and only 60K could not be exactly solved by \hoogle.
Among these, many were syntactically ill-formed (\eg \T{FromJSON a -> Parser a ->}) or unrealizable (\eg \T{a -> b}).
We wanted to discard such invalid queries, but had no way to identify unrealizable queries automatically.
Instead we decided to reduce the number of queries by selecting only popular ones
(those asked at least \emph{five} times),
leaving us with 1750 queries, and then we pruned invalid queries manually, leaving us with 180 queries.
Finally, out of the 180 remaining queries, only \textbf{\hoogleOnlyBms} were realizable with our selected component set.
% Of the 3,779,571 \hoogle queries,
% many were syntactically ill-formed (\eg "\T{a ->}"),
% uninhabited (\eg "\T{a -> b}"),
% or incompatible with our module set,
% and we wanted to discard such invalid queries.
% %
% As there was no way to identify uninhabited queries automatically,
% we reduced the number of queries by selecting only popular ones
% (those asked at least \emph{five} times),
% and then pruned invalid queries manually,
% leaving us with 180 queries.
% %
% Finally, only \hoogleOnlyBms out of these queries
% are realizable with our selected component set.
%
\item \emph{\sover.}
We first collected all Haskell-related questions from \sover,
ranked them by their view counts, and examined the first 500.
Out of 15 queries with implementations,
we selected \textbf{6} that were realizable with our component set.
% Queries from Stackoverflow in May and June 2019.
% %
% We collected all Haskell-related questions from StackOverflow,
% ranked them by their view counts, and examined the first 500.
% %
% Out of 15 realizable queries with implementations,
% we selected \SOOnlyBms that were realizable with our component set.
%
\item \emph{Curated.}
Since we were unable to find many API-related Haskell questions
on \sover,
and \hoogle queries do not come with expected solutions and also tend to be easy,
we supplemented the benchmark set with \textbf{\ourBms} queries
from our own experience as Haskell programmers.
\end{enumerate}
%
% These three produce a total of \benchmarkCount benchmarks.
The resulting benchmark set can be found in \autoref{fig:results}.

\mypara{Experiment Platform}
We ran all experiments on a machine with an Intel Core i7-3770
running at 3.4Ghz with 32Gb of RAM. The platform ran Debian
10, GHC 8.4.3, and Z3 4.7.1.

\subsection{Efficiency}

\begin{figure}
  \resizebox{\textwidth}{!}{
  \begin{tabular}{rll|rrrr|rrrr|rrr|lll} \hline
\multirow{2}{*}{N} & \multirow{2}{*}{Name} & \multirow{2}{*}{Query} & 
\multicolumn{4}{c|}{Time: Total} & \multicolumn{4}{c|}{Time: SMT Solver} & \multicolumn{3}{c|}{Time: Type Checking} &
\multicolumn{3}{c}{\# Interesting / All}\\\cline{4-17}
& & & QB10 & Q & 0 & NO & QB10 & Q & 0 & NO & QB10 & 0 & NO & H+ & H-D & H-R \\ 
 \hline 
1 & firstRight & [Either a b] -\ensuremath{>} Either a b & 0.3 & 0.3 & 0.6 & 0.3 & 0.0 & 0.0 & 0.1 & 0.0 & 0.2 & 0.2 & 0.2 & 2/5 & 2/5 & 2/5 \\ 
2 & firstKey & [(a,b)] -\ensuremath{>} a & 3.9 & 21.2 & 58.4 &  & 2.4 & 17.2 & 52.2 &  & 0.8 & 0.2 &  & 0/2 & 0/4 & 0/3 \\ 
3 & flatten & [[[a]]] -\ensuremath{>} [a] & 1.7 & 5.5 & 1.1 & 0.5 & 0.9 & 2.5 & 0.3 & 0.1 & 0.3 & 0.2 & 0.4 & 5/5 & 5/5 & 0/5 \\ 
4 & repl-funcs & (a-\ensuremath{>}b)-\ensuremath{>}Int-\ensuremath{>}[a-\ensuremath{>}b] & 0.4 & 0.4 & 0.7 & 0.5 & 0.0 & 0.0 & 0.1 & 0.0 & 0.3 & 0.3 & 0.4 & 2/5 & 2/5 & 1/5 \\ 
5 & containsEdge & [Int] -\ensuremath{>} (Int,Int) -\ensuremath{>} Bool & 15.4 & 14.4 & 19.0 & 5.1 & 13.2 & 12.1 & 15.9 & 0.8 & 1.8 & 0.4 & 4.1 & 0/5 & 0/5 & 0/5 \\ 
6 & multiApp & (a -\ensuremath{>} b -\ensuremath{>} c) -\ensuremath{>} (a -\ensuremath{>} b) -\ensuremath{>} a -\ensuremath{>} c & 1.2 & 2.4 & 1.2 & 0.5 & 0.4 & 0.9 & 0.5 & 0.2 & 0.3 & 0.2 & 0.2 & 1/5 & 1/5 & 1/5 \\ 
7 & appendN & Int -\ensuremath{>} [a] -\ensuremath{>} [a] & 0.3 & 0.3 & 0.3 & 0.3 & 0.0 & 0.0 & 0.0 & 0.0 & 0.2 & 0.3 & 0.2 & 2/5 & 2/5 & 0/5 \\ 
8 & pipe & [(a -\ensuremath{>} a)] -\ensuremath{>} (a -\ensuremath{>} a) & 0.7 & 0.6 & 2.1 & 0.7 & 0.1 & 0.1 & 0.6 & 0.1 & 0.2 & 0.7 & 0.6 & 1/5 & 1/5 & 0/5 \\ 
9 & intToBS & Int64 -\ensuremath{>} ByteString & 0.6 & 0.6 & 1.6 & 0.3 & 0.1 & 0.1 & 0.5 & 0.0 & 0.3 & 0.3 & 0.2 & 3/5 & 3/5 & 0/5 \\ 
10 & cartProduct & [a] -\ensuremath{>} [b] -\ensuremath{>} [[(a,b)]] & 1.5 & 8.8 & 1.3 & 1.3 & 0.6 & 5.5 & 0.4 & 0.5 & 0.3 & 0.2 & 0.6 & 0/5 & 0/5 & 0/5 \\ 
11 & applyNtimes & (a-\ensuremath{>}a) -\ensuremath{>} a -\ensuremath{>} Int -\ensuremath{>} a & 6.4 & 23.5 & 0.6 & 1.0 & 4.9 & 19.8 & 0.2 & 0.3 & 1.2 & 0.3 & 0.6 & 0/5 & 0/5 & 0/5 \\ 
12 & firstMatch & [a] -\ensuremath{>} (a -\ensuremath{>} Bool) -\ensuremath{>} a & 1.5 & 1.4 & 2.4 & 0.5 & 0.7 & 0.6 & 1.3 & 0.2 & 0.2 & 0.2 & 0.3 & 5/5 & 5/5 & 5/5 \\ 
13 & mbElem & Eq a =\ensuremath{>} a -\ensuremath{>} [a] -\ensuremath{>} Maybe a & 46.8 &  & 5.6 &  & 45.5 &  & 4.0 &  & 0.8 & 0.3 &  & 0/3 & 0/3 & 0/5 \\ 
14 & mapEither & (a -\ensuremath{>} Either b c) -\ensuremath{>} [a] -\ensuremath{>} ([b], [c]) & 2.6 & 43.7 & 55.4 & 3.5 & 1.7 & 37.6 & 49.8 & 0.5 & 0.3 & 0.2 & 1.7 & 1/4 & 1/5 & 1/1 \\ 
15 & hoogle01 & (a -\ensuremath{>} b) -\ensuremath{>} [a] -\ensuremath{>} b & 0.5 & 0.5 & 1.1 & 0.3 & 0.1 & 0.1 & 0.3 & 0.0 & 0.3 & 0.3 & 0.2 & 2/5 & 2/5 & 2/5 \\ 
16 & zipWithResult & (a-\ensuremath{>}b)-\ensuremath{>}[a]-\ensuremath{>}[(a,b)] & 11.1 &  &  &  & 9.2 &  &  &  & 0.7 &  &  & 1/2 & 1/2 & 0/5 \\ 
17 & splitStr & String -\ensuremath{>} Char -\ensuremath{>} [String] & 0.7 & 0.7 & 1.0 & 0.4 & 0.2 & 0.1 & 0.3 & 0.1 & 0.3 & 0.3 & 0.2 & 0/5 & 0/5 & 0/5 \\ 
18 & lookup & Eq a =\ensuremath{>} [(a,b)] -\ensuremath{>} a -\ensuremath{>} b & 0.7 & 0.7 & 0.7 & 0.8 & 0.2 & 0.2 & 0.2 & 0.3 & 0.3 & 0.3 & 0.3 & 1/5 & 1/3 & 1/4 \\ 
19 & fromFirstMaybes & a -\ensuremath{>} [Maybe a] -\ensuremath{>} a & 1.4 & 3.0 & 3.4 & 0.7 & 0.3 & 0.9 & 1.2 & 0.1 & 0.7 & 0.8 & 0.5 & 2/5 & 2/5 & 0/5 \\ 
20 & map & (a-\ensuremath{>}b)-\ensuremath{>}[a]-\ensuremath{>}[b] & 0.3 & 0.3 & 0.4 & 0.4 & 0.0 & 0.0 & 0.1 & 0.0 & 0.2 & 0.2 & 0.3 & 5/5 & 5/5 & 0/5 \\ 
21 & maybe & Maybe a -\ensuremath{>} a -\ensuremath{>} Maybe a & 0.3 & 0.4 & 0.4 & 0.6 & 0.1 & 0.0 & 0.1 & 0.1 & 0.2 & 0.2 & 0.5 & 2/5 & 1/5 & 0/5 \\ 
22 & rights & [Either a b] -\ensuremath{>} Either a [b] & 1.5 & 31.9 & 11.9 & 0.8 & 0.6 & 20.4 & 5.7 & 0.1 & 0.4 & 0.3 & 0.6 & 1/2 & 1/2 & 1/5 \\ 
23 & mbAppFirst & b -\ensuremath{>} (a -\ensuremath{>} b) -\ensuremath{>} [a] -\ensuremath{>} b & 2.0 & 1.3 & 2.0 & 0.4 & 1.2 & 0.4 & 0.9 & 0.1 & 0.3 & 0.3 & 0.3 & 1/3 & 1/5 & 0/5 \\ 
24 & mergeEither & Either a (Either a b) -\ensuremath{>} Either a b & 2.8 &  &  & 1.0 & 1.7 &  &  & 0.1 & 0.6 &  & 0.7 & 0/3 & 0/3 & 0/5 \\ 
25 & test & Bool -\ensuremath{>} a -\ensuremath{>} Maybe a & 1.4 & 8.8 & 26.4 & 0.7 & 0.7 & 7.1 & 24.3 & 0.3 & 0.2 & 0.3 & 0.3 & 2/5 & 2/5 & 0/5 \\ 
26 & multiAppPair & (a -\ensuremath{>} b, a -\ensuremath{>} c) -\ensuremath{>} a -\ensuremath{>} (b, c) & 2.0 &  &  & 1.5 & 1.2 &  &  & 0.3 & 0.5 &  & 1.0 & 1/2 & 1/4 & 0/5 \\ 
27 & splitAtFirst & a -\ensuremath{>} [a] -\ensuremath{>} ([a], [a]) & 0.6 & 0.6 & 2.3 & 0.4 & 0.1 & 0.1 & 1.1 & 0.1 & 0.3 & 0.3 & 0.2 & 2/5 & 2/5 & 0/5 \\ 
28 & 2partApp & (a-\ensuremath{>}b)-\ensuremath{>}(b-\ensuremath{>}c)-\ensuremath{>}[a]-\ensuremath{>}[c] & 2.3 & 2.2 & 22.9 & 1.5 & 1.2 & 1.2 & 18.7 & 0.5 & 0.2 & 0.3 & 0.3 & 1/5 & 1/5 & 0/5 \\ 
29 & areEq & Eq a =\ensuremath{>} a -\ensuremath{>} a -\ensuremath{>} Maybe a & 44.9 &  &  &  & 40.3 &  &  &  & 3.8 &  &  & 0/2 & 0/5 & 0/5 \\ 
30 & eitherTriple & Either a b -\ensuremath{>} Either a b -\ensuremath{>} Either a b & 5.3 &  &  & 3.2 & 1.9 &  &  & 0.1 & 2.8 &  & 2.9 & 0/5 & 0/5 & 0/5 \\ 
31 & mapMaybes & (a -\ensuremath{>} Maybe b) -\ensuremath{>} [a] -\ensuremath{>} Maybe b & 0.5 & 0.5 & 1.1 & 0.3 & 0.1 & 0.1 & 0.3 & 0.0 & 0.3 & 0.2 & 0.2 & 2/5 & 2/5 & 2/5 \\ 
32 & head-rest & [a] -\ensuremath{>} (a, [a]) & 1.4 & 51.1 & 1.0 & 0.8 & 0.7 & 40.6 & 0.3 & 0.1 & 0.2 & 0.3 & 0.6 & 3/5 & 3/5 & 2/5 \\ 
33 & appBoth & (a -\ensuremath{>} b) -\ensuremath{>} (a -\ensuremath{>} c) -\ensuremath{>} a -\ensuremath{>} (b, c) & 2.1 & 2.8 & 51.1 &  & 1.3 & 1.5 & 44.3 &  & 0.3 & 0.3 &  & 1/5 & 1/5 & 1/1 \\ 
34 & applyPair & (a -\ensuremath{>} b, a) -\ensuremath{>} b & 1.2 & 1.1 & 3.6 & 0.6 & 0.4 & 0.4 & 1.6 & 0.1 & 0.2 & 0.3 & 0.4 & 2/3 & 2/5 & 1/5 \\ 
35 & resolveEither & Either a b -\ensuremath{>} (a-\ensuremath{>}b) -\ensuremath{>} b & 1.0 & 1.3 & 1.5 & 0.5 & 0.4 & 0.5 & 0.6 & 0.2 & 0.2 & 0.2 & 0.2 & 1/5 & 1/2 & 1/5 \\ 
36 & head-tail & [a] -\ensuremath{>} (a,a) & 2.2 &  &  & 20.2 & 1.5 &  &  & 0.4 & 0.3 &  & 18.8 & 0/5 & 0/5 & 0/5 \\ 
37 & indexesOf & ([(a,Int)] -\ensuremath{>} [(a,Int)]) -\ensuremath{>} [a] -\ensuremath{>} [Int] -\ensuremath{>} [Int] &  &  &  &  &  &  &  &  &  &  &  &  &  &  \\ 
38 & app3 & (a -\ensuremath{>} b -\ensuremath{>} c -\ensuremath{>} d) -\ensuremath{>} a -\ensuremath{>} c -\ensuremath{>} b -\ensuremath{>} d & 0.3 & 0.3 & 0.3 & 0.3 & 0.0 & 0.0 & 0.0 & 0.0 & 0.2 & 0.3 & 0.2 & 1/5 & 1/5 & 1/5 \\ 
39 & both & (a -\ensuremath{>} b) -\ensuremath{>} (a, a) -\ensuremath{>} (b, b) & 1.1 &  &  & 1.3 & 0.5 &  &  & 0.2 & 0.3 &  & 1.0 & 1/1 & 1/1 & 0/5 \\ 
40 & takeNdropM & Int -\ensuremath{>} Int -\ensuremath{>} [a] -\ensuremath{>} ([a], [a]) & 0.4 & 0.4 & 1.3 & 0.4 & 0.0 & 0.0 & 0.4 & 0.0 & 0.3 & 0.3 & 0.3 & 5/5 & 5/5 & 0/5 \\ 
41 & firstMaybe & [Maybe a] -\ensuremath{>} a & 1.2 & 1.6 & 1.4 & 0.7 & 0.5 & 0.6 & 0.4 & 0.1 & 0.2 & 0.2 & 0.5 & 4/5 & 4/5 & 2/5 \\ 
42 & mbToEither & Maybe a -\ensuremath{>} b -\ensuremath{>} Either a b & 47.4 &  &  &  & 21.7 &  &  &  & 24.2 &  &  & 0/2 & 0/5 & 0/5 \\ 
43 & pred-match & [a] -\ensuremath{>} (a -\ensuremath{>} Bool) -\ensuremath{>} Int & 1.1 & 1.1 & 3.6 & 0.4 & 0.4 & 0.4 & 2.0 & 0.1 & 0.3 & 0.3 & 0.2 & 3/5 & 3/5 & 3/5 \\ 
44 & singleList & Int -\ensuremath{>} [Int] & 0.3 & 0.3 & 0.4 & 0.3 & 0.0 & 0.0 & 0.0 & 0.0 & 0.2 & 0.2 & 0.3 & 1/5 & 1/5 & 0/5 \\ 
\hline 
 \end{tabular}
  }

\caption{\tool synthesis times and solution quality on \benchmarkCount benchmarks.
% Prefixes `t-', `st-`, and `tc-` denote, respectively,
We report the total time to first solution,
time spend in the SMT solver,
and time spent type checking (including demand analysis).
`QB10', `Q', `0', `NO' correspond to four variants of the search algorithm:
\mQryTypeBounded$[10]$, \mQryType, \mTopType, and \mNoGar.
All times are in seconds.
Absence indicates no solution found within the timeout of \timeout seconds.
% `tr-' are the number of transitions at the end of search--either after a
% timeout of \timeout seconds or when the first solution was reached.
Last three columns report the number of interesting solutions
among the first five 
(or fewer, if fewer solutions were found within the timeout of \qualitytimeout seconds).
`H+', `H-D`, and `H-R` correspond, respectively,
to the default configuration of \tool, 
disabling the demand analyzer,
and using structural typing over relevant typing.
% Absence indicates no interesting solution was in the first five.
% A denominator under five is the number of solutions found under \qualitytimeout seconds.
}\label{fig:results}

\end{figure}

\mypara{Setup}
To evaluate the efficiency of \tool,
we run it on each of the \benchmarkCount queries,
and report the time to synthesize the first well-typed solution that passes the demand analyzer
(\autoref{sec:examples:demand}).
We set the timeout to \timeout seconds and take the median
time over three runs to reduce the uncertainty generated by using an SMT solver.
%
% We evaluate the speed of \tool in the context of our novel abstract typing
% and refinement environment.
% %
% The baseline we use instantiates the polymorphic components and query types
% to many monomorphic instances and searches a TTN from there.
% %
% We instantiate type constructors with all types up to only
% one application (\eg \T{List Int}, but not \T{List (List Int)}).
% %
% This baseline does not make use of our abstract typing and thus cannot
% use our abstract refinement.
% %
% To evaluate the effect of refinement, we must compare against a version
% without it; however, there is more granularity than simply on-off.
% %
% Is there a bound beyond which the cost of abstraction outweighs its benefits?
% %
% What is an appropriate initial value of the initial abstract cover, $\abset$?
%
To assess the importance of \tygar,
we compare five variants of \tool:
\begin{enumerate}
  \item \mFixType: we \emph{monomorphise} the component library
       by instantiating all type constructors with all
       types up to an unfolding depth of one and do not use refinement.

  \item \mNoGar: we build the ATN from the abstract cover $\abset_Q$,
        which precisely represents types from the query (defined in \autoref{sec:synthesis:tygar}).
        We do not use refinement, and instead \emph{enumerate}
        solutions to the abstract synthesis problem until one type checks concretely.
        Hence, this variant uses our abstract typing but does not use \tygar.

  \item \mTopType, which uses \tygar with the initial cover $\abset_{\top} = \{\tau, \bot\}$.

  \item \mQryType, which uses \tygar with the initial cover $\abset_Q$.

  \item \mQryTypeBounded$[N]$, which is like \mQryType,
  but the size of the abstract cover is \emph{bounded}:
  once the cover reaches size $N$, it stops further refinement and reverts to \mNoGar-style enumeration.
\end{enumerate}

\mypara{Results}
\autoref{fig:results} reports total synthesis time for four out of the five variants.
\mFixType did not complete any benchmark within \timeout seconds:
it spent all this time creating the TTN, and is thus is omitted from tables and graphs.
\autoref{fig:searchgraph} plots the number of successfully completed benchmarks
against time taken for the remaining four variants
(higher and weighted to the left is better).
As you can see, \mNoGar is quite fast on easy problems,
but then it plateaus, and can only solve \nogarSolnCount out of \benchmarkCount queries.
On the other hand, \mTopType and \mQryType are slower,
and only manage to solve \tygarZSolnCount and \tygarQSolnCount queries, respectively.
After several refinement iterations, the ATNs grow too large,
and these two variants spend a lot of time in the SMT solver, as shown in
columns \textit{st-Q} and \textit{st-0} in \autoref{fig:results}.
Other than \mFixType, no other variant spent any meaningful amount of time
building the ATN.

% In general, \mNoGar is quite fast, the median time produce a solution was \nogarTime seconds.
% %
% It could find \nogarSolnCount of \benchmarkCount solutions.
% %
% \mTopType was the slowest, with a median time to solution of \tygarZTime seconds.
% %
% Of the variants tested, \mTopType spent the most time in the solver
% as they create new constraints and go back to Z3 on each refinement.
% %
% Other than \mFixType, no other variant spent any meaningful amount of time
% building the ATN.
% %
% The entirety of \mFixType's time was spent there, and takes upwards of 180
% seconds.
%

\mypara{Bounded Refinement}
\begin{figure}[t!]
    \begin{minipage}{0.49\textwidth}
        \centering
    \includegraphics[width=1\columnwidth]{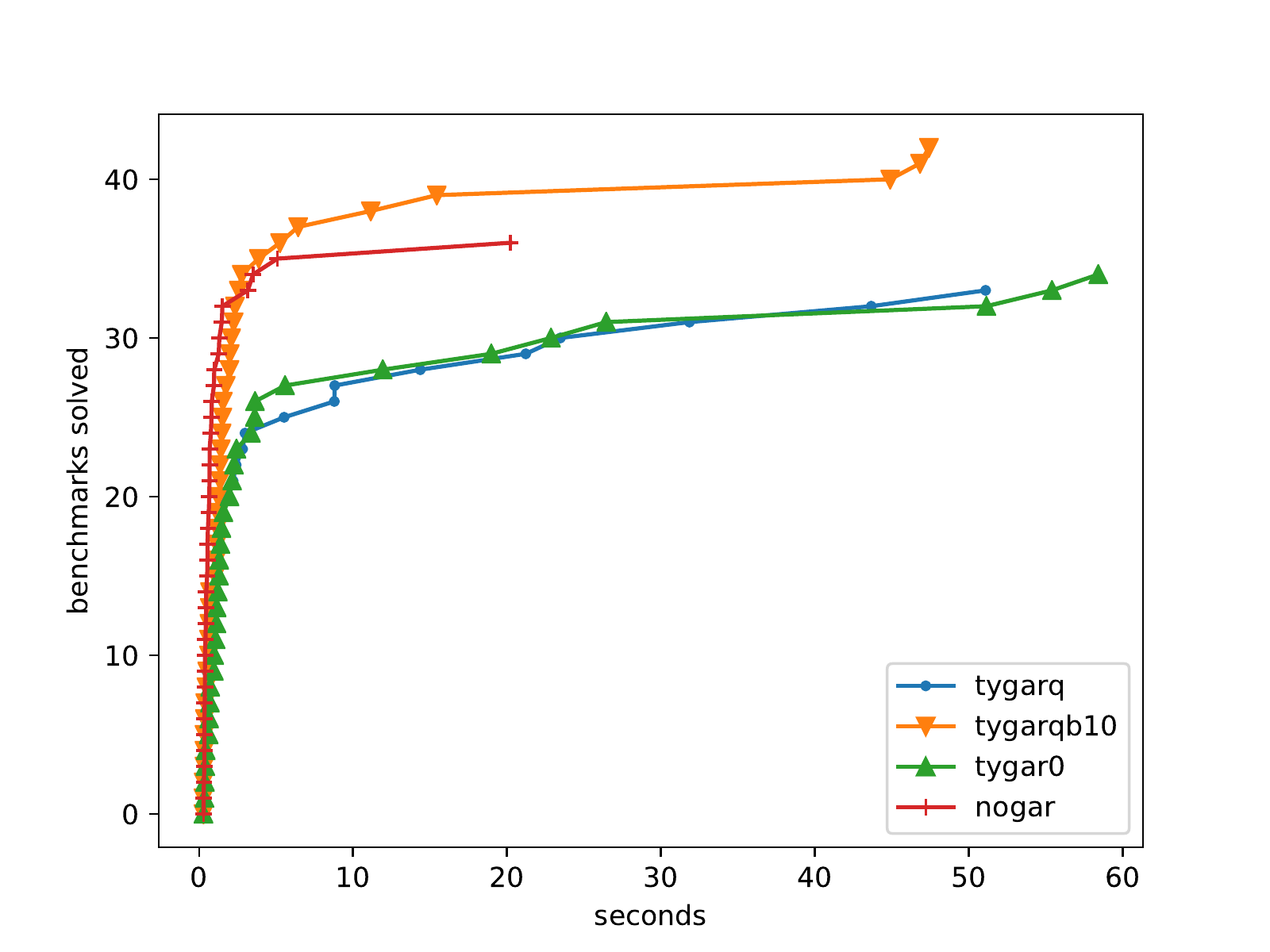}
    \caption{Queries solved over time for our initial variants
    and the best refinement bound.
    }\label{fig:searchgraph}
    \end{minipage}
    \begin{minipage}{0.49\textwidth}
        \centering
    \includegraphics[width=\columnwidth]{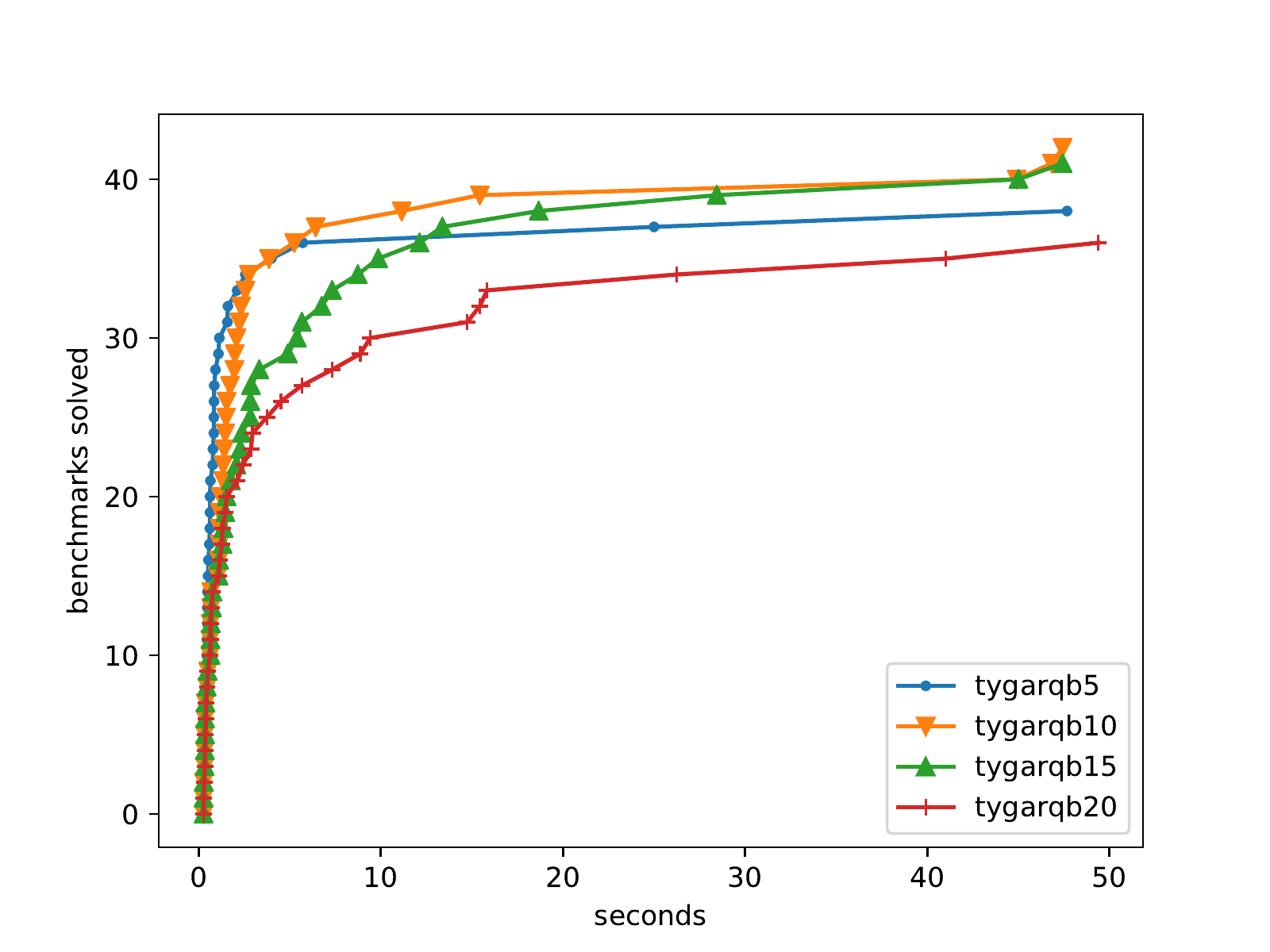}
    \caption{Queries solved over time for varying refinement bounds.
    The variant's number indicates the refinement bound on the abstract cover.
    }\label{fig:boundgraph}
    \end{minipage}
\end{figure}
We observe that \mNoGar and \mQryType have complimentary strengths and weaknesses:
although \mNoGar is usually faster,
\mQryType was able to find some solutions that \mNoGar could not
(for example, query 33: \T{appBoth}).
% T{appBoth :: (a -> b) -> (a -> c) -> a -> (b, c)}.
%
% This is not surprising:
% because refinement is forced to make a new distinction between types in every iteration,
% at some point it is bound to start making unimportant distinctions,
% which only slow synthesis down.
%
We conclude that refinement is able to discover interesting abstractions,
but because it is forced to make a new distinction between types in every iteration,
after a while it is bound to start making irrelevant distinctions,
and the ATN grows too large for the solver to efficiently navigate.
To combine  the strengths of the two approaches,
we consider \mQryTypeBounded,
which first uses refinement, and then switches to enumeration
once the ATN reaches a certain bound on its number of places.
To determine the optimal bound, we run the experiment
with bounds 5, 10, 15, and 20.

\autoref{fig:boundgraph} plots the results.
As you can see, for easy queries, a bound of 5 performs the best:
this correspond to our intuition that
when the solution is easily reachable,
it is faster to simply enumerate more candidates than spend time on refinement.
% to go back to the solver.
%
However, as benchmarks get harder,
having more places at ones disposal renders searches faster:
the bounds of 10 and 15 seem to offer a sweet spot.
Our best variant---\mQryTypeBounded$[10]$---solves \tygarQBSolnCount out of \benchmarkCount queries
with the median synthesis time of \firstSolutionTime seconds;
in the rest of this section we use \mQryTypeBounded$[10]$ as the default \tool configuration.

\mQryTypeBounded$[10]$ solves all queries that were solved by \mNoGar
plus six additional queries on which \mNoGar times out.
A closer look at these six queries indicates that they tend to be more complex.
For example, recall that \mNoGar times out on the query \T{appBoth},
while \mQryTypeBounded$[10]$ finds a solution of size four in two seconds.
Generally, our benchmark set is favorable for \mNoGar:
most \hoogle queries are easy, both because of programmers' expectations of what \hoogle can do
and also because we do not know the desired solution,
and hence consider any (relevantly) well-typed solution correct.
The benefits of refinement are more pronounced
on queries with solution size four and higher:
\mQryTypeBounded$[10]$ solves 6 out of 7, while \mNoGar solves only 2.

% In the rest of this section we use \mQryTypeBounded$[10]$ as the default \tool configuration.
% This of course, comes with a tradeoff, as seen in middle range of \textit{tygarqb15}
% and all of \textit{tygarqb20}---too many types offers a distraction for the solver.

% There were queries that \mNoGar could not solve but \mQryType could,
% like
% %
% Refinements costs add up to become nontrivial, \mTopType spends much
% of its time in SMT solving; so we seek to cap the number of refinements
% while still leveraging their solving power.
% %
% To find the appropriate bound on refinements, we take the
% abstract cover that could solve the most queries, \mQryType, and added
% a bound to the size of the abstract cover.
% %
% That is, when the bound is 5, refinement may continue until
% there are more than 5 types in the cover.
% %
% If a solution were not found after that bound, search would transition
% to an enumeration based search over the abstract types and transitions
% in the cover.
% %
% We run this experiment with bounds 5, 10, 15, and 20.
% %
% \autoref{fig:searchgraph} puts the best refinement bound in perspective with
% our other variants.
%

% \mypara{Bound analysis}
% We find that 10 is a suitable cover bound to balance
% speed with handling more queries.
% %
% While \mQryTypeBounded performs very similarly to \mNoGar on short queries,
% \mQryTypeBounded is able to solve more queries as the refinements help to push
% search in more fruitful directions.

\subsection{Quality of Solutions}

\mypara{Setup}
To evaluate the quality of the solutions,
we ask \tool to return, for each query,
at most \emph{five} well-typed results within a timeout of \qualitytimeout seconds.
Complete results are available in \autoref{appendix:results}.
We then manually inspect the solutions
and for each one determine whether it is \emph{interesting},
\ie whether it is something a programmer might find useful,
based on our own experience as Haskell programmers%
\footnote{Unfortunately, we do not have ground truth solutions for most of our queries,
so we have to resort to subjective analysis.}.
\autoref{fig:results} reports for each query,
the \emph{number} of interesting solutions,
divided by the number of total solutions found within the timeout.
%
% In this experiment, we use a timeout of \qualitytimeout seconds to produce all five solutions.
%
To evaluate the effects of relevant typing and demand analysis (\autoref{sec:examples:demand}),
we compare three variants of \tool:
\begin{inparaenum}[(1)]
  \item \tool with all features enabled, based on \mQryTypeBounded, labeled \qualityH.
  \item Our tool without the demand analyzer filter, labeled \qualityHD.
  \item Our tool with structural typing in place of the relevant typing, labeled \qualityHR
   %
   % (in this variant, the ATN is not required to use all parameters to the input query).
   (in this variant, the SMT solver is free to choose any non-negative number of tokens
   to assign to each query argument).
\end{inparaenum}

% We evaluate the utility of the solutions provided by \tool
% by searching for and considering the first five solutions to a given query.
%
% We determine the \emph{rank} of the first interesting solutions among the first five.
% %
% We use a timeout of \qualitytimeout seconds to produce solutions.
%
% There are three experiments run for whose solutions we count interesting results.
% \begin{inparaenum}[(1)]
  % \item \tool with all features enabled, based on \mQryTypeBounded, labeled \qualityH.
  % \item Our tool without the demand analyzer filter \autoref{sec:examples:demand},
% labeled \qualityHD.
  % \item Our tool with a structural type requirement in place of the relevant
% type requirement \autoref{sec:example:relevant}, labeled \qualityHR.
% %
% So in this variant, the ATN is not required to use all parameters to the input
% query.
% \end{inparaenum}

% \mypara{Interesting Solutions}
% %
% To asses the quality of solutions for these queries, we search for
% \textit{interesting} synthesized snippets,
% where interesting is the existence of some reasonable context where a given
% snippet would be the appropriate piece of code.
% %
% That is, the synthesized code does some work on the input, as opposed to
% finding a clever way to discard the input or something that would crash
% like taking \T{head} of an empty list.

% We track the time to produced the first solution and the time to produce all
% five.
% %
% We list up to five solutions for each query and variant, limited by the timeout.
% %
% Complete results are available in our appendix \TODO{appendix II}.

\mypara{Analysis}
% The average percentage of interesting solutions per query for \qualityH was \qualityHInterestingPct.
% %
% \qualityHD had an average percentage of interesting solutions of \qualityHDInterestingPct.
% %
% Lastly, \qualityHR produced and average of \qualityHRInterestingPct interesting results.
%
First of all, we observe that 
whenever an interesting solution was found by \qualityHD or \qualityHR,
it was also found by \qualityH,
indicating that our filters are not overly conservative.
% Whenever a solution was found without the demand analyzer or without
% relevant typing, the unhindered version would find
% the same number or more interesting solutions.
%
We also observe that on easy queries---taking less than a second---%
demand analysis and relevant typing did little to help:
if an interesting solution were found, then all three variants would find
it and give it a high rank.
However, on medium and hard queries---taking longer than a second---%
the demand analyzer and relevant typing helped promote interesting solutions
higher in rank.
Overall, 66/179 solutions produces by \qualityH were interesting (37\%),
compared with 65/189 for \qualityHD (34\%)
and 26/199 for \qualityHR (13\%).
As you can see, relevant typing is essential to ensure that interesting solutions even get to the top five,
whereas demand analysis is more useful to reduce the total number of solutions the programmer has to sift through.
This is not surprising, since relevant typing mainly filters out \emph{short} programs
while demand analysis is left to deal with \emph{longer} ones.
In our experience, demand analysis was most useful
when queries involved types like \T{Either a b}, where one could
produce a value of type \T{a} from a value of type \T{b}
by constructing and destructing the \T{Either}.
% Anecdotally, we found that the density of interesting solutions was higher
% with the demand analyzer, that is, while we may have gotten fewer solutions,
% they were of a higher caliber.
%
One final observation is that in benchmarks 14, 18, 33, and 35,
\qualityHR found fewer results \emph{in total} that the other two versions;
we attribute this to the SMT solver 
struggling with determining the appropriate token multiplicities for the initial marking.
% To that end, on occasion, without relevancy or demand, \tool can
% get "stuck" exploring an uninsteresting abstract cover,
% and come up with fewer solutions than with both features
% (as seen in benchmarks 14, 18, 33, and 35).

\mypara{Noteworthy solutions}
We presented three illustrative solutions generated by \tool
as examples throughout \autoref{sec:examples}:
\begin{itemize}
\item \T{a -> [Maybe a] -> a} corresponds to benchmark \firstJustPosition (fromFirstMaybes);
the solution from \autoref{sec:examples} is generated at rank \firstJustRank.
\item \T{(a -> a) -> a -> Int -> a} corresponds to benchmark \applyNTimesPosition (applyNTimes);
the solution from \autoref{sec:examples} is generated at rank \applyNTimesRank.
\item \T{Eq a => [(a,b)] -> a -> b} corresponds to benchmark \lookupPosition (lookup);
the solution from \autoref{sec:examples} is generated at rank \lookupRank.
\end{itemize}
\tool has also produced code snippets that surprised us:
for example, on the query \T{(a -> b, a) -> b},
the authors' intuition was to destruct the pair then apply the function.
Instead \tool produces \T{\\x -> uncurry (\$) x} or alternatively \T{\\x -> uncurry id x},
both of which, contrary to our intuition, are not only well-typed,
but also are functionally equivalent to our intended solution.
It was welcome to see a synthesis tool write more succinct code that its authors.

\section{Related Work}\label{sec:related}

Finally, we situate our work with other research into ways of 
synthesizing code that meets a given specification. 
For brevity, we restrict ourselves to the (considerable) 
literature that focuses on using \emph{types} as specifications, 
and omit discussing methods that use \eg input-output examples 
or tests \cite{Gulwani11,Katayama12,Lee18,OseraZ15}, 
logical specifications \cite{Srivastava10,CodeHint2014} 
or program sketches \cite{Solar-Lezama:2008}.

% \mypara{Type Isomorphisms and Statistical Models} 
\mypara{API Search} 
Modern IDEs support various forms of code-completion,
based on at the very least common prefixes of names 
(\eg completing \T{In} into \T{Integer} or \T{fo} 
into \T{foldl'}) and so on. 
Many tools use type information to only return completions 
that are well-typed at the point of completion.
This approach is generalized by search based tools 
like \hoogle \cite{Hoogle} that search for type 
isomorphisms \cite{DicosmoJFP92} to find functions 
that ``match'' a given type signature (query).
The above can be viewed as returning single-component results, 
as opposed to our goal 
of searching for terms that \emph{combine} components 
in order to satisfy a given type query.

\mypara{Search using Statistical Models} 
Several groups have looked into using statistical methods to 
improve search-based code-completion. 
One approach is to analyze large code bases to 
precompute statistical models that can be used 
to predict the \emph{most likely} sequences of 
method calls at a given point or that yield values 
of a given (first order) type \cite{Raychev14}.
It is possible to generalize the above to train 
probabilistic models (grammars) that \emph{generate} 
the most likely programs that must contain certain 
properties like method names, types, or 
keywords~\cite{Bayou17}.
We conjecture that while the above methods are 
very useful for effectively searching for commonly 
occurring code snippets,
% and hence, for understanding 
% how to use some new API, 
they are less useful in functional languages,
where higher-order components offer high degree of compositionality
and lead to less code repetition.
% so for synthesizing 
% terms that implement highly structured type-specifications.
%
% TODO: PARSEWEB, SNIFF, 

\mypara{Type Inhabitation} %-- (djinn, rehof, agda) 
The work most directly related to ours are methods based 
on finding terms that \emph{inhabit} a (query) 
type \cite{urzyczyn97}.
One approach is to use the correspondence between types 
and logics, to reduce the inhabitation question to that 
of validity of a logical formula (encoding the type).
A classic example is \djinn \cite{djinn} which implements 
a decision procedure for intuitionistic propositional 
calculus \cite{Dyckhoff98} to synthesize terms that 
have a given type.
Recent work by Rehof \etal extends the notion of inhabitation 
to support object oriented frameworks whose components behaviors 
can be specified via intersection types \cite{rehof16}.
However, both these approaches lack a \emph{relevancy} 
requirement of its snippets, and hence return undesirable 
results.
For example, when queried with a type \T{a -> [a]}, 
\djinn would yield a function that always returns the empty list. 
One way to avoid undesirable results 
is to use dependent or refinement types to capture the semantics of the desired terms more precisely.
%
% Dependent types address the problem of relevance by capturing 
% extremely precise information about the semantics of desired 
% terms.
\synquid~\cite{PolikarpovaKS16} and \myth~\cite{FrankleOWZ16}
use different flavors of refinement types to synthesize recursive functions,
while \agda \cite{Norell08} 
makes heavy use of proof search to enable type- or hole-driven 
development. 
However, unlike \tool, methods based on classical proof 
search do not scale up to large component libraries. 
% due 
% to the combinatorial explosion in the search space.

\mypara{Scalable Proof Search} % -- perelman, insynth
One way to scale search is explored by \cite{Perelman12} 
which uses a very restricted form of inhabitation queries 
to synthesize local ``auto-completion'' terms corresponding 
to method names, parameters, field lookups and so on, but 
over massive component libraries (\eg the .NET framework).
In contrast, the \insynth system \cite{GveroKKP13} addresses 
the problem of scalability by extending proof search with 
a notion of \emph{succinctness} that collapses types 
into equivalence classes, thereby abstracting the space 
over which proof search must be performed.
Further, \insynth uses \emph{weights} derived from 
empirical analysis of library usage to bias the search 
to more likely results.
However, \insynth is limited to simple types \ie does 
not support parametric polymorphism which is the focus 
of our work.

\mypara{Graph Reachability} %-- (sypet, prospector) 
Our approach is directly inspired by methods that reduce 
the synthesis problem to some form of \emph{reachability}.
\prospector~\cite{Mandelin05} is an early exemplar 
where the components are \emph{unary} functions that take 
a single input. Consequently, the component library can be 
represented as a directed graph of edges between input and 
output types, and synthesis is reduced to finding a 
path from the query's input type to its output type.
\sypet~\cite{FengM0DR17}, which forms the basis of our work,
is a generalization of \prospector to account for general 
first-order functions which can take multiple inputs, thereby 
generalizing synthesis to reachability on Petri nets.
The key contribution of our work is the notion of \tygar 
that generalizes \sypet's approach to polymorphic and 
higher-order components.

\mypara{Counterexample-Guided Abstraction Refinement} %-- (ganty, kloos, blaze)
While the notion of counterexample-guided abstraction 
refinement (CEGAR) is classical at this point 
% \cite{Kurshan10,Solar-Lezama:2008}, 
\cite{Kurshan10}, 
there are two lines of work in particular closely related to ours.
First, \cite{Ganty07,Kloos13} describe an iterative 
abstraction-refinement process for verifying Petri nets, 
using SMT~\cite{Esparza14}. 
However, in their setting, the refinement loop is used 
to perform unbounded verification of the (infinite-state) 
Petri net.
In contrast, \tool performs a bounded search on each 
Petri net, but uses \tygar to refine the net itself 
with new type instantiations that eliminate the 
construction of ill-typed terms.
Second, \blaze~\cite{WangDS18} describes a CEGAR approach 
for synthesizing programs from input-output examples, by 
iteratively refining \emph{finite tree-automata}
% whose states correspond to (sets of) programs. 
whose states correspond to values in a predicate-abstraction domain.
Programs that do not satisfy 
the input-output tests are used to iteratively refine the domain 
until a suitable correct program is found.
Our approach differs in that we aim to synthesize terms 
of a given \emph{type}.
Consequently, although our refinement mechanism is inspired by \blaze,
we develop a novel abstract domain---% 
a finite sub-lattice of the type subsumption lattice---%
and show how to use proofs of \emph{untypeability} to refine this domain.
Moreover, we show how CEGAR can be combined with Petri nets
(as opposed to tree automata)
in order to enforce relevancy.
% to iteratively derive a finite (over-)approximation of the space 
% of types that suffices to synthesize suitable terms.

\mypara{Types and Abstract Interpretation}
The connection between types and abstract interpretation (AI)
was first introduced in~\cite{Cousot1997}.
The goal of their work, however, was to cast \emph{existing} type systems in the framework of AI,
while we use this framework to systematically construct \emph{new} type systems 
that further abstract an existing one.
More recently, \cite{GarciaCT16} used the AI framework to formalize \emph{gradual typing}.
Like that work, we use AI to derive an abstract type system for our language,
but otherwise the goals of the two techniques are very different.
Moreover, as we hint in \autoref{sec:algo:refine},
our abstract domain is subtly but crucially different from traditional gradual typing,
because our refinement algorithm relies on non-linear terms 
(\ie types with repeated variables).

\section{Conclusions}\label{sec:concl}

We have presented \tygar, a new algorithm for synthesizing 
terms over polymorphic components that inhabit a given 
\emph{query} type. 
The key challenge here is the \emph{infinite} space of 
monomorphic instances arising from the polymorphic signatures. 
We introduced a new notion of \emph{abstract typing} that allows 
us to use ideas from the framework of abstract interpretation 
to compute a finite overapproximation of this search space.
We then show how spurious terms that are well-typed 
in the abstract domain but ill-typed in reality, yield 
\emph{proofs of untypeability} which can then iteratively 
refine the abstract search space until a well-typed 
solution is found.

We have implemented \tygar in \tool, and our evaluation 
on a suite of \benchmarkCount queries demonstrates the benefits 
of counterexample-driven refinement.
In particular, we show how lazily refining a coarse abstract 
domain using proofs of untypeability allows us to synthesize 
correct results \emph{faster} than a naive approach that 
eagerly instantiates all the types from the query followed 
by a brute-force enumeration.
Our experiments further demonstrate that the gains from 
iterative refinement over enumeration are 
even more pronounced on harder queries over more complex 
types.
Our support for polymorphism also allows \tool to work 
with higher-order and type-class constrained components,
which, thanks to parametricity, allows for more precise 
queries than simple monomorphic types.

In future work it would be valuable to investigate 
ways to improve the quality of the results, \eg by 
prioritizing components that are more popular, or 
less partial, or by extending our method to use 
other forms of specifications such as examples 
or refinement types.
% or more precise signatures, \eg using refinement types.

% Acknowledgments
\begin{acks}                            %% acks environment is optional
                                        %% contents suppressed with 'anonymous'
  %% Commands \grantsponsor{<sponsorID>}{<name>}{<url>} and
  %% \grantnum[<url>]{<sponsorID>}{<number>} should be used to
  %% acknowledge financial support and will be used by metadata
  %% extraction tools.
  % This material is based upon work supported by the
  % \grantsponsor{GS100000001}{National Science
    % Foundation}{http://dx.doi.org/10.13039/100000001} under Grant
  % No.~\grantnum{GS100000001}{nnnnnnn} and Grant
  % No.~\grantnum{GS100000001}{mmmmmmm}.  Any opinions, findings, and
  % conclusions or recommendations expressed in this material are those
  % of the author and do not necessarily reflect the views of the
  % National Science Foundation.
  The authors would like to thank Neil Mitchell for providing the \hoogle data
  and helpful feedback on the \tool web interface.
  We are also grateful to the anonymous reviewers of this and older versions of the paper 
  for their careful reading and many constructive suggestions.
\end{acks}

%% Bibliography
\bibliography{main}

\newpage
\appendix
\section{Proofs}\label{appendix:props}

\subsection{Type Transformers}

\begin{lemma}[Monotonicity of Substitution]\label{ap:sub-monotone}
If $\sigma' \sub \sigma$ then $\sapp{\sigma'}{B} \sub \sapp{\sigma}{B}$.
\end{lemma}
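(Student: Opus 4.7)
\textbf{Proof plan for Lemma~\ref{ap:sub-monotone}.}
The statement is that substitution application is monotone in its substitution argument. I would prove it by unfolding the definition of $\sub$ on substitutions to obtain a witness $\rho$, and then showing by structural induction on the base type $B$ that the \emph{same} $\rho$ witnesses the subsumption relation $\sapp{\sigma'}{B}\sub\sapp{\sigma}{B}$.

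Concretely, from the hypothesis $\sigma'\sub\sigma$ I would extract a substitution $\rho$ such that $\sapp{\sigma'}{\tau}=\sapp{\rho}{\sapp{\sigma}{\tau}}$ for every type variable $\tau$. The main claim I would then establish is the stronger pointwise equality
\[
\sapp{\sigma'}{B}\;=\;\sapp{\rho}{\sapp{\sigma}{B}}
\]
for every $B\in\basebots$, from which $\sapp{\sigma'}{B}\sub\sapp{\sigma}{B}$ follows immediately by the definition of $\sub$ with $\rho$ as witness. The proof of the equality proceeds by induction on the syntax of $B$. In the variable case $B=\tau$, the equality is exactly the hypothesis on $\rho$. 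In the constructor case $B=C\,\many{B_i}$, substitution commutes with constructor application, so
\[
\sapp{\sigma'}{C\,\many{B_i}} \;=\; C\,\many{\sapp{\sigma'}{B_i}}
\;=\; C\,\many{\sapp{\rho}{\sapp{\sigma}{B_i}}}
\;=\; \sapp{\rho}{\sapp{\sigma}{C\,\many{B_i}}},
\]
where the middle step uses the induction hypothesis componentwise.

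For completeness I would also dispatch the degenerate case $B=\bot$: by convention $\sapp{\sigma}{\bot}=\bot$ and $\sapp{\rho}{\bot}=\bot$, so the equality holds trivially, and likewise if $\sigma$ or $\sigma'$ happens to be $\sigma_\bot$ one checks the equality directly against the defining clause $\sapp{\sigma_\bot}{B}=\bot$, using that the hypothesis forces $\sapp{\sigma'}{\tau}=\bot$ on all variables whenever $\sapp{\sigma}{\tau}=\bot$.

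I do not expect any serious obstacle here: the lemma is essentially the statement that composition of substitutions distributes through structural recursion on types, and the only place requiring any care is making sure the corner cases involving $\bot$ are consistent with the conventions set up earlier in the paper. The key structural ingredient is simply that substitution application commutes with type-constructor application, which is built into the standard definition of $\sapp{\sigma}{\cdot}$.
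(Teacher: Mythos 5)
Your proposal is correct. The paper actually states Lemma~\ref{ap:sub-monotone} without any proof, treating it as a standard fact about substitutions, so there is no argument of the paper's to compare against; your strategy of extracting the witness $\rho$ from $\sigma'\sub\sigma$ and proving the stronger pointwise equality $\sapp{\sigma'}{B}=\sapp{\rho}{\sapp{\sigma}{B}}$ by structural induction on $B$ is exactly the standard way to discharge it, and your handling of the $\bot$ and $\sigma_\bot$ corner cases is consistent with the conventions the paper sets up ($\sapp{\sigma_{\bot}}{B}=\bot$ for any $B$).
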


\begin{lemma}[Monotonicity of Unification]\label{ap:mgu-monotone}
If $B_1 \sub B_2$ then $\mgu{B,B_1} \sub \mgu{B,B_2}$
\end{lemma}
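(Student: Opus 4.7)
The plan is to invoke the universal property of the MGU. Let $\sigma_1 = \mgu{B,B_1}$ and $\sigma_2 = \mgu{B,B_2}$. Since $B_1 \sub B_2$, fix a substitution $\theta$ with $B_1 = \sapp{\theta}{B_2}$; without loss of generality (by renaming), assume $\dom{\theta} \subseteq \tvars{B_2}$ and that $\dom{\theta}$ is disjoint from $\tvars{B}$. I would then split into two cases depending on whether $B$ and $B_1$ admit a common unifier.

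In the degenerate case that $B$ and $B_1$ are not unifiable, $\sigma_1 = \sigma_\bot$, and $\sigma_\bot \sub \sigma_2$ holds trivially: composing $\sigma_\bot$ on the outside of any substitution yields $\sigma_\bot$ again, so the witness $\rho = \sigma_\bot$ suffices.

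For the main case, I would construct a substitution $\rho$ such that $\rho \circ \sigma_2$ is a unifier of $B$ and $B_1$, and then conclude $\sigma_1 \sub \rho \circ \sigma_2 \sub \sigma_2$ by combining the universal property of $\sigma_1$ with the fact that $\rho \circ \sigma_2 \sub \sigma_2$ holds by definition. To build $\rho$: since $\sigma_2$ already unifies $B$ and $B_2$, we have $\sapp{\sigma_2}{B} = \sapp{\sigma_2}{B_2}$, so it suffices that $\rho$ unify $\sapp{\sigma_2}{B_2}$ with $\sapp{\sigma_2}{(\theta B_2)}$, because then $(\rho \circ \sigma_2)(B) = \rho(\sapp{\sigma_2}{B_2}) = \rho(\sapp{\sigma_2}{\theta B_2}) = (\rho \circ \sigma_2)(B_1)$. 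I would take $\rho$ to be the MGU of this residual pair; its existence follows from the assumption that $B$ and $B_1$ are unifiable, since any unifier $\eta$ of $B$ and $B_1$ that also unifies $B$ and $B_2$ factors through $\sigma_2$ and thereby yields an explicit unifier of the residual terms.

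The hard part will be the fine print of the pointwise definition $\sigma' \sub \sigma \triangleq \exists \rho.\forall\tau.\sapp{\sigma'}{\tau} = \sapp{\rho}{\sapp{\sigma}{\tau}}$: because $\sigma_2$ may introduce variables from $\tvars{B_2}$ into its range that do not appear in $\tvars{B_1}$, a canonical representative of $\sigma_1$ may leave such variables untouched, while $\rho \circ \sigma_2$ instantiates them. I would handle this by following the paper's own convention of identifying MGUs up to $\equiv$ and choosing canonical representatives whose domains are contained in the variables of the terms being unified; under this convention the comparison of $\sigma_1$ with $\rho \circ \sigma_2$ is carried out only on the relevant variables $\tvars{B} \cup \tvars{B_1}$, and the construction above yields the desired witness. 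This is precisely the form of the lemma needed downstream by \autoref{lemma:trans-monotone} and \autoref{lemma:trans-sound}, where $\mgu{\cdot}$ is only ever applied to a fixed term $B$.
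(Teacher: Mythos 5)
The paper states this lemma in the appendix without any proof, so there is no authorial argument to compare yours against; I can only assess the proposal on its own terms. Your skeleton is the standard one, the degenerate case is handled correctly, and your closing observation is genuinely important: under the literal pointwise definition $\sigma'\sub\sigma \triangleq \exists\rho.\forall\tau.\sapp{\sigma'}{\tau} = \sapp{\rho}{\sapp{\sigma}{\tau}}$ the statement is in fact false as written (take $B = \tau_0$, $B_2 = \T{L}\ \tau_1$, $B_1 = \T{L A}$: a witnessing $\rho$ would have to send $\tau_1$ both to \T{A} and to itself), so restricting the comparison to the variables of the terms being unified is not optional fine print but a necessary repair of the statement.

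The genuine gap is in the main case: the inequality $\sigma_1 \sub \rho\circ\sigma_2$ does not follow from ``the universal property of $\sigma_1$.'' That property says every unifier of $\{B, B_1\}$ is an \emph{instance} of $\sigma_1$; applied to the unifier $\rho\circ\sigma_2$ it yields $\rho\circ\sigma_2 \sub \sigma_1$, which is the converse of what you need. To close the argument you must show that $\rho\circ\sigma_2$ is itself \emph{most general} among unifiers of $\{B, B_1\}$ (on the relevant variables), and for that you need the one construction the proposal never writes down: extend $\sigma_1$ by $\theta$ on $\dom{\theta}$ (disjoint from $\tvars{B}\cup\tvars{B_1}$ once $\theta$ is chosen idempotent, with renamings absorbed into $\equiv$), observe that this extension unifies $B$ with $B_2$ as well as with $B_1$ while agreeing with $\sigma_1$ on $\tvars{B}\cup\tvars{B_1}$, and then factor it first through $\sigma_2$ (universality of $\sigma_2$) and then through $\rho$ (universality of $\rho$ for the residual pair). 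The same extension is also what underwrites the existence of $\rho$, where you presuppose ``any unifier of $B$ and $B_1$ that also unifies $B$ and $B_2$'' without exhibiting one. None of this is fatal --- the route does go through --- but as written the pivotal step derives the wrong inequality.
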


\begin{lemma}[Monotonicity of Type Transformers]\label{ap:trans-monotone}
For any component $c$ and any types $\many{B^1_i}$ and $\many{B^2_i}$,
such that $\many{B^1_i \sub B^2_i}$,
we have $\sem{c}(\many{B^1_i}) \sub \sem{c}(\many{B^2_i})$.
\end{lemma}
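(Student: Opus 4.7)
The plan is to reduce the claim to the two preceding lemmas: Monotonicity of Substitution (\autoref{ap:sub-monotone}) and Monotonicity of Unification (\autoref{ap:mgu-monotone}). Unfolding the definition of the type transformer, we have $\sem{c}(\many{B^k_i}) = \sapp{\sigma_k}{B'}$ for $k \in \{1,2\}$, where $\fresh{\ty{c}} = \many{B'_i} \to B'$ and $\sigma_k = \mgu{\many{B^k_i, B'_i}}$. So the goal $\sapp{\sigma_1}{B'} \sub \sapp{\sigma_2}{B'}$ reduces, by \autoref{ap:sub-monotone}, to the intermediate claim $\sigma_1 \sub \sigma_2$.

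The bulk of the proof then is establishing a sequence-level version of \autoref{ap:mgu-monotone}: if $\many{B^1_i \sub B^2_i}$, then $\mgu{\many{B^1_i, B'_i}} \sub \mgu{\many{B^2_i, B'_i}}$. I would prove this by induction on the length $n$ of the sequence. The base case $n = 0$ is trivial since both MGUs are the identity. For the inductive step, I would rely on the standard characterization of MGU over a sequence as iterated unification-with-composition: $\mgu{(B_1,B'_1),\ldots,(B_n,B'_n)} \equiv \mgu{\sapp{\rho}{B_n}, \sapp{\rho}{B'_n}} \circ \rho$ where $\rho = \mgu{(B_1,B'_1),\ldots,(B_{n-1},B'_{n-1})}$. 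By induction hypothesis, the ``prefix'' unifier for the $B^1$ sequence is more specific than that for the $B^2$ sequence; by \autoref{ap:sub-monotone}, applying each to $B^k_n$ and $B'_n$ preserves this ordering; and by \autoref{ap:mgu-monotone} together with monotonicity of composition, the final unifiers remain ordered.

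The main obstacle I expect is this generalization of \autoref{ap:mgu-monotone} from a single pair to a sequence, particularly handling the variable disjointness assumptions cleanly so that the iterative-composition characterization goes through and so that equivalences up to $\equiv$ (renaming) are respected throughout. One must also take care that $\sigma_{\bot}$ is handled uniformly: if $\mgu{\many{B^2_i, B'_i}} = \sigma_{\bot}$, the claim holds vacuously since $\sigma_{\bot}$ is the top of the substitution order; otherwise, the inductive argument on proper unifiers goes through as sketched. Once the sequence-level MGU monotonicity is in hand, the top-level lemma follows in one line by composing it with \autoref{ap:sub-monotone} applied to $B'$.
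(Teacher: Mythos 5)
Your proposal takes essentially the same route as the paper: unfold the transformer definition, establish $\sigma_1 \sub \sigma_2$ via monotonicity of unification, and conclude with monotonicity of substitution. The only difference is that you explicitly prove the sequence-level generalization of \autoref{ap:mgu-monotone} by induction, whereas the paper invokes the single-pair lemma directly on the sequence of pairs without comment---your version is the more careful one, and the obstacle you flag (handling $\equiv$ and $\sigma_{\bot}$ in the iterated-composition characterization) is real but routine.
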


\begin{proof}
    Let $\fresh{\ty{c}} = \many{B_i} \to B$.
    By the definition of type transformers, we have
    \begin{align*}
        \sem{c}(\many{B^1_i}) &= \sapp{\sigma_1}{B} & \sigma_1 = \mgu{\many{B_i, B^1_i}} \\
        \sem{c}(\many{B^2_i}) &= \sapp{\sigma_2}{B} & \sigma_2 = \mgu{\many{B_i, B^2_i}}
    \end{align*}
    Since $\many{B^1_i \sub B^2_i}$,
    we have $\sigma_1 \sub \sigma_2$ by \autoref{ap:mgu-monotone},
    and hence $\sapp{\sigma_1}{B} \sub \sapp{\sigma_2}{B}$ by \autoref{ap:sub-monotone}.
\end{proof}

\begin{lemma}[Soundness of Type Transformers]\label{ap:trans-sound}
If $\fresh{\ty{c}} = \many{B_i} \to B$ and
$\many{\sapp{\sigma}{B_i} \sub B'_i}$
then $\sapp{\sigma}{B} \sub \sem{c}(\many{B'_i})$.
\end{lemma}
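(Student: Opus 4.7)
The plan is to exhibit an explicit unifier of the argument pairs that mirrors the concrete substitution $\sigma$ on the component side, and then appeal to the universal property of the MGU together with \autoref{ap:sub-monotone}. Before starting, I would note a scoping convention: since $\fresh{\ty{c}}$ introduces fresh variables, and since the $B'_i$ are base types drawn from the abstract cover where (as stated in \autoref{sec:check}) each $B'_i$ has its own scope of type variables, we may assume WLOG (by $\alpha$-renaming) that the type variables of $B$, $\many{B_i}$, and each $B'_i$ are pairwise disjoint and disjoint from the domain of $\sigma$.

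Next, from the hypothesis $\many{\sapp{\sigma}{B_i}\sub B'_i}$, the definition of $\sub$ gives, for each $i$, a substitution $\rho_i$ with $\sapp{\sigma}{B_i}=\sapp{\rho_i}{B'_i}$. Because the $B'_i$ have disjoint variables and neither share variables with $\dom{\sigma}$ nor with $\many{B_i}$, I would combine the $\rho_i$ with $\sigma$ into a single substitution $\rho$ defined to agree with $\sigma$ on variables of $\many{B_i}$ (and of $B$), and with $\rho_i$ on variables of each $B'_i$. Then for every $i$,
\[
\sapp{\rho}{B_i} \;=\; \sapp{\sigma}{B_i} \;=\; \sapp{\rho_i}{B'_i} \;=\; \sapp{\rho}{B'_i},
\]
so $\rho$ is a unifier of the sequence $\many{B_i, B'_i}$.

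By the MGU property, $\rho \sub \sigma'$ where $\sigma' = \mgu{\many{B_i, B'_i}}$. Applying \autoref{ap:sub-monotone} to the type $B$ gives $\sapp{\rho}{B}\sub \sapp{\sigma'}{B}$. By construction of $\rho$ (it agrees with $\sigma$ on the variables of $B$) we have $\sapp{\rho}{B}=\sapp{\sigma}{B}$, and by definition $\sapp{\sigma'}{B} = \sem{c}(\many{B'_i})$. Chaining these identities and the subsumption yields $\sapp{\sigma}{B}\sub \sem{c}(\many{B'_i})$, as required.

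The only subtle step is the variable-disjointness setup that lets the per-index witnesses $\rho_i$ be glued into a single substitution $\rho$ which simultaneously extends $\sigma$; everything afterward is a direct application of the MGU universal property and the already-established monotonicity lemma. I expect this bookkeeping about scopes of type variables in the abstract cover to be the main conceptual hazard, while the algebra itself is routine.
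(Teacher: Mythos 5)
Your proof is correct, but it takes a genuinely different route from the paper's. The paper's proof is a two-step argument built on \autoref{ap:trans-monotone}: from $\many{\sapp{\sigma}{B_i} \sub B'_i}$ it concludes $\sem{c}(\many{\sapp{\sigma}{B_i}}) \sub \sem{c}(\many{B'_i})$ by monotonicity of the transformer, and then closes by asserting the identity $\sem{c}(\many{\sapp{\sigma}{B_i}}) = \sapp{\sigma}{B}$ (justified by the observation that $\mgu{\many{B_i, \sapp{\sigma}{B_i}}}$ is essentially $\sigma$). You instead bypass transformer monotonicity entirely: you unfold the definition of $\sem{c}$, glue the per-index witnesses $\rho_i$ and $\sigma$ into an explicit unifier $\rho$ of $\many{B_i, B'_i}$, and then invoke only the universal property of the MGU together with \autoref{ap:sub-monotone}. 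What your approach buys is self-containedness and a cleaner treatment of result variables of $c$ that do not occur in any argument $B_i$ (your $\rho$ explicitly agrees with $\sigma$ on those, whereas the paper's asserted equality is really only a subsumption in that corner case); what the paper's approach buys is brevity, reusing a lemma it needs anyway for \autoref{ap:refine-infer}. One nit: your opening claim that the variables of $B$, $\many{B_i}$, and each $B'_i$ are \emph{pairwise} disjoint is too strong --- $B$ typically shares variables with the $B_i$ since they come from the same signature --- but your actual construction only relies on the $B'_i$ being renamed apart from each other and from the fresh component variables, which is exactly what the freshness of $\fresh{\ty{c}}$ and the per-type scoping of cover variables provide, so this is a wording slip rather than a gap.
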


\begin{proof}
    By \autoref{ap:trans-monotone}, since $\many{\sapp{\sigma}{B_i} \sub B'_i}$,
    we have $\sem{c}(\many{\sapp{\sigma}{B_i}}) \sub \sem{c}(\many{B'_i})$.
    But $\sem{c}(\many{\sapp{\sigma}{B_i}}) = \sapp{\sigma}{B}$,
    because $\mgu{B, \sapp{\sigma}{B}}\equiv \sigma$.
    Hence $\sapp{\sigma}{B} \sub \sem{c}(\many{B'_i})$ as desired.
\end{proof}

\subsection{Algorithmic Typing}

\begin{figure}
\textbf{Typing}\quad$\boxed{\jtyping{\Gamma}{\nex}{t}}$
\small
\begin{gather*}
\inference[\textsc{T-Var}]
{\Gamma(x) = b}
{\jtyping{\Gamma}{x}{b} }
\\
\inference[\textsc{T-App}]
{ \ty{c} = \many{\forall\tau}.T &
  \sapp{\sigma}{T} = \many{b_i}\to b &
  \many{\jtyping{\Gamma}{e_i}{b_i}}}
{\jtyping{\Gamma}{\eapp{c}{\many{e_i}}}{b}}
\\
\inference[\textsc{T-Fun}]
{ \jtyping{\Gamma, x:b}{\nex}{t} }
{\jtyping{\Gamma}{\elam{x}{\nex}}{b \to t}}
\end{gather*}
\caption{Uncurried version of declarative typing.}\label{fig:typing-uncurry}
\end{figure}

In the following, we use a version the declarative type system in \autoref{fig:typing-uncurry},
which is defined over uncurried applications,
and hence more closely matches algorithmic typing.
It is straightforward to show that for terms in $\eta$-long form,
this type system is equivalent to the one in \autoref{fig:lang}.

\begin{lemma}[Soundness of inference]\label{ap:aux-left}
  If $\jtinfer{\Gamma}{e}{B}$ and $b \sub B$, then $\jtyping{\Gamma}{e}{b}$.
\end{lemma}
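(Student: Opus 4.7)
The plan is to prove this by induction on the structure of the application term $e$, which corresponds to induction on the derivation of $\jtinfer{\Gamma}{e}{B}$ (using the concrete specialization of \autoref{fig:abstract-typing} where $\alpha_{\abset}$ is the identity).

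For the base case, $e = x$ is a variable and the \textsc{I-Var} rule yields $B = \Gamma(x)$. Since environments map variables to ground base types, $B$ is itself ground. From $b \sub B$ and groundness of $B$, we obtain $b = B$, so \textsc{T-Var} immediately produces $\jtyping{\Gamma}{x}{b}$.

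For the inductive case, $e = \eapp{c}{\many{e_i}}$ with $\jtinfer{\Gamma}{e_i}{B_i}$ for each $i$ and $B = \sem{c}(\many{B_i})$. Unfolding the transformer, we have $\fresh{\ty{c}} = \many{B'_i} \to B'$ and $B = \sapp{\sigma_0}{B'}$ with $\sigma_0 = \mgu{\many{B_i, B'_i}}$, so in particular $\sapp{\sigma_0}{B_i} = \sapp{\sigma_0}{B'_i}$. From $b \sub B$ we obtain $\rho$ with $b = \sapp{\rho}{\sapp{\sigma_0}{B'}}$; let $\sigma_1 = \rho \circ \sigma_0$. I then extend $\sigma_1$ to $\sigma$ by mapping any type variables still free in $\sapp{\sigma_1}{B_i}$ or $\sapp{\sigma_1}{B'_i}$ to an arbitrary ground type (e.g.\ any ground subterm of $b$, which is guaranteed to exist). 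Setting $b_i \triangleq \sapp{\sigma}{B_i} = \sapp{\sigma}{B'_i}$ gives ground types with $b_i \sub B_i$, so the induction hypothesis applied to each $e_i$ yields $\jtyping{\Gamma}{e_i}{b_i}$. Finally, composing $\sigma$ with the fresh-variable renaming used in $\fresh{\ty{c}}$ produces a substitution $\sigma^*$ with $\sapp{\sigma^*}{T} = \many{b_i} \to b$ for $\ty{c} = \many{\forall\tau}.T$, which is exactly what the uncurried \textsc{T-App} rule in \autoref{fig:typing-uncurry} requires to derive $\jtyping{\Gamma}{\eapp{c}{\many{e_i}}}{b}$.

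The main technical obstacle I anticipate is the extension step from $\sigma_1$ to a ground substitution $\sigma$: we must argue that type variables left free after unification and the $\rho$-specialization can always be ground-instantiated without breaking the equality $\sapp{\sigma}{B_i} = \sapp{\sigma}{B'_i}$ or the identity $\sapp{\sigma}{B'} = b$. Since fresh variables introduced by $\fresh{\cdot}$ are disjoint from those in $B_i$ and the equalities above are preserved by any further substitution, this is routine but needs to be spelled out carefully. The other minor bookkeeping issue is translating between the curried $\eta$-long form used in \autoref{fig:lang} and the uncurried application rule of \autoref{fig:typing-uncurry}, but this is straightforward given that the two presentations are equivalent on $\eta$-long terms.
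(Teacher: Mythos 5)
Your proposal is correct and follows essentially the same route as the paper's proof: induction on the inference derivation, with the variable case closed by groundness of $\Gamma(x)$, and the application case handled by unfolding $\sem{c}$ via the most general unifier, specializing with the witness of $b \sub B$, grounding the residual type variables to obtain the $b_i$, invoking the induction hypothesis, and reassembling with the uncurried \textsc{T-App} rule. The grounding step you flag as the main obstacle is exactly the step the paper handles with its auxiliary substitution $\rho$, and your justification (the unification equalities and $\sapp{\sigma}{B'} = b$ are stable under further substitution since $b$ is already ground) matches the paper's reasoning.
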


\begin{proof}
    By induction on the derivation of $\jtinfer{\Gamma}{e}{B}$.
    \begin{itemize}
       \item Case \textsc{I-Var}: Given the conclusion $\jtinfer{\Gamma}{x}{b'}$,
       we get to assume $\Gamma(x) = b'$.
       Consequently, applying \textsc{T-Var}, we get $\jtyping{\Gamma}{x}{b'}$.
       But since $b'$ is ground and $b \sub b'$, we have $b' = b$.

       \item Case \textsc{I-App}: Given the conclusion \jtinfer{\Gamma}{\eapp{c}{\many{e_i}}}{\sem{c}(\many{B_i})},
       we get to assume $\many{\jtinfer{\Gamma}{e_i}{B_i}}$.
       Assume $\fresh{\ty{c}} = \many{B'_i}\to B'$.
       By definition of $\sem{c}$ we have $\sem{c}(\many{B_i}) = \sapp{\sigma_1}{B'}$ where $\sigma_1 = \mgu{\many{B'_i,B_i}}$.
       By the assumption of the theorem $b \sub B = \sem{c}(\many{B_i})$,
       and hence there exists $\sigma_2$ such that
       $b = \sapp{\sigma_2}{(\sem{c}(\many{B_i}))} = \sapp{(\sigma_2\circ\sigma_1)}{B'}$.

       Let $\rho$ be some substitution such that $\sapp{(\rho\circ\sigma_2\circ\sigma_1)}{B'_i}$,
       is ground for all for all $B'_i$;
       we will denote these ground types $b_i$.
       Then each $b_i = \sapp{(\rho\circ\sigma_2)}{(\sapp{\sigma_1}{B'_i})} = \sapp{(\rho\circ\sigma_2)}{(\sapp{\sigma_1}{B_i})}$
       (since $\sigma_1$ is a unifier),
       and hence $b_i \sub B_i$.
       Then by the IH, we can show for all $e_i$:
       \jtyping{\Gamma}{e_i}{b_i} (1).

       Finally, let us use \textsc{T-App} to construct the derivation of
       $\jtyping{\Gamma}{\eapp{c}{\many{e_i}}}{b}$
       The middle premise follows with $\sigma = \rho\circ\sigma_2\circ\sigma_1$
       (note that $\sapp{\sigma}{B'} = \sapp{\rho}{b} = b$,
       since $b$ is already ground).
       We get the last premise by (1).
    \end{itemize}
   \end{proof}

\begin{lemma}[Completeness of Inference]\label{ap:aux-right}
    If $\jtyping{\Gamma}{e}{b}$, then $\exists B . \jtinfer{\Gamma}{e}{B}$ and $b \sub B$.
\end{lemma}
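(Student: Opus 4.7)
The plan is to prove the lemma by straightforward induction on the derivation of $\jtyping{\Gamma}{e}{b}$, using the uncurried declarative system in \autoref{fig:typing-uncurry}. Since $e$ is an application term (not a $\lambda$-abstraction), the derivation must conclude with either \textsc{T-Var} or \textsc{T-App}, giving just two cases to handle. The proof mirrors \autoref{ap:aux-left} in structure, but runs ``in the other direction'': where soundness used Lemma~\ref{ap:aux-left}'s construction of a grounding substitution for the principal type, completeness will use \autoref{ap:trans-sound} to push the declarative instance up to the inferred one.

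The \textsc{T-Var} case is immediate: if $e = x$ with $\Gamma(x) = b$, then \textsc{I-Var} yields $\jtinfer{\Gamma}{x}{b}$, and $b \sub b$ holds trivially. The \textsc{T-App} case is the substantive one: we have $e = \eapp{c}{\many{e_i}}$ with $\ty{c} = \many{\forall\tau}.T$, a substitution $\sigma$ such that $\sapp{\sigma}{T} = \many{b_i} \to b$, and sub-derivations $\many{\jtyping{\Gamma}{e_i}{b_i}}$. Applying the inductive hypothesis to each $e_i$, I would obtain types $\many{B_i}$ with $\many{\jtinfer{\Gamma}{e_i}{B_i}}$ and $\many{b_i \sub B_i}$. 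Rule \textsc{I-App} then gives $\jtinfer{\Gamma}{\eapp{c}{\many{e_i}}}{\sem{c}(\many{B_i})}$, so it remains only to show $b \sub \sem{c}(\many{B_i})$.

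To discharge this subsumption, I would invoke \autoref{ap:trans-sound} (Soundness of Type Transformers). Writing $\fresh{\ty{c}} = \many{B'_i} \to B'$, the declarative $\sigma$ applied to $T$ can be re-expressed as a substitution $\sigma'$ applied to $\fresh{T}$, simply by composing $\sigma$ with the freshening renaming that sends $\many{\tau}$ to the fresh copies $\many{\tau'}$; this gives $\sapp{\sigma'}{B'_i} = b_i$ and $\sapp{\sigma'}{B'} = b$. Combining with $b_i \sub B_i$ from the IH yields $\sapp{\sigma'}{B'_i} \sub B_i$ for each $i$, so \autoref{ap:trans-sound} concludes $b = \sapp{\sigma'}{B'} \sub \sem{c}(\many{B_i})$, exactly as desired. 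The one subtlety---and the main piece of bookkeeping---is this reconciliation between the arbitrary instantiating substitution used by \textsc{T-App} in the declarative system and the fresh instance baked into $\sem{c}$ by the algorithmic rule \textsc{I-App}; it is routine once one observes that the two parameterize over the same bound variables up to $\alpha$-renaming, so any declarative instance factors through the fresh one.
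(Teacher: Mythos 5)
Your proof is correct and follows essentially the same route as the paper's: induction on the declarative derivation with two cases, using \textsc{I-App} plus the soundness of type transformers (\autoref{ap:trans-sound}) to discharge $b \sub \sem{c}(\many{B_i})$. The only difference is that you make the reconciliation between the declarative instantiating substitution and the fresh instance in $\sem{c}$ explicit, whereas the paper silently identifies $T$ with $\fresh{\ty{c}}$; your version is if anything slightly more careful.
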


\begin{proof}
    By induction on the derivation of $\jtyping{\Gamma}{e}{b}$.
    \begin{itemize}
    \item Case \textsc{T-Var}:
    Given the conclusion $\jtyping{\Gamma}{x}{b}$, we get to assume $\Gamma(x) = b$.
    Using \textsc{I-Var} we build the derivation $\jtinfer{\Gamma}{x}{b}$.
    Trivially $b \sub b$ by reflexivity of $\sub$.

    % \item Case \textsc{Comp}: suppose that $\jtyping{\Gamma}{c}{\sapp{[\many{b/\tau}]}{T}}$.
    % Meanwhile, $\sapp{[\many{b/\tau}]}{T}$ is of type $b$ from our assumption.
    % Trivially, we have $\sapp{[\many{b/\tau}]}{T} \sub T$.
    % By our inductive hypothesis, when $\sapp{[\many{b/\tau}]}{T} \equiv b$, we have $\jtinfer{\Gamma}{c}{T}$, which completes the proof.

    \item Case \textsc{T-App}:
    Given the conclusion \jtyping{\Gamma}{\eapp{c}{\many{e_i}}}{b},
    we get to assume for each $i$: \jtyping{\Gamma}{e_i}{b_i},
    where $\many{b_i}\to b = \sapp{\sigma}{T}$ and $T = \many{B'_i}\to B' = \fresh{\ty{c}}$.
    By IH we obtain $\many{B_i}$,
    such that \jtinfer{\Gamma}{e_i}{B_i} and $b_i \sub B_i$, for each $i$.
    Using \textsc{I-Var} we build the derivation $\jtinfer{\Gamma}{\eapp{c}{\many{e_i}}}{\sem{c}(\many{B_i})}$.
    It remains to show that $b \sub \sem{c}(\many{B_i})$.

    Since $\many{b_i \sub B_i}$ and $b_i = \sapp{\sigma}{B'_i}$,
    we get $\many{\sapp{\sigma}{B'_i} \sub B_i}$.
    Hence we can apply \autoref{ap:trans-sound} to conclude
    $\sapp{\sigma}{B'} \sub \sem{c}(\many{B_i})$,
    but $\sapp{\sigma}{B'} = b$, so we obtain the desired conclusion.

    \item Case \textsc{T-Fun}: impossible since $b$ is a base type.
    \end{itemize}
\end{proof}

\begin{lemma}[Soundness of Checking]\label{ap:algo-sound}
If \jtcheck{\Gamma}{\nex}{t}, then \jtyping{\Gamma}{\nex}{t}.
\end{lemma}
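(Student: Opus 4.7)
The plan is to prove the lemma by straightforward induction on the derivation of $\jtcheck{\Gamma}{\nex}{t}$, leveraging the already-established \autoref{ap:aux-left} to handle the interesting case. Since the checking judgment has only two rules (\textsc{C-Fun} and \textsc{C-Base}), the case analysis is small, and each case is almost immediate given the lemmas we have in hand.

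First I would handle the \textsc{C-Fun} case. Here $\nex = \elam{x}{\nex'}$ and $t = b \to t'$, and the premise is $\jtcheck{\Gamma, x:b}{\nex'}{t'}$. Applying the inductive hypothesis yields $\jtyping{\Gamma, x:b}{\nex'}{t'}$, and a single application of the declarative rule \textsc{T-Fun} concludes $\jtyping{\Gamma}{\elam{x}{\nex'}}{b \to t'}$, which is the desired judgment. This case is purely structural and requires no additional reasoning about types.

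Next I would handle the \textsc{C-Base} case, which is where the real content lives. Here the premises are $\jtinfer{\Gamma}{e}{B}$ together with $t \sub B$ (noting that the checking rule requires $t$ to be a ground base type $b$). This is exactly the hypothesis of \autoref{ap:aux-left} (Soundness of inference), which directly yields $\jtyping{\Gamma}{e}{t}$. Note that the inductive hypothesis is not actually invoked in this case, since \textsc{C-Base} has no checking premise; the work has been entirely discharged by the auxiliary soundness-of-inference lemma proved earlier.

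I expect no real obstacle here: \autoref{ap:aux-left} already bridges the algorithmic inference judgment to the declarative typing judgment and accounts for the subsumption slack introduced by $b \sub B$, so this lemma is essentially a packaging step that lifts soundness from application terms to normal-form terms by handling the outer $\lambda$-abstractions via \textsc{T-Fun}. The only minor subtlety worth mentioning is that, because the declarative system in \autoref{fig:typing-uncurry} is equivalent to the one in \autoref{fig:lang} for $\eta$-long terms, and because the checking rules only ever reach an application term through \textsc{C-Base}, the inductive structure lines up cleanly with no mismatched shapes between the two judgments.
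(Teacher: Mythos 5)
Your proposal matches the paper's proof exactly: induction on the checking derivation, with \textsc{C-Base} discharged directly by \autoref{ap:aux-left} and \textsc{C-Fun} handled by the inductive hypothesis followed by \textsc{T-Fun}. No differences worth noting.
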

\begin{proof}
By induction on the derivation of \jtcheck{\Gamma}{E}{t}.
\begin{itemize}
   \item Case \textsc{C-Base}: Given the conclusion \jtcheck{\Gamma}{e}{b},
   we get to assume \jtinfer{\Gamma}{e}{B} and $b \sub B$.
   By \autoref{ap:aux-left}, we directly obtain \jtyping{\Gamma}{e}{b}.

   \item Case \textsc{C-Fun}: Given the conclusion \jtcheck{\Gamma}{\elam{x}{\nex}}{b \to t},
   we get to assume \jtcheck{\Gamma, x:b}{\nex}{t}.
   By IH, we get \jtyping{\Gamma, x:b}{\nex}{t}.
   Hence we use \textsc{T-Fun} to obtain \jtyping{\Gamma}{\elam{x}{e}}{b \to t}.
\end{itemize}
\end{proof}

\begin{lemma}[Completeness of Checking]\label{ap:algo-complete}
If \jtyping{\Gamma}{\nex}{t} then \jtcheck{\Gamma}{\nex}{t}.
\end{lemma}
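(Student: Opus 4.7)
The plan is to prove this by induction on the derivation of \jtyping{\Gamma}{\nex}{t}, mirroring the structure of the soundness direction (\autoref{ap:algo-sound}) but appealing to completeness of inference (\autoref{ap:aux-right}) in place of its soundness counterpart (\autoref{ap:aux-left}). Since the declarative typing judgment is stratified through the uncurried rules of \autoref{fig:typing-uncurry}, there are exactly three cases to consider: T-Var, T-App, and T-Fun, each producing a different shape of conclusion.

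First I would dispatch the T-Fun case, which is the only rule that can conclude a function-typed judgment. Given \jtyping{\Gamma}{\elam{x}{\nex}}{b \to t} with premise \jtyping{\Gamma, x:b}{\nex}{t}, the inductive hypothesis yields \jtcheck{\Gamma, x:b}{\nex}{t}, and a single application of C-Fun produces the desired \jtcheck{\Gamma}{\elam{x}{\nex}}{b \to t}.

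The remaining cases T-Var and T-App both conclude a judgment of the form \jtyping{\Gamma}{e}{b} where $e$ is an application term and $b$ is a ground base type. Since C-Base is the only checking rule applicable to application terms, I need to exhibit some type $B$ for which \jtinfer{\Gamma}{e}{B} holds together with $b \sub B$. This is precisely the conclusion of \autoref{ap:aux-right}, so a single invocation of that lemma followed by C-Base closes both cases uniformly.

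No genuine obstacle remains at this stage: all of the interesting reasoning --- in particular, the angelic choice of the substitution $\sigma$ in T-App and its reconciliation with the MGU-based inference of I-App via the soundness of type transformers (\autoref{ap:trans-sound}) --- has already been absorbed into \autoref{ap:aux-right}. The present lemma is just the straightforward outer induction that composes those auxiliary results with the bidirectional structure of the checking judgment, and, combined with \autoref{ap:algo-sound}, it establishes the equivalence asserted by \autoref{thm:algo-sound}.
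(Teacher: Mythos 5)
Your proposal is correct and follows essentially the same route as the paper's proof: an induction on the declarative derivation in which \textsc{T-Var} and \textsc{T-App} are discharged uniformly by \autoref{ap:aux-right} followed by \textsc{C-Base}, and \textsc{T-Fun} by the inductive hypothesis and \textsc{C-Fun}. No gaps.
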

\begin{proof}
By induction on the derivation of $\jtyping{\Gamma}{E}{t}$.
\begin{itemize}
\item Case \textsc{T-Var}, \textsc{T-App}:
Given $\jtyping{\Gamma}{e}{b}$, we use \autoref{ap:aux-right}
to derive $\jtinfer{\Gamma}{e}{B}$ and $b \sub B$.
Hence we apply \textsc{C-Base} to obtain $\jtcheck{\Gamma}{e}{b}$.

\item Case \textsc{T-Fun}:
Given the conclusion \jtyping{\Gamma}{\elam{x}{\nex}}{b \to t}
we get to assume \jtyping{\Gamma, x:b}{\nex}{t}.
By IH $\jtcheck{\Gamma, x:b}{E}{t}$.
We apply \textsc{C-Fun} to obtain \jtcheck{\Gamma}{\elam{x}{e}}{b \to t}.
\end{itemize}
\end{proof}

\begin{theorem}[Type Checking is Sound and Complete]\label{ap:algo-sound-complete}
\jtyping{\cdot}{\nex}{t} iff \jtcheck{\cdot}{\nex}{t}.
\end{theorem}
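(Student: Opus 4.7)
The statement is the bidirectional soundness/completeness equivalence between the declarative type system and the algorithmic checking judgment, restricted to the empty environment. Since the two directions have already been established as separate lemmas in the appendix, the plan is simply to assemble them into the combined biconditional.

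My proof proposal is to observe that \autoref{ap:algo-sound} and \autoref{ap:algo-complete} together immediately yield the theorem. Specifically, the forward direction ($\jtyping{\cdot}{\nex}{t} \Rightarrow \jtcheck{\cdot}{\nex}{t}$) is obtained by instantiating the completeness lemma with $\Gamma = \cdot$, while the reverse direction ($\jtcheck{\cdot}{\nex}{t} \Rightarrow \jtyping{\cdot}{\nex}{t}$) follows from the soundness lemma with the same instantiation. No induction is needed at this stage because the two auxiliary lemmas already carry out the inductive work over arbitrary environments.

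The only subtlety worth noting is that the declarative system in \autoref{fig:typing-uncurry} used throughout the appendix is the uncurried version, whereas the theorem as stated refers to the typing judgment from \autoref{fig:lang}. I would include a single sentence pointing out the equivalence of the two formulations on $\eta$-long normal-form terms (which the excerpt already asserts as ``straightforward''), so that the reader sees that the appendix lemmas indeed discharge the obligation of the main-text theorem. There is no serious obstacle: the heavy lifting---\autoref{ap:aux-left} and \autoref{ap:aux-right}, together with the soundness and monotonicity properties of type transformers---has been done in earlier lemmas, and the final theorem is a one-line composition.
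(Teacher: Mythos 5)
Your proposal matches the paper's proof exactly: the paper discharges this theorem in one line by citing the soundness-of-checking and completeness-of-checking lemmas (\autoref{ap:algo-sound} and \autoref{ap:algo-complete}), just as you do. Your additional remark about the equivalence of the curried and uncurried declarative systems on $\eta$-long terms is a reasonable clarification the paper itself asserts informally, but it does not change the argument.
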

\begin{proof}
By \autoref{ap:algo-sound} and \autoref{ap:algo-complete}.
\end{proof}

\subsection{Abstract Typing}

\begin{lemma}[Refinement]\label{ap:refine}
If $B' \sub B$ and $\abset' \refines \abset$, then $\alpha_{\abset'}(B') \sub \alpha_{\abset}(B)$.
\end{lemma}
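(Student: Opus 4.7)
The plan is to prove this by a short direct argument using the definitions of refinement and of the abstraction function, together with transitivity of $\sub$.

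First I would unpack the relevant definitions. Let $A = \alpha_{\abset}(B)$. By the definition of the abstraction function, $A \in \abset$ and $B \sub A$, and moreover $A$ is the $\sub$-least such element of $\abset$. Next, since $\abset' \refines \abset$ means $\abset$ is a sublattice of $\abset'$, we have in particular $\abset \subseteq \abset'$, and therefore $A \in \abset'$.

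Then I would combine $B' \sub B$ (the hypothesis) with $B \sub A$ (just established) using transitivity of the subsumption preorder to obtain $B' \sub A$. This exhibits $A$ as an element of $\abset'$ that subsumes $B'$, so $A$ is a candidate in the set whose minimum defines $\alpha_{\abset'}(B')$. By the minimality clause in the definition of $\alpha_{\abset'}$, we conclude $\alpha_{\abset'}(B') \sub A = \alpha_{\abset}(B)$, which is the desired inequality.

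There is essentially no obstacle here: the proof is a one-line consequence of the definitions once one observes the containment $\abset \subseteq \abset'$ implied by $\abset' \refines \abset$. The only subtle point worth stating explicitly is that $\alpha_{\abset}(B)$ is well defined (unique) for every $B$, which the paper already justifies by closure of the cover under $\meet$; I would simply cite that fact rather than re-prove it.
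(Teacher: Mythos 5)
Your proof is correct and is essentially identical to the paper's: both note that $\abset' \refines \abset$ gives $\abset \subseteq \abset'$, hence $A = \alpha_{\abset}(B) \in \abset'$, and then use $B' \sub B \sub A$ together with the minimality clause in the definition of $\alpha_{\abset'}$ to conclude $\alpha_{\abset'}(B') \sub A$. No gaps.
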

\begin{proof}
Let $A = \alpha_{\abset}(B)$, $A' = \alpha_{\abset'}(B')$.
Because $\abset \subseteq \abset'$, we have $A \in \abset'$.
Since $B' \sub B \sub A$, we have $A' \sub A$ by definition of $\alpha_{\abset'}$.
\end{proof}

\begin{lemma}[Inference Refinement]\label{ap:refine-infer}
If $\abset' \refines \abset$
and \jainfer{\Gamma}{e}{B'}{\abset'},
then \jainfer{\Gamma}{e}{B}{\abset} and $B' \sub B$.
\end{lemma}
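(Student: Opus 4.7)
I would prove the lemma by structural induction on the derivation of $\jainfer{\Gamma}{e}{B'}{\abset'}$, with two cases corresponding to the two inference rules \textsc{I-Var} and \textsc{I-App}. The key supporting results I expect to invoke are the Refinement lemma (\autoref{ap:refine}) and the Monotonicity of Type Transformers (\autoref{ap:trans-monotone}), which together ensure that making the abstraction more precise on the leaves only tightens the inferred type at the root.

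\textbf{Base case (\textsc{I-Var}).} If $e = x$ with $\Gamma(x) = b$, then $B' = \alpha_{\abset'}(b)$. Setting $B \triangleq \alpha_{\abset}(b)$, the rule \textsc{I-Var} immediately gives $\jainfer{\Gamma}{x}{B}{\abset}$, and \autoref{ap:refine} applied to $b \sub b$ together with $\abset' \refines \abset$ yields $B' = \alpha_{\abset'}(b) \sub \alpha_{\abset}(b) = B$.

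\textbf{Inductive case (\textsc{I-App}).} If $e = \eapp{c}{\many{e_i}}$, then each premise $\jainfer{\Gamma}{e_i}{B'_i}{\abset'}$ yields, by the IH, some $B_i$ with $\jainfer{\Gamma}{e_i}{B_i}{\abset}$ and $B'_i \sub B_i$. Monotonicity of the transformer (\autoref{ap:trans-monotone}) then gives $\sem{c}(\many{B'_i}) \sub \sem{c}(\many{B_i})$, and a second application of \autoref{ap:refine} (to this subsumption and $\abset' \refines \abset$) gives
\[
\alpha_{\abset'}\!\left(\sem{c}(\many{B'_i})\right) \sub \alpha_{\abset}\!\left(\sem{c}(\many{B_i})\right).
\]
Taking $B \triangleq \alpha_{\abset}(\sem{c}(\many{B_i}))$, rule \textsc{I-App} derives $\jainfer{\Gamma}{\eapp{c}{\many{e_i}}}{B}{\abset}$, and the displayed inequality is exactly $B' \sub B$.

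\textbf{Main obstacle.} There are no deep obstacles here: the lemma is a routine induction once one notices that both the abstraction function and the type transformer are monotone. The only small subtlety is to avoid trying to equate the abstract values on the two covers directly (they need not be comparable in $\abset'$), and instead chain the two uses of \autoref{ap:refine} through the monotonicity of $\sem{c}$ so that the comparison is made in the ambient lattice $\basebots$. Everything else is just bookkeeping of the inference rules.
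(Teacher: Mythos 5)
Your proposal is correct and follows essentially the same route as the paper's proof: induction on the derivation, with the \textsc{I-Var} case discharged by \autoref{ap:refine} and the \textsc{I-App} case by combining the induction hypothesis with \autoref{ap:trans-monotone} and a second application of \autoref{ap:refine}. Your remark about chaining the comparisons through the ambient lattice $\basebots$ rather than within $\abset'$ is a fair articulation of the (implicit) reason the paper's argument goes through.
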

\begin{proof}
  By induction on the derivation of \jainfer{\Gamma}{e}{B'}{\abset'}.
    \begin{itemize}
        \item Case \textsc{I-Var}: Follows by \autoref{ap:refine}.

        \item Case \textsc{I-App}: Given the conclusion \jainfer{\Gamma}{\eapp{c}{\many{e_i}}}{\alpha_{\abset'}\left(\sem{c}(\many{B'_i})\right)}{\abset'},
        we get to assume \jainfer{\Gamma}{e_i}{B'_i}{\abset'} for all $i$.
        By IH we get \jainfer{\Gamma}{e_i}{B_i}{\abset} and $B'_i \sub B_i$.
        We use \textsc{I-App} to obtain \jainfer{\Gamma}{\eapp{c}{\many{e_i}}}{\alpha_{\abset}\left(\sem{c}(\many{B_i})\right)}{\abset}.
        Furthermore, by \autoref{ap:trans-monotone}, we get $\sem{c}(\many{B'_i}) \sub \sem{c}(\many{B_i})$,
        and hence by \autoref{ap:refine}, $\alpha_{\abset'}\left(\sem{c}(\many{B'_i})\right) \sub \alpha_{\abset}\left(\sem{c}(\many{B_i})\right)$.
    \end{itemize}
\end{proof}

\begin{theorem}[Typing Preservation]\label{ap:refine-check}
If $\abset' \refines \abset$
and $\jacheck{\Gamma}{\nex}{t}{\abset'}$
then $\jacheck{\Gamma}{\nex}{t}{\abset}$.
\end{theorem}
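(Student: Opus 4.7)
The plan is to prove this by straightforward structural induction on the derivation of $\jacheck{\Gamma}{\nex}{t}{\abset'}$. The two rules for abstract checking are \textsc{C-Fun} and \textsc{C-Base}, so there are only two cases to consider. The key observation is that all the nontrivial work has already been done in \autoref{ap:refine-infer}, which handles the inference judgment; the checking judgment only needs to propagate this upward.

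For the \textsc{C-Fun} case, given $\jacheck{\Gamma}{\elam{x}{\nex}}{b\to t}{\abset'}$ with premise $\jacheck{\Gamma,x\colon b}{\nex}{t}{\abset'}$, I would apply the inductive hypothesis to obtain $\jacheck{\Gamma,x\colon b}{\nex}{t}{\abset}$ and then reapply \textsc{C-Fun}. No use of the refinement hypothesis is needed here beyond passing it through.

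For the \textsc{C-Base} case, given $\jacheck{\Gamma}{e}{b}{\abset'}$, the premises give some $B'$ with $\jainfer{\Gamma}{e}{B'}{\abset'}$ and $b \sub B'$. Invoking \autoref{ap:refine-infer} on the abstract inference judgment yields some $B$ such that $\jainfer{\Gamma}{e}{B}{\abset}$ and $B' \sub B$. Transitivity of $\sub$ then gives $b \sub B$, after which \textsc{C-Base} in the coarser cover $\abset$ delivers $\jacheck{\Gamma}{e}{b}{\abset}$.

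I do not anticipate any real obstacles: the structural induction is immediate and the only subtlety, the monotonicity of inferred types under cover refinement, is already isolated in \autoref{ap:refine-infer}. The one thing to be careful about is tracking the direction of subsumption (the inferred type gets \emph{more general} as the cover becomes \emph{coarser}), but this is exactly what preserves the $b \sub B$ side condition of \textsc{C-Base}.
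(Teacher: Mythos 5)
Your proof is correct and follows essentially the same route as the paper's: induction on the checking derivation, with the \textsc{C-Fun} case handled by the inductive hypothesis and the \textsc{C-Base} case discharged by the inference-refinement lemma (\autoref{ap:refine-infer}) plus transitivity of $\sub$. Your closing remark about the direction of subsumption is exactly the point the paper's argument relies on.
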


\begin{proof}
    By induction on the derivation of $\jacheck{\Gamma}{\nex}{t}{\abset'}$.

    \begin{itemize}
        \item Case \textsc{C-Base}:
        Given the conclusion \jacheck{\Gamma}{e}{b}{\abset'},
        we get to assume \jainfer{\Gamma}{e}{B'}{\abset'} and $b \sub B'$.
        Then by \autoref{ap:refine-infer} we get \jainfer{\Gamma}{e}{B}{\abset} and $B' \sub B$, hence $b \sub B$.
        We use \textsc{C-Base} to conclude \jacheck{\Gamma}{e}{b}{\abset}.

        \item Case \textsc{C-Fun}:
        Given the conclusion \jacheck{\Gamma}{\elam{x}{\nex}}{b \to t}{\abset'}
        we get to assume $\jacheck{\Gamma, x:b}{\nex}{t}{\abset'}$.
        By IH we get \jacheck{\Gamma, x:b}{\nex}{t}{\abset}.
        We use \textsc{C-Fun} to colclude \jacheck{\Gamma}{\elam{x}{\nex}}{b \to t}{\abset}.
    \end{itemize}
\end{proof}

% \begin{theorem}\label{thm:over}
    % If $\nex$ is a solution to $(\Lambda, t)$,
    % then $\nex$ is also a solution to $(\Lambda, t, \abset)$.
% \end{theorem}

% \begin{proof}
    % The proof is done trivially by applying \autoref{thm:refine} and \autoref{thm:algo-sound}.
    % The assumption tells that the solution $\nex$ satisfies $\jacheck{\cdot}{\nex}{t}{\basebots}$.
    % Hence by \autoref{thm:refine}, we would have $\jacheck{\cdot}{\nex}{t}{\abset}$ hold for any arbitrary $\abset$.
    % \TODO{Does not really matter if not put this line: Finally, it follows that $\jatyping{\cdot}{\nex}{t}{\abset}$ by \autoref{thm:algo-sound}.}
    % Therefore, $\nex$ is also a solution to $(\Lambda, t, \abset)$ by its definition.
% \end{proof}

\subsection{Synthesis}

\begin{theorem}
The synthesis problem in \corelang is undecidable.
\end{theorem}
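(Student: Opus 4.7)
The plan is to reduce from Post's Correspondence Problem (PCP), which asks, given a finite set of string pairs $(u_1,v_1),\ldots,(u_n,v_n)$ over an alphabet $\Sigma = \{a_1,\ldots,a_m\}$, whether there is a non-empty index sequence $i_1,\ldots,i_k$ such that $u_{i_1}\cdots u_{i_k} = v_{i_1}\cdots v_{i_k}$. Given an arbitrary PCP instance $\mathcal{P}$, we effectively construct a component library $\Lambda_{\mathcal{P}}$ and a ground query type $t_{\mathcal{P}}$ such that $(\Lambda_{\mathcal{P}}, t_{\mathcal{P}})$ has a solution iff $\mathcal{P}$ does.

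The encoding uses the following type constructors: a unary constructor $C_j$ for each letter $a_j \in \Sigma$, a nullary constructor $E$ (end-of-string marker), a binary pair constructor $\mathrm{Pr}$, and a nullary goal constructor $G$. A string $s = a_{j_1}a_{j_2}\cdots a_{j_\ell}$ over a type $\alpha$ is encoded as $\widehat{s}[\alpha] \triangleq C_{j_1}(C_{j_2}(\cdots C_{j_\ell}(\alpha)\cdots))$, and the ground encoding is $\widehat{s} \triangleq \widehat{s}[E]$. The library $\Lambda_{\mathcal{P}}$ then consists of:
\begin{itemize}
\item a nullary component $\mathrm{start} :: \mathrm{Pr}\ E\ E$;
\item for each PCP pair $(u_i,v_i)$, a component $f_i :: \forall\alpha,\beta.\, \mathrm{Pr}\ \alpha\ \beta \to \mathrm{Pr}\ \widehat{u_i}[\alpha]\ \widehat{v_i}[\beta]$; and
\item a ``collapse'' component $\mathrm{cl} :: \forall\gamma.\, \mathrm{Pr}\ \gamma\ \gamma \to G$.
\end{itemize}
The query type is $t_{\mathcal{P}} \triangleq G$.

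For the forward direction, if $\mathcal{P}$ admits a solution $i_1,\ldots,i_k$, then the term $\mathrm{cl}\ (f_{i_1}(f_{i_2}(\cdots f_{i_k}\ \mathrm{start}\cdots)))$ has type $G$ under \textsc{T-App} since each $f_{i_j}$ application prepends $u_{i_j}$ and $v_{i_j}$ to the first and second components of the pair respectively, yielding $\mathrm{Pr}\ \widehat{u_{i_1}\cdots u_{i_k}}\ \widehat{v_{i_1}\cdots v_{i_k}}$, which unifies with $\mathrm{Pr}\ \gamma\ \gamma$ exactly because the two strings agree. For the converse, we argue by induction on the structure of any well-typed normal-form term of type $G$ that it must have the shape $\mathrm{cl}\ e$ where $e :: \mathrm{Pr}\ b\ b$ for some ground $b$, and by a further induction on $e$ that $e$ must be $f_{i_1}(\cdots f_{i_k}\ \mathrm{start}\cdots)$ for some index sequence, forcing $\widehat{u_{i_1}\cdots u_{i_k}} = \widehat{v_{i_1}\cdots v_{i_k}}$ and hence providing a PCP witness.

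The main obstacle is the converse direction, specifically ruling out ``spurious'' inhabitants of $G$ that do not correspond to PCP solutions. The key observations are that (i) $G$ appears only as the output of $\mathrm{cl}$, so any inhabitant of $G$ must top-level apply $\mathrm{cl}$; (ii) $\mathrm{Pr}$ appears as an output only in $\mathrm{start}$ and the $f_i$'s, so any inhabitant of $\mathrm{Pr}\ \alpha\ \beta$ is a chain of $f_i$'s terminating in $\mathrm{start}$; and (iii) the ground types built by such a chain are exactly of the form $\mathrm{Pr}\ \widehat{u_{i_1}\cdots u_{i_k}}\ \widehat{v_{i_1}\cdots v_{i_k}}$. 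Normal-form restriction from \autoref{sec:synth} eliminates $\eta$-expansion trickery, and the absence of recursion or higher-order functions in \corelang prevents any other means of producing a pair. Together these structural facts make the reduction tight, and since PCP is undecidable, so is the synthesis problem in \corelang.
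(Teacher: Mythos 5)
Your overall strategy is the paper's: reduce from PCP by encoding strings as nestings of unary type constructors, using a binary pair constructor to accumulate the two concatenations and a polymorphic ``collapse'' component $\forall\gamma.\,\mathrm{Pr}\ \gamma\ \gamma \to G$ whose unification constraint $\gamma = \gamma$ checks that the two strings coincide. However, there is a concrete bug in your encoding that breaks the converse direction: because you provide a nullary component $\mathrm{start} :: \mathrm{Pr}\ E\ E$, the term $\mathrm{cl}\ \mathrm{start}$ is always a well-typed inhabitant of $G$ ($\mathrm{Pr}\ E\ E$ unifies with $\mathrm{Pr}\ \gamma\ \gamma$ via $\gamma \mapsto E$). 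This corresponds to the \emph{empty} index sequence, which is not a valid PCP witness, so every PCP instance is mapped to a solvable synthesis problem and the reduction proves nothing. Your claimed induction in the converse direction does establish that any inhabitant of $G$ has the shape $\mathrm{cl}\,(f_{i_1}(\cdots f_{i_k}\ \mathrm{start}\cdots))$, but it silently permits $k = 0$.

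The fix is exactly what the paper does: force at least one tile into the base case. The paper's library has, for each pair $(a_i,b_i)$, a monomorphic seed component $s_i :: \T{Start} \to \T{P}\ \wrap{\T{Start}}{a_i}\ \wrap{\T{Start}}{b_i}$ (with \T{Start} inhabited only by the query argument, since the query type is $\T{Start} \to \T{Goal}$), alongside the polymorphic step components $n_i$ and the collapse component. Equivalently, in your nullary-query formulation you could replace the single $\mathrm{start}$ with one nullary component $\mathrm{start}_i :: \mathrm{Pr}\ \widehat{u_i}\ \widehat{v_i}$ per tile, so that every chain encodes a sequence of length at least one. With that repair your argument goes through and is essentially the paper's proof; your other deviations (a unary constructor per alphabet letter rather than restricting to bit strings, and a nullary goal query rather than $\T{Start}\to\T{Goal}$) are harmless.
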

\begin{proof}
By reduction from the Post Correspondence Problem (PCP).
Let $[(a_1, b_1), \ldots, (a_n, b_n)]$ be an instance of the PCP,
where each $a_i$, $b_i$ are bit strings.
We translate this instance into a synthesis problem $(\Lambda, T)$ as follows.

The set of \emph{type constructors} in $\Lambda$ is:
two nullary constructors \T{Start} and \T{Goal},
two unary constructors \T{T} and \T{F},
and a single binary constructor \T{P}.
Given a bit string $bs$ and a type $T$,
the type $\wrap{T}{bs}$ is defined as follows:
\begin{align*}
\wrap{T}{[]} &= T \\
\wrap{T}{0:bs} &= \wrap{\T{F}\ T}{bs}\\
\wrap{T}{1:bs} &= \wrap{\T{T}\ T}{bs}
\end{align*}
Now we can define the set of components in $\Lambda$:
\begin{itemize}
\item for each pair $(a_i, b_i)$, there is a component $s_i$
of type $\T{Start} \to \T{P}\ \wrap{\T{Start}}{a_i}\ \wrap{\T{Start}}{b_i}$;
\item for each pair $(a_i, b_i)$, there is a component $n_i$
of type $\forall \alpha \beta . \T{P}\ \alpha\ \beta \to \T{P}\ \wrap{\alpha}{a_i}\ \wrap{\beta}{b_i}$;
\item a component $f$ of type $\forall \alpha . \T{P}\ \alpha\ \alpha \to \T{Goal}$.
\end{itemize}
The query type $T$ is $\T{Start} \to \T{Goal}$.
The sequence of \T{P} instances in the solution to this synthesis problem
corresponds to the solution to the PCP.
\end{proof}

For example, a PCP instance $[(1, 101), (10, 00), (011, 11)]$
gives rise to the following components:
\begin{lstlisting}
s1 :: Start -> P (T Start) (T (F (T Start)))
s2 :: Start -> P (F (T Start)) (F (F Start))
s3 :: Start -> P (T (T (F Start))) (T (T Start))
n1 :: P _a _b -> P (T _a) (T (F (T _b)))
n2 :: P _a _b -> P (F (T _a)) (F (F _b))
n3 :: P _a _b -> P (T (T (F _a))) (T (T _b))
f :: P _a _a -> Goal
\end{lstlisting}

The solution for this instance is \T{\\x -> f (n3 (n2 (n3 (s1 x))))}.

\begin{comment}
\begin{theorem}[ATN Completeness]\label{thm:atn-complete}
% For any $\Lambda$, $\nex$, $t$, $\abset$, $\pi$,
If $\jacheck{\cdot}{\nex}{t}{\abset}$ and $\nex \in \fromPath{\pi}$
then $\pi \models \net(\Lambda, t, \abset)$.
\end{theorem}

\begin{proof}
    By \autoref{thm:algo-sound}, we establish that $\jatyping{\cdot}{\nex}{t}{\abset}$ from the assumption.
    It follows that whenever $\nex \in \fromPath{\pi}$, we would have a normal-form terms mapping to programs described in prior work on \sypet.
    Therefore, by the definition of validity of paths to ATN, we have such abstract transition net $\pi \models \net(\Lambda, t, \abset)$.
    \TODO{Rewrite with ATN Construction?}
\end{proof}

\begin{theorem}[ATN Soundness]\label{thm:atn-sound}
% For any $\Lambda$, $t$, $\abset$, $\pi$,
If $\pi \models \net(\Lambda, t, \abset)$,
then % there exists $\nex$ such that
$\exists \nex \in \fromPath{\pi}$ s.t. $\jacheck{\cdot}{\nex}{t}{\abset}$.
\end{theorem}

\begin{proof}
    Since $\pi \models \net(\Lambda, t, \abset)$, we have the path $\pi$ in the Petri net $(P, T, E)$
    from some valid initial marking $M$ to some valid final marking $M'$.
    Hence, the initial marking would assign zero tokens everywhere except the initial places, abstractions of query parameters.
    Similarly, the final marking assigns exactly one token to some final place, $F' \subseteq \setof{A \in P \mid b \sub A}$,
    which \autoref{lemma:aux-right} implies \TODO{It's the only statement related; but not work since it's not universal} that $\jacheck{\cdot}{\nex}{t}{\abset}$.
\end{proof}
\end{comment}

\subsection{Proof of Untypeability}

For a $P\colon \mathbf{e} \to \basebots$ and a term $e^*$, define
\begin{description}
\item[$\inv_1$:] For any $e \in \subterms{e^*}$, if \jtinfer{\Gamma}{e}{B}, then $B \sub P[e]$;
\item[$\inv_2$:] For any $e = \eapp{c}{\many{e_j}}\in \subterms{e^*}$,
$\sem{c}(\many{P[e_j]}) \sub P[e]$;
\item[$\inv_3$:] $P[e^*] = \bot$.
\end{description}

\begin{lemma}[Precision of Inference]\label{ap:invp-infer}
If $\inv_1 \wedge \inv_2$, $\rng{P}\subseteq \abset$, $e\in \subterms{e^*}$, and \jainfer{\Gamma}{e}{B}{\abset},
then $B \sub P[e]$.
\end{lemma}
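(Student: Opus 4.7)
The plan is to proceed by induction on the derivation of $\jainfer{\Gamma}{e}{B}{\abset}$, which, because $e$ is an application term (variable, component, or application), has exactly two cases, corresponding to rules \textsc{I-Var} and \textsc{I-App} in \autoref{fig:abstract-typing}. The key supporting fact I will use repeatedly is a ``rounding'' property of $\alpha_{\abset}$: whenever $X \sub A$ and $A \in \abset$, we have $\alpha_{\abset}(X) \sub A$, which is immediate from the definition of $\alpha_{\abset}(X)$ as the most specific element of $\abset$ that subsumes $X$. Since $\rng{P} \subseteq \abset$, this applies whenever the target of the inequality is $P[e]$ for some $e$.

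In the base case \textsc{I-Var}, we have $e = x$ and $B = \alpha_{\abset}(\Gamma(x))$. The concrete inference rule \textsc{I-Var} gives $\jtinfer{\Gamma}{x}{\Gamma(x)}$, and then $\inv_1$ yields $\Gamma(x) \sub P[x]$. Since $P[x] \in \abset$, the rounding property gives $B = \alpha_{\abset}(\Gamma(x)) \sub P[x]$, as required.

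In the inductive case \textsc{I-App}, we have $e = \eapp{c}{\many{e_j}}$ with subderivations $\jainfer{\Gamma}{e_j}{B_j}{\abset}$ for each $j$, and $B = \alpha_{\abset}(\sem{c}(\many{B_j}))$. By the induction hypothesis, $B_j \sub P[e_j]$ for every $j$. Applying \autoref{lemma:trans-monotone} (monotonicity of type transformers) lifts this to $\sem{c}(\many{B_j}) \sub \sem{c}(\many{P[e_j]})$, and then $\inv_2$ chains to $\sem{c}(\many{P[e_j]}) \sub P[e]$. Composing via transitivity of $\sub$ and using the rounding property once more (since $P[e] \in \abset$) gives $B = \alpha_{\abset}(\sem{c}(\many{B_j})) \sub P[e]$.

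The one step that requires care is the rounding property itself, since $\alpha_{\abset}$ is defined as a minimum in the sublattice $\abset$ and this minimum must exist and be unique; but this is already established in the development of abstract covers ($\abset$ is closed under $\meet$, which forces uniqueness of the most specific upper bound inside $\abset$). Beyond that, everything is a mechanical combination of the two invariants with monotonicity of $\sem{c}$, so I do not anticipate any real obstacles; the only subtlety is remembering to apply rounding at \emph{both} the \textsc{I-Var} and \textsc{I-App} cases, since in both places the inferred type is the $\alpha_{\abset}$-image of something that is $\sub$-bounded by $P[e]$ via $\inv_1$ or $\inv_2$ respectively.
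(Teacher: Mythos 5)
Your proof is correct and follows essentially the same route as the paper's: induction on the abstract inference derivation, using $\inv_1$ in the \textsc{I-Var} case and the induction hypothesis plus \autoref{lemma:trans-monotone} and $\inv_2$ in the \textsc{I-App} case, with your ``rounding'' observation playing exactly the role of the paper's appeal to \autoref{ap:refine} together with $\alpha_{\abset}(P[e]) = P[e]$ for $P[e]\in\abset$. If anything, you are slightly more careful than the paper in the \textsc{I-Var} case, where the paper stops at $b \sub P[x]$ and leaves the final step from $b \sub P[x]$ to $\alpha_{\abset}(b) \sub P[x]$ implicit.
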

\begin{proof}
    By induction on the derivation of \jainfer{\Gamma}{e}{B}{\abset}.
    \begin{itemize}
        \item Case \textsc{I-Var}:
        Given the conclusion \jainfer{\Gamma}{x}{\alpha_{\abset}(b)}{\abset},
        we get to assume $\Gamma(x) = b$, and hence \jtinfer{\Gamma}{x}{b}.
        By $\inv_1$, we have $b \sub P[x]$.
        \item Case \textsc{I-App}:
        Given the conclusion \jainfer{\Gamma}{\eapp{c}{\many{e_i}}}{\alpha_{\abset}(\sem{c}(\many{B_i}))}{\abset},
        we get to assume \jainfer{\Gamma}{e_i}{B_i}{\abset} for each $i$.
        By IH, we have $B_i \sub P[e_i]$ for each $i$.
        Then by \autoref{ap:trans-monotone} we have $\sem{c}(\many{B_i}) \sub \sem{c}(\many{P[e_i]})$,
        and by $\inv_2$ we have $\sem{c}(\many{P[e_i]})\sub P[e]$;
        hence $\sem{c}(\many{B_i}) \sub P[e]$.
        Applying $\alpha_{\abset}$ to both sides, by \autoref{ap:refine} we have
        $\alpha_{\abset}(\sem{c}(\many{B_i})) \sub \alpha_{\abset}(P[e])$.
        But $\alpha_{\abset}(P[e]) = P[e]$ because $\rng{P}\subseteq \abset$,
        hence we get $\alpha_{\abset}(\sem{c}(\many{B_i})) \sub P[e]$ as required.

    \end{itemize}
\end{proof}

\begin{lemma}[Untypeability]\label{ap:invp}
Let $\nex = \elam{\many{x_i}}{e}$, $t = \many{b_i}\to b$, $e^* = \eapp{r}{e}$, where $\ty{r} = b\to b$.
If $\inv_1 \wedge \inv_2 \wedge \inv_3$ and $\rng{P}\subseteq \abset$,
then \jancheck{\cdot}{\nex}{t}{\abset}.
\end{lemma}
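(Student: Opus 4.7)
The plan is to argue by contradiction: assume $\jacheck{\cdot}{\nex}{t}{\abset}$ and derive a contradiction with $\inv_3$. First I would unwind the derivation of the checking judgment. Since $\nex = \elam{\many{x_i}}{e}$ and $t = \many{b_i}\to b$, the derivation must consist of $n$ applications of \textsc{C-Fun} (building up the context $\Gamma = \many{x_i:b_i}$) followed by one application of \textsc{C-Base}, producing a premise $\jainfer{\Gamma}{e}{B}{\abset}$ for some abstract type $B$ with $b \sub B$.

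Next, I would connect this inferred abstract type $B$ to the untypeability proof $P$. Since $e \in \subterms{e^*}$, $\rng{P}\subseteq\abset$, and both $\inv_1$ and $\inv_2$ hold, \autoref{ap:invp-infer} immediately yields $B \sub P[e]$.

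The final step is to use the auxiliary node $e^* = \eapp{r}{e}$, together with the fact that $\ty{r} = b\to b$ is already a monotype, to force a contradiction via two opposing computations of $\sem{r}(\cdot)$. On one hand, the transformer for $r$ is $\sem{r}(X) = \sapp{\mgu{b,X}}{b}$; because $b \sub B$, the unifier $\mgu{b,B}$ is not $\sigma_{\bot}$, and since $b$ is ground it fixes $b$, so $\sem{r}(B) = b \neq \bot$. On the other hand, $\inv_2$ applied to $e^*$ together with $\inv_3$ gives $\sem{r}(P[e]) \sub P[e^*] = \bot$, hence $\sem{r}(P[e]) = \bot$; combining this with $B \sub P[e]$ and monotonicity of type transformers (\autoref{ap:trans-monotone}) yields $\sem{r}(B) \sub \bot$, i.e., $\sem{r}(B) = \bot$. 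These two conclusions contradict each other, so the assumed checking derivation cannot exist.

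The main obstacle is recognizing the role played by the auxiliary component $r$: it is engineered precisely so that $\sem{r}$ behaves as a ``does the argument refine the goal type $b$?'' check, collapsing to $\bot$ exactly when that check fails. Once this is spotted, the rest is a mechanical composition of the precision lemma (\autoref{ap:invp-infer}) with transformer monotonicity (\autoref{ap:trans-monotone}), using $\inv_3$ only at the very last step to seal the contradiction.
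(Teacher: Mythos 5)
Your proof is correct and follows essentially the same route as the paper's: argue by contradiction, unwind \textsc{C-Fun}/\textsc{C-Base} to obtain $b \sub B$, and use the precision lemma together with $\inv_3$ to force the type associated with the root $e^*$ down to $\bot$, which clashes with $b \sub B$ through the behaviour of $\sem{r}$. The only (immaterial) difference is that the paper invokes \autoref{ap:invp-infer} at $e^*$ itself and then reasons backwards through \textsc{I-App} to conclude $\mgu{B,b} = \sigma_{\bot}$, whereas you invoke it at the child $e$ and redo the final application step by hand using $\inv_2$ and \autoref{ap:trans-monotone} to compute $\sem{r}(B)$ in two contradictory ways.
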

\begin{proof}
Assume the contrary: that we can derive \jacheck{\cdot}{\nex}{t}{\abset}.
Then we must be able to derive \jacheck{\Gamma}{e}{b}{\abset}, where $\Gamma = \many{x_i\colon b_i}$.
By \textsc{C-Base}, if we derive \jainfer{\Gamma}{e}{B}{\abset}, it must be that $b \sub B$
(note that inference result always exists and is unique).

By \autoref{ap:invp-infer}, $\jainfer{\Gamma}{e^*}{B^*}{\abset} \sub P[e^*]$,
but by $\inv_3$ we have $P[e^*] = \bot$, hence $B^* = \bot$.
By \textsc{I-App} and considering $\ty{r}$, we have $B^* = \sapp{\sigma}{b}$, where $\sigma = \mgu{B, b}$.
The only $\sigma$ such that $\sapp{\sigma}{b} = \bot$ is $\sigma_{\bot}$,
hence we get $\mgu{B, b} = \sigma_{\bot}$.
But then it cannot be that $b \sub B$, a contradiction.
\end{proof}

\newpage
\section{Evaluation Results}\label{appendix:results}

\textbf{appBoth:\ \T{(a -> b) -> (a -> c) -> a -> (b, c)}}\\[.5em]
\framebox{\parbox{\textwidth}{
 Demand Analysis\\[.25em]
\T{(,) (arg2 arg0) (arg1 arg0)}\\
\T{(,) ((\$) arg2 arg0) (arg1 arg0)}\\
\T{(,) (arg2 (fromJust Nothing)) (arg1 arg0)}\\
\T{(,) (arg2 (head [])) (arg1 arg0)}\\
\T{(,) (arg2 (last [])) (arg1 arg0)}}}\\[.5em]
\framebox{\parbox{\textwidth}{
 No Demand Analysis\\[.25em]
\T{(,) (arg2 arg0) (arg1 arg0)}\\
\T{(,) ((\$) arg2 arg0) (arg1 arg0)}\\
\T{(,) (arg2 (fromJust Nothing)) (arg1 arg0)}\\
\T{(,) (arg2 (head [])) (arg1 arg0)}\\
\T{(,) (arg2 (last [])) (arg1 arg0)}}}\\[.5em]
\framebox{\parbox{\textwidth}{
 No Relevancy\\[.25em]
\T{(,) (arg2 arg0) (arg1 arg0)}}}\\[.5em]
\textbf{countFilter:\ \T{(a -> Bool) -> [a] -> Int}}\\[.5em]
\framebox{\parbox{\textwidth}{
 Demand Analysis\\[.25em]
\T{length (map arg1 arg0)}\\
\T{length (dropWhile arg1 arg0)}\\
\T{length (filter arg1 arg0)}\\
\T{length (takeWhile arg1 arg0)}\\
\T{length (repeat (arg1 (head arg0)))}}}\\[.5em]
\framebox{\parbox{\textwidth}{
 No Demand Analysis\\[.25em]
\T{length (map arg1 arg0)}\\
\T{length (dropWhile arg1 arg0)}\\
\T{length (filter arg1 arg0)}\\
\T{length (takeWhile arg1 arg0)}\\
\T{length (repeat (arg1 (head arg0)))}}}\\[.5em]
\framebox{\parbox{\textwidth}{
 No Relevancy\\[.25em]
\T{length (map arg1 arg0)}\\
\T{length (dropWhile arg1 arg0)}\\
\T{length (filter arg1 arg0)}\\
\T{length (takeWhile arg1 arg0)}\\
\T{length (repeat arg1)}}}\\[.5em]
\textbf{mbElem:\ \T{Eq a => a -> [a] -> Maybe a}}\\[.5em]
\framebox{\parbox{\textwidth}{
 Demand Analysis\\[.25em]
\T{lookup arg1 (zip arg0 [])}\\
\T{lookup arg1 (zip [] arg0)}\\
\T{lookup arg1 (zip arg0 arg0)}}}\\[.5em]
\framebox{\parbox{\textwidth}{
 No Demand Analysis\\[.25em]
\T{lookup arg1 (zip arg0 [])}\\
\T{lookup arg1 (zip [] arg0)}\\
\T{lookup arg1 (zip arg0 arg0)}}}\\[.5em]
\framebox{\parbox{\textwidth}{
 No Relevancy\\[.25em]
\T{Just arg1}\\
\T{listToMaybe arg0}\\
\T{Nothing}\\
\T{listToMaybe (cycle arg0)}\\
\T{listToMaybe (init arg0)}}}\\[.5em]
\textbf{hoogle01:\ \T{(a -> b) -> [a] -> b}}\\[.5em]
\framebox{\parbox{\textwidth}{
 Demand Analysis\\[.25em]
\T{arg1 (head arg0)}\\
\T{arg1 (last arg0)}\\
\T{(\$) arg1 (head arg0)}\\
\T{(\$) arg1 (last arg0)}\\
\T{arg1 (head (cycle arg0))}}}\\[.5em]
\framebox{\parbox{\textwidth}{
 No Demand Analysis\\[.25em]
\T{arg1 (head arg0)}\\
\T{arg1 (last arg0)}\\
\T{(\$) arg1 (head arg0)}\\
\T{(\$) arg1 (last arg0)}\\
\T{arg1 (head (cycle arg0))}}}\\[.5em]
\framebox{\parbox{\textwidth}{
 No Relevancy\\[.25em]
\T{arg1 (head arg0)}\\
\T{arg1 (last arg0)}\\
\T{arg1 (fromJust (listToMaybe arg0))}\\
\T{arg1 (fromJust Nothing)}\\
\T{(\$) arg1 (head arg0)}}}\\[.5em]
\textbf{fromFirstMaybes:\ \T{a -> [Maybe a] -> a}}\\[.5em]
\framebox{\parbox{\textwidth}{
 Demand Analysis\\[.25em]
\T{fromMaybe arg1 (head arg0)}\\
\T{fromMaybe arg1 (last arg0)}\\
\T{maybe arg1 id (head arg0)}\\
\T{maybe arg1 id (last arg0)}\\
\T{bool arg1 arg1 (null arg0)}}}\\[.5em]
\framebox{\parbox{\textwidth}{
 No Demand Analysis\\[.25em]
\T{fromRight arg1 (Left arg0)}\\
\T{fromLeft arg1 (Right arg0)}\\
\T{fromMaybe arg1 (head arg0)}\\
\T{fromMaybe arg1 (last arg0)}\\
\T{fromLeft arg1 (Right (catMaybes arg0))}}}\\[.5em]
\framebox{\parbox{\textwidth}{
 No Relevancy\\[.25em]
\T{fromLeft arg1 (Right arg0)}\\
\T{fromRight arg1 (Left arg0)}\\
\T{fromLeft arg1 (Right arg1)}\\
\T{fromRight arg1 (Right arg1)}\\
\T{fromLeft arg1 (Left arg1)}}}\\[.5em]
\textbf{mbAppFirst:\ \T{b -> (a -> b) -> [a] -> b}}\\[.5em]
\framebox{\parbox{\textwidth}{
 Demand Analysis\\[.25em]
\T{maybe arg2 arg1 (listToMaybe arg0)}\\
\T{fromMaybe arg2 (listToMaybe (map arg1 arg0))}\\
\T{fromLeft arg2 (Right (arg1 (head arg0)))}}}\\[.5em]
\framebox{\parbox{\textwidth}{
 No Demand Analysis\\[.25em]
\T{maybe arg2 arg1 (listToMaybe arg0)}\\
\T{fromLeft (arg1 (head arg0)) (Right arg2)}\\
\T{fromRight (arg1 (head arg0)) (Right arg2)}\\
\T{fromLeft (arg1 (head arg0)) (Left arg2)}\\
\T{fromRight (arg1 (head arg0)) (Left arg2)}}}\\[.5em]
\framebox{\parbox{\textwidth}{
 No Relevancy\\[.25em]
\T{arg1 (head arg0)}\\
\T{arg1 (last arg0)}\\
\T{fromLeft arg2 (Right arg1)}\\
\T{fromRight arg2 (Left arg1)}\\
\T{(\$) arg1 (head arg0)}}}\\[.5em]
\textbf{appendN:\ \T{Int -> [a] -> [a]}}\\[.5em]
\framebox{\parbox{\textwidth}{
 Demand Analysis\\[.25em]
\T{drop arg1 arg0}\\
\T{take arg1 arg0}\\
\T{drop arg1 (cycle arg0)}\\
\T{take arg1 (cycle arg0)}\\
\T{drop arg1 (init arg0)}}}\\[.5em]
\framebox{\parbox{\textwidth}{
 No Demand Analysis\\[.25em]
\T{drop arg1 arg0}\\
\T{take arg1 arg0}\\
\T{drop arg1 (cycle arg0)}\\
\T{take arg1 (cycle arg0)}\\
\T{drop arg1 (init arg0)}}}\\[.5em]
\framebox{\parbox{\textwidth}{
 No Relevancy\\[.25em]
\T{cycle arg0}\\
\T{init arg0}\\
\T{reverse arg0}\\
\T{tail arg0}\\
\T{drop arg1 arg0}}}\\[.5em]
\textbf{firstMaybe:\ \T{[Maybe a] -> a}}\\[.5em]
\framebox{\parbox{\textwidth}{
 Demand Analysis\\[.25em]
\T{fromJust (head arg0)}\\
\T{fromJust (last arg0)}\\
\T{head (catMaybes arg0)}\\
\T{last (catMaybes arg0)}\\
\T{fromJust (head (cycle arg0))}}}\\[.5em]
\framebox{\parbox{\textwidth}{
 No Demand Analysis\\[.25em]
\T{fromJust (head arg0)}\\
\T{fromJust (last arg0)}\\
\T{head (catMaybes arg0)}\\
\T{last (catMaybes arg0)}\\
\T{fromJust (head (cycle arg0))}}}\\[.5em]
\framebox{\parbox{\textwidth}{
 No Relevancy\\[.25em]
\T{head []}\\
\T{last []}\\
\T{fromJust (head arg0)}\\
\T{fromJust (last arg0)}\\
\T{fromJust Nothing}}}\\[.5em]
\textbf{mapEither:\ \T{(a -> Either b c) -> [a] -> ([b], [c])}}\\[.5em]
\framebox{\parbox{\textwidth}{
 Demand Analysis\\[.25em]
\T{partitionEithers (map arg1 arg0)}\\
\T{partitionEithers (repeat (arg1 (head arg0)))}\\
\T{partitionEithers (repeat (arg1 (last arg0)))}\\
\T{curry (last []) arg1 arg0}}}\\[.5em]
\framebox{\parbox{\textwidth}{
 No Demand Analysis\\[.25em]
\T{partitionEithers (map arg1 arg0)}\\
\T{partitionEithers (repeat (arg1 (head arg0)))}\\
\T{partitionEithers (repeat (arg1 (last arg0)))}\\
\T{curry (head []) arg1 arg0}\\
\T{curry (last []) arg1 arg0}}}\\[.5em]
\framebox{\parbox{\textwidth}{
 No Relevancy\\[.25em]
\T{partitionEithers (map arg1 arg0)}}}\\[.5em]
\textbf{head-rest:\ \T{[a] -> (a, [a])}}\\[.5em]
\framebox{\parbox{\textwidth}{
 Demand Analysis\\[.25em]
\T{fromJust (uncons arg0)}\\
\T{(,) (last arg0) arg0}\\
\T{(,) (head arg0) arg0}\\
\T{(,) (last arg0) []}\\
\T{(,) (head arg0) []}}}\\[.5em]
\framebox{\parbox{\textwidth}{
 No Demand Analysis\\[.25em]
\T{fromJust (uncons arg0)}\\
\T{(,) (last arg0) arg0}\\
\T{(,) (head arg0) arg0}\\
\T{(,) (last arg0) []}\\
\T{(,) (head arg0) []}}}\\[.5em]
\framebox{\parbox{\textwidth}{
 No Relevancy\\[.25em]
\T{head []}\\
\T{last []}\\
\T{fromJust (uncons arg0)}\\
\T{fromJust Nothing}\\
\T{(,) (head arg0) arg0}}}\\[.5em]
\textbf{applyPair:\ \T{(a -> b, a) -> b}}\\[.5em]
\framebox{\parbox{\textwidth}{
 Demand Analysis\\[.25em]
\T{uncurry (\$) arg0}\\
\T{uncurry id arg0}\\
\T{(\$) (fst arg0) (snd arg0)}}}\\[.5em]
\framebox{\parbox{\textwidth}{
 No Demand Analysis\\[.25em]
\T{uncurry (\$) arg0}\\
\T{uncurry id arg0}\\
\T{(\$) (fst arg0) (snd arg0)}\\
\T{uncurry (head []) arg0}\\
\T{(\$) (head []) arg0}}}\\[.5em]
\framebox{\parbox{\textwidth}{
 No Relevancy\\[.25em]
\T{uncurry id arg0}\\
\T{uncurry (\$) arg0}\\
\T{fromJust Nothing}\\
\T{head []}\\
\T{last []}}}\\[.5em]
\textbf{maybe:\ \T{Maybe a -> a -> Maybe a}}\\[.5em]
\framebox{\parbox{\textwidth}{
 Demand Analysis\\[.25em]
\T{Just (fromMaybe arg0 arg1)}\\
\T{Just (maybe arg0 id arg1)}\\
\T{listToMaybe (repeat (fromMaybe arg0 arg1))}\\
\T{curry (last []) arg1 arg0}\\
\T{curry (last []) arg0 arg1}}}\\[.5em]
\framebox{\parbox{\textwidth}{
 No Demand Analysis\\[.25em]
\T{Just (fromMaybe arg0 arg1)}\\
\T{fromRight arg1 (Left arg0)}\\
\T{fromLeft arg1 (Right arg0)}\\
\T{fromRight (Just arg0) (Left arg1)}\\
\T{fromLeft (Just arg0) (Left arg1)}}}\\[.5em]
\framebox{\parbox{\textwidth}{
 No Relevancy\\[.25em]
\T{Just arg0}\\
\T{Nothing}\\
\T{fromRight arg1 (Left arg0)}\\
\T{fromLeft arg1 (Right arg0)}\\
\T{(\$) id arg1}}}\\[.5em]
\textbf{multiAppPair:\ \T{(a -> b, a -> c) -> a -> (b, c)}}\\[.5em]
\framebox{\parbox{\textwidth}{
 Demand Analysis\\[.25em]
\T{(,) ((\$) (fst arg1) arg0) ((\$) (snd arg1) arg0)}\\
\T{curry (last []) arg0 arg1}}}\\[.5em]
\framebox{\parbox{\textwidth}{
 No Demand Analysis\\[.25em]
\T{(,) ((\$) (fst arg1) arg0) ((\$) (snd arg1) arg0)}\\
\T{curry (fromJust Nothing) arg0 arg1}\\
\T{curry (head []) arg0 arg1}\\
\T{curry (last []) arg0 arg1}}}\\[.5em]
\framebox{\parbox{\textwidth}{
 No Relevancy\\[.25em]
\T{fromJust Nothing}\\
\T{head []}\\
\T{last []}}}\\[.5em]
\textbf{applyNtimes:\ \T{(a -> a) -> a -> Int -> a}}\\[.5em]
\framebox{\parbox{\textwidth}{
 Demand Analysis\\[.25em]
\T{arg2 ((!!) (repeat arg1) arg0)}\\
\T{arg2 (head (replicate arg0 arg1))}\\
\T{arg2 (last (replicate arg0 arg1))}\\
\T{head (replicate arg0 (arg2 arg1))}\\
\T{last (replicate arg0 (arg2 arg1))}}}\\[.5em]
\framebox{\parbox{\textwidth}{
 No Demand Analysis\\[.25em]
\T{arg2 (fromRight arg1 (Left arg0))}\\
\T{arg2 (fromLeft arg1 (Right arg0))}\\
\T{fromRight (arg2 arg1) (Left arg0)}\\
\T{fromLeft (arg2 arg1) (Right arg0)}\\
\T{arg2 ((!!) (repeat arg1) arg0)}}}\\[.5em]
\framebox{\parbox{\textwidth}{
 No Relevancy\\[.25em]
\T{arg2 arg1}\\
\T{arg2 (arg2 arg1)}\\
\T{(\$) arg2 arg1}\\
\T{fromMaybe (arg2 arg1) (Just (arg2 arg1))}\\
\T{fromMaybe (arg2 arg1) Nothing}}}\\[.5em]
\textbf{splitAtFirst:\ \T{a -> [a] -> ([a], [a])}}\\[.5em]
\framebox{\parbox{\textwidth}{
 Demand Analysis\\[.25em]
\T{(,) arg0 (repeat arg1)}\\
\T{swap ((,) arg0 (repeat arg1))}\\
\T{splitAt (length arg0) (repeat arg1)}\\
\T{(,) (repeat arg1) (cycle arg0)}\\
\T{(,) (repeat arg1) (init arg0)}}}\\[.5em]
\framebox{\parbox{\textwidth}{
 No Demand Analysis\\[.25em]
\T{(,) arg0 (repeat arg1)}\\
\T{swap ((,) arg0 (repeat arg1))}\\
\T{splitAt (length arg0) (repeat arg1)}\\
\T{(,) (repeat arg1) (cycle arg0)}\\
\T{(,) (repeat arg1) (init arg0)}}}\\[.5em]
\framebox{\parbox{\textwidth}{
 No Relevancy\\[.25em]
\T{(,) arg0 arg0}\\
\T{(,) (cycle arg0) (cycle arg0)}\\
\T{(,) (init arg0) (init arg0)}\\
\T{(,) (reverse arg0) (reverse arg0)}\\
\T{(,) (tail arg0) (tail arg0)}}}\\[.5em]
\textbf{eitherTriple:\ \T{Either a b -> Either a b -> Either a b}}\\[.5em]
\framebox{\parbox{\textwidth}{
 Demand Analysis\\[.25em]
\T{bool arg1 arg1 (isRight arg0)}\\
\T{bool arg1 arg1 (isLeft arg0)}\\
\T{fromMaybe arg1 (listToMaybe (repeat arg0))}\\
\T{bool arg1 arg0 (last [])}\\
\T{bool arg1 arg0 (and [])}}}\\[.5em]
\framebox{\parbox{\textwidth}{
 No Demand Analysis\\[.25em]
\T{fromLeft arg1 (Left arg0)}\\
\T{fromRight arg1 (Left arg0)}\\
\T{bool arg1 arg0 False}\\
\T{bool arg1 arg0 True}\\
\T{bool arg1 arg0 otherwise}}}\\[.5em]
\framebox{\parbox{\textwidth}{
 No Relevancy\\[.25em]
\T{fromMaybe arg1 Nothing}\\
\T{bool arg1 arg1 False}\\
\T{bool arg1 arg1 True}\\
\T{bool arg1 arg1 otherwise}\\
\T{fromLeft arg1 (Right arg1)}}}\\[.5em]
\textbf{multiApp:\ \T{(a -> b -> c) -> (a -> b) -> a -> c}}\\[.5em]
\framebox{\parbox{\textwidth}{
 Demand Analysis\\[.25em]
\T{arg2 arg0 (arg1 arg0)}\\
\T{arg2 (fromJust Nothing) (arg1 arg0)}\\
\T{arg2 (head []) (arg1 arg0)}\\
\T{arg2 (last []) (arg1 arg0)}\\
\T{arg2 arg0 (arg1 (fromJust Nothing))}}}\\[.5em]
\framebox{\parbox{\textwidth}{
 No Demand Analysis\\[.25em]
\T{arg2 arg0 (arg1 arg0)}\\
\T{arg2 arg0 ((\$) arg1 arg0)}\\
\T{arg2 (fromJust Nothing) (arg1 arg0)}\\
\T{arg2 (head []) (arg1 arg0)}\\
\T{arg2 (last []) (arg1 arg0)}}}\\[.5em]
\framebox{\parbox{\textwidth}{
 No Relevancy\\[.25em]
\T{arg2 arg0 (arg1 arg0)}\\
\T{arg2 arg0 ((\$) arg1 arg0)}\\
\T{arg2 (snd ((,) (arg1 arg0) arg0)) (fst ((,) (arg1 arg0) arg0))}\\
\T{uncurry arg2 ((,) arg0 (arg1 arg0))}\\
\T{arg2 (snd ((,) arg0 arg0)) (arg1 (fst ((,) arg0 arg0)))}}}\\[.5em]
\textbf{mergeEither:\ \T{Either a (Either a b) -> Either a b}}\\[.5em]
\framebox{\parbox{\textwidth}{
 Demand Analysis\\[.25em]
\T{fromRight (fromJust Nothing) arg0}\\
\T{fromRight (head []) arg0}\\
\T{fromRight (last []) arg0}}}\\[.5em]
\framebox{\parbox{\textwidth}{
 No Demand Analysis\\[.25em]
\T{fromRight (fromJust Nothing) arg0}\\
\T{fromRight (head []) arg0}\\
\T{fromRight (last []) arg0}}}\\[.5em]
\framebox{\parbox{\textwidth}{
 No Relevancy\\[.25em]
\T{fromJust Nothing}\\
\T{head []}\\
\T{last []}\\
\T{fromRight (fromJust Nothing) arg0}\\
\T{fromRight (head []) arg0}}}\\[.5em]
\textbf{firstRight:\ \T{[Either a b] -> Either a b}}\\[.5em]
\framebox{\parbox{\textwidth}{
 Demand Analysis\\[.25em]
\T{head arg0}\\
\T{last arg0}\\
\T{head (cycle arg0)}\\
\T{last (cycle arg0)}\\
\T{head (init arg0)}}}\\[.5em]
\framebox{\parbox{\textwidth}{
 No Demand Analysis\\[.25em]
\T{head arg0}\\
\T{last arg0}\\
\T{head (cycle arg0)}\\
\T{last (cycle arg0)}\\
\T{head (init arg0)}}}\\[.5em]
\framebox{\parbox{\textwidth}{
 No Relevancy\\[.25em]
\T{head arg0}\\
\T{last arg0}\\
\T{head (cycle arg0)}\\
\T{last (cycle arg0)}\\
\T{head (init arg0)}}}\\[.5em]
\textbf{flatten:\ \T{[[[a]]] -> [a]}}\\[.5em]
\framebox{\parbox{\textwidth}{
 Demand Analysis\\[.25em]
\T{head (head arg0)}\\
\T{last (head arg0)}\\
\T{concat (head arg0)}\\
\T{head (last arg0)}\\
\T{last (last arg0)}}}\\[.5em]
\framebox{\parbox{\textwidth}{
 No Demand Analysis\\[.25em]
\T{head (head arg0)}\\
\T{last (head arg0)}\\
\T{concat (head arg0)}\\
\T{head (last arg0)}\\
\T{last (last arg0)}}}\\[.5em]
\framebox{\parbox{\textwidth}{
 No Relevancy\\[.25em]
\T{[]}\\
\T{lefts []}\\
\T{rights []}\\
\T{catMaybes []}\\
\T{cycle []}}}\\[.5em]
\textbf{pipe:\ \T{[(a -> a)] -> (a -> a)}}\\[.5em]
\framebox{\parbox{\textwidth}{
 Demand Analysis\\[.25em]
\T{foldr (\$) arg0 arg1}\\
\T{(\$) (head arg1) arg0}\\
\T{(\$) (last arg1) arg0}\\
\T{foldr id arg0 arg1}\\
\T{foldr id arg0 (cycle arg1)}}}\\[.5em]
\framebox{\parbox{\textwidth}{
 No Demand Analysis\\[.25em]
\T{foldr (\$) arg0 arg1}\\
\T{fromLeft arg0 (Right arg1)}\\
\T{(\$) (head arg1) arg0}\\
\T{(\$) (last arg1) arg0}\\
\T{foldr id arg0 arg1}}}\\[.5em]
\framebox{\parbox{\textwidth}{
 No Relevancy\\[.25em]
\T{fromLeft arg0 (Right arg0)}\\
\T{fromRight arg0 (Right arg0)}\\
\T{fromLeft arg0 (Left arg0)}\\
\T{fromRight arg0 (Left arg0)}\\
\T{fromMaybe arg0 (Just arg0)}}}\\[.5em]
\textbf{firstKey:\ \T{[(a,b)] -> a}}\\[.5em]
\framebox{\parbox{\textwidth}{
 Demand Analysis\\[.25em]
\T{(!!) [] (length arg0)}\\
\T{(\$) (last []) arg0}}}\\[.5em]
\framebox{\parbox{\textwidth}{
 No Demand Analysis\\[.25em]
\T{(!!) [] (length arg0)}\\
\T{(\$) (last []) arg0}\\
\T{(\$) (head []) arg0}\\
\T{(\$) (fromJust Nothing) arg0}}}\\[.5em]
\framebox{\parbox{\textwidth}{
 No Relevancy\\[.25em]
\T{fromJust Nothing}\\
\T{head []}\\
\T{last []}}}\\[.5em]
\textbf{splitStr:\ \T{String -> Char -> [String]}}\\[.5em]
\framebox{\parbox{\textwidth}{
 Demand Analysis\\[.25em]
\T{repeat (showChar arg0 arg1)}\\
\T{repeat ((:) arg0 arg1)}\\
\T{(:) (repeat arg0) (repeat arg1)}\\
\T{repeat ((++) arg1 (repeat arg0))}\\
\T{repeat (showString arg1 (repeat arg0))}}}\\[.5em]
\framebox{\parbox{\textwidth}{
 No Demand Analysis\\[.25em]
\T{repeat (showChar arg0 arg1)}\\
\T{repeat ((:) arg0 arg1)}\\
\T{(:) (repeat arg0) (repeat arg1)}\\
\T{repeat ((++) arg1 (repeat arg0))}\\
\T{repeat (showString arg1 (repeat arg0))}}}\\[.5em]
\framebox{\parbox{\textwidth}{
 No Relevancy\\[.25em]
\T{repeat arg1}\\
\T{[]}\\
\T{repeat (cycle arg1)}\\
\T{repeat (init arg1)}\\
\T{repeat (reverse arg1)}}}\\[.5em]
\textbf{areEq:\ \T{Eq a => a -> a -> Maybe a}}\\[.5em]
\framebox{\parbox{\textwidth}{
 Demand Analysis\\[.25em]
\T{fromMaybe (Just arg1) (lookup arg0 [])}\\
\T{Just (fromMaybe arg1 (lookup arg0 []))}}}\\[.5em]
\framebox{\parbox{\textwidth}{
 No Demand Analysis\\[.25em]
\T{fromMaybe (Just arg1) (lookup arg0 [])}\\
\T{Just (fromMaybe arg1 (lookup arg0 []))}\\
\T{fromLeft (Just arg1) (Right ((,) arg0))}\\
\T{fromRight (Just arg1) (Left ((,) arg0))}\\
\T{fromLeft (Just arg1) (Right ((,) arg0 ))}}}\\[.5em]
\framebox{\parbox{\textwidth}{
 No Relevancy\\[.25em]
\T{Just arg1}\\
\T{Nothing}\\
\T{listToMaybe (repeat arg1)}\\
\T{lookup arg1 []}\\
\T{fromJust Nothing}}}\\[.5em]
\textbf{lookup:\ \T{Eq a => [(a,b)] -> a -> b}}\\[.5em]
\framebox{\parbox{\textwidth}{
 Demand Analysis\\[.25em]
\T{fromJust (lookup arg0 arg1)}\\
\T{head (maybeToList (lookup arg0 arg1))}\\
\T{last (maybeToList (lookup arg0 arg1))}\\
\T{fromJust (lookup arg0 (cycle arg1))}\\
\T{fromJust (lookup arg0 (init arg1))}}}\\[.5em]
\framebox{\parbox{\textwidth}{
 No Demand Analysis\\[.25em]
\T{fromJust (lookup arg0 arg1)}\\
\T{head (maybeToList (lookup arg0 arg1))}\\
\T{last (maybeToList (lookup arg0 arg1))}}}\\[.5em]
\framebox{\parbox{\textwidth}{
 No Relevancy\\[.25em]
\T{fromJust (lookup arg0 arg1)}\\
\T{fromJust Nothing}\\
\T{head []}\\
\T{last []}}}\\[.5em]
\textbf{map:\ \T{(a -> b) -> [a] -> [b]}}\\[.5em]
\framebox{\parbox{\textwidth}{
 Demand Analysis\\[.25em]
\T{map arg1 arg0}\\
\T{repeat (arg1 (last arg0))}\\
\T{map arg1 (cycle arg0)}\\
\T{map arg1 (init arg0)}\\
\T{map arg1 (reverse arg0)}}}\\[.5em]
\framebox{\parbox{\textwidth}{
 No Demand Analysis\\[.25em]
\T{map arg1 arg0}\\
\T{repeat (arg1 (last arg0))}\\
\T{map arg1 (cycle arg0)}\\
\T{map arg1 (init arg0)}\\
\T{map arg1 (reverse arg0)}}}\\[.5em]
\framebox{\parbox{\textwidth}{
 No Relevancy\\[.25em]
\T{map arg1 arg0}\\
\T{lefts []}\\
\T{rights []}\\
\T{catMaybes []}\\
\T{concat []}}}\\[.5em]
\textbf{resolveEither:\ \T{Either a b -> (a->b) -> b}}\\[.5em]
\framebox{\parbox{\textwidth}{
 Demand Analysis\\[.25em]
\T{either arg0 id arg1}\\
\T{arg0 (head (lefts (repeat arg1)))}\\
\T{arg0 (last (lefts (repeat arg1)))}\\
\T{either arg0 (head []) arg1}\\
\T{either arg0 (last []) arg1}}}\\[.5em]
\framebox{\parbox{\textwidth}{
 No Demand Analysis\\[.25em]
\T{either arg0 id arg1}\\
\T{either arg0 (fromJust Nothing) arg1}}}\\[.5em]
\framebox{\parbox{\textwidth}{
 No Relevancy\\[.25em]
\T{either arg0 id arg1}\\
\T{arg0 (fromJust Nothing)}\\
\T{arg0 (head [])}\\
\T{arg0 (last [])}\\
\T{fromRight (arg0 (fromJust Nothing)) arg1}}}\\[.5em]
\textbf{firstMatch:\ \T{[a] -> (a -> Bool) -> a}}\\[.5em]
\framebox{\parbox{\textwidth}{
 Demand Analysis\\[.25em]
\T{last (dropWhile arg0 arg1)}\\
\T{head (dropWhile arg0 arg1)}\\
\T{last (filter arg0 arg1)}\\
\T{head (filter arg0 arg1)}\\
\T{last (takeWhile arg0 arg1)}}}\\[.5em]
\framebox{\parbox{\textwidth}{
 No Demand Analysis\\[.25em]
\T{last (dropWhile arg0 arg1)}\\
\T{head (dropWhile arg0 arg1)}\\
\T{last (filter arg0 arg1)}\\
\T{head (filter arg0 arg1)}\\
\T{last (takeWhile arg0 arg1)}}}\\[.5em]
\framebox{\parbox{\textwidth}{
 No Relevancy\\[.25em]
\T{head (dropWhile arg0 arg1)}\\
\T{last (dropWhile arg0 arg1)}\\
\T{head (filter arg0 arg1)}\\
\T{last (filter arg0 arg1)}\\
\T{head (takeWhile arg0 arg1)}}}\\[.5em]
\textbf{test:\ \T{Bool -> a -> Maybe a}}\\[.5em]
\framebox{\parbox{\textwidth}{
 Demand Analysis\\[.25em]
\T{bool (Just arg0) Nothing arg1}\\
\T{bool Nothing (Just arg0) arg1}\\
\T{bool (Just arg0) (Just arg0) arg1}\\
\T{Just (bool arg0 arg0 arg1)}\\
\T{curry (last []) arg0 arg1}}}\\[.5em]
\framebox{\parbox{\textwidth}{
 No Demand Analysis\\[.25em]
\T{bool (Just arg0) Nothing arg1}\\
\T{bool Nothing (Just arg0) arg1}\\
\T{Just (fromLeft arg0 (Right arg1))}\\
\T{Just (fromRight arg0 (Left arg1))}\\
\T{bool (Just arg0) (Just arg0) arg1}}}\\[.5em]
\framebox{\parbox{\textwidth}{
 No Relevancy\\[.25em]
\T{Just arg0}\\
\T{Nothing}\\
\T{Just (bool arg0 arg0 arg1)}\\
\T{listToMaybe (repeat arg0)}\\
\T{listToMaybe []}}}\\[.5em]
\textbf{intToBS:\ \T{Int64 -> ByteString}}\\[.5em]
\framebox{\parbox{\textwidth}{
 Demand Analysis\\[.25em]
\T{drop arg0 empty}\\
\T{take arg0 empty}\\
\T{toLazyByteString (int64Dec arg0)}\\
\T{toLazyByteString (int64HexFixed arg0)}\\
\T{toLazyByteString (int64LE arg0)}}}\\[.5em]
\framebox{\parbox{\textwidth}{
 No Demand Analysis\\[.25em]
\T{drop arg0 empty}\\
\T{take arg0 empty}\\
\T{toLazyByteString (int64Dec arg0)}\\
\T{toLazyByteString (int64HexFixed arg0)}\\
\T{toLazyByteString (int64LE arg0)}}}\\[.5em]
\framebox{\parbox{\textwidth}{
 No Relevancy\\[.25em]
\T{empty}\\
\T{concat []}\\
\T{fromChunks []}\\
\T{pack []}\\
\T{drop arg0 empty}}}\\[.5em]
\textbf{repl-funcs:\ \T{(a -> b) -> Int -> [a -> b]}}\\[.5em]
\framebox{\parbox{\textwidth}{
 Demand Analysis\\[.25em]
\T{replicate arg0 arg1}\\
\T{cycle (replicate arg0 arg1)}\\
\T{init (replicate arg0 arg1)}\\
\T{reverse (replicate arg0 arg1)}\\
\T{tail (replicate arg0 arg1)}}}\\[.5em]
\framebox{\parbox{\textwidth}{
 No Demand Analysis\\[.25em]
\T{replicate arg0 arg1}\\
\T{cycle (replicate arg0 arg1)}\\
\T{init (replicate arg0 arg1)}\\
\T{reverse (replicate arg0 arg1)}\\
\T{tail (replicate arg0 arg1)}}}\\[.5em]
\framebox{\parbox{\textwidth}{
 No Relevancy\\[.25em]
\T{repeat arg1}\\
\T{replicate arg0 arg1}\\
\T{lefts []}\\
\T{rights []}\\
\T{catMaybes []}}}\\[.5em]
\textbf{mapMaybes:\ \T{(a -> Maybe b) -> [a] -> Maybe b}}\\[.5em]
\framebox{\parbox{\textwidth}{
 Demand Analysis\\[.25em]
\T{arg1 (head arg0)}\\
\T{arg1 (last arg0)}\\
\T{(\$) arg1 (head arg0)}\\
\T{(\$) arg1 (last arg0)}\\
\T{arg1 (fromJust (listToMaybe arg0))}}}\\[.5em]
\framebox{\parbox{\textwidth}{
 No Demand Analysis\\[.25em]
\T{arg1 (head arg0)}\\
\T{arg1 (last arg0)}\\
\T{(\$) arg1 (head arg0)}\\
\T{(\$) arg1 (last arg0)}\\
\T{arg1 (fromJust (listToMaybe arg0))}}}\\[.5em]
\framebox{\parbox{\textwidth}{
 No Relevancy\\[.25em]
\T{arg1 (head arg0)}\\
\T{arg1 (last arg0)}\\
\T{arg1 ((!!) arg0 (length arg0))}\\
\T{arg1 (head [])}\\
\T{arg1 (last [])}}}\\[.5em]
\textbf{takeNdropM:\ \T{Int -> Int -> [a] -> ([a], [a])}}\\[.5em]
\framebox{\parbox{\textwidth}{
 Demand Analysis\\[.25em]
\T{splitAt arg2 (drop arg1 arg0)}\\
\T{splitAt arg2 (take arg1 arg0)}\\
\T{splitAt arg2 (take arg1 (cycle arg0))}\\
\T{splitAt arg2 (drop arg1 (cycle arg0))}\\
\T{splitAt arg2 (take arg1 (init arg0))}}}\\[.5em]
\framebox{\parbox{\textwidth}{
 No Demand Analysis\\[.25em]
\T{splitAt arg2 (drop arg1 arg0)}\\
\T{splitAt arg2 (take arg1 arg0)}\\
\T{splitAt arg2 (take arg1 (cycle arg0))}\\
\T{splitAt arg2 (drop arg1 (cycle arg0))}\\
\T{splitAt arg2 (take arg1 (init arg0))}}}\\[.5em]
\framebox{\parbox{\textwidth}{
 No Relevancy\\[.25em]
\T{splitAt arg2 arg0}\\
\T{(,) arg0 arg0}\\
\T{(,) (cycle arg0) (cycle arg0)}\\
\T{(,) (init arg0) (init arg0)}\\
\T{(,) (reverse arg0) (reverse arg0)}}}\\[.5em]
\textbf{cartProduct:\ \T{[a] -> [b] -> [[(a,b)]]}}\\[.5em]
\framebox{\parbox{\textwidth}{
 Demand Analysis\\[.25em]
\T{repeat (zip arg1 arg0)}\\
\T{repeat (cycle (zip arg1 arg0))}\\
\T{repeat (init (zip arg1 arg0))}\\
\T{repeat (reverse (zip arg1 arg0))}\\
\T{repeat (tail (zip arg1 arg0))}}}\\[.5em]
\framebox{\parbox{\textwidth}{
 No Demand Analysis\\[.25em]
\T{repeat (zip arg1 arg0)}\\
\T{repeat (cycle (zip arg1 arg0))}\\
\T{repeat (init (zip arg1 arg0))}\\
\T{repeat (reverse (zip arg1 arg0))}\\
\T{repeat (tail (zip arg1 arg0))}}}\\[.5em]
\framebox{\parbox{\textwidth}{
 No Relevancy\\[.25em]
\T{[]}\\
\T{repeat (zip arg1 arg0)}\\
\T{fromJust Nothing}\\
\T{maybeToList Nothing}\\
\T{lefts []}}}\\[.5em]
\textbf{hoogle02:\ \T{b -> (a -> b) -> [a] -> b}}\\[.5em]
\framebox{\parbox{\textwidth}{
 Demand Analysis\\[.25em]
\T{maybe arg2 arg1 (listToMaybe arg0)}\\
\T{fromMaybe arg2 (listToMaybe (map arg1 arg0))}}}\\[.5em]
\framebox{\parbox{\textwidth}{
 No Demand Analysis\\[.25em]
\T{maybe arg2 arg1 (listToMaybe arg0)}\\
\T{fromLeft (arg1 (head arg0)) (Right arg2)}\\
\T{fromRight (arg1 (head arg0)) (Right arg2)}\\
\T{fromLeft (arg1 (head arg0)) (Left arg2)}\\
\T{fromRight (arg1 (head arg0)) (Left arg2)}}}\\[.5em]
\framebox{\parbox{\textwidth}{
 No Relevancy\\[.25em]
\T{arg1 (head arg0)}\\
\T{arg1 (last arg0)}\\
\T{fromLeft arg2 (Right arg1)}\\
\T{fromRight arg2 (Left arg1)}\\
\T{(\$) arg1 (head arg0)}}}\\[.5em]
\textbf{containsEdge:\ \T{[Int] -> (Int,Int) -> Bool}}\\[.5em]
\framebox{\parbox{\textwidth}{
 Demand Analysis\\[.25em]
\T{null (repeat ((,) arg0 arg1))}\\
\T{null (replicate (length arg1) arg0)}\\
\T{null (replicate (head arg1) arg0)}\\
\T{null (replicate (last arg1) arg0)}\\
\T{null (repeat ((,) arg1 arg0))}}}\\[.5em]
\framebox{\parbox{\textwidth}{
 No Demand Analysis\\[.25em]
\T{null (fromLeft arg1 (Right arg0))}\\
\T{null (fromRight arg1 (Left arg0))}\\
\T{isLeft (Right ((,) arg0 arg1))}\\
\T{isRight (Right ((,) arg0 arg1))}\\
\T{isLeft (Left ((,) arg0 arg1))}}}\\[.5em]
\framebox{\parbox{\textwidth}{
 No Relevancy\\[.25em]
\T{False}\\
\T{True}\\
\T{otherwise}\\
\T{null arg1}\\
\T{isJust Nothing}}}\\[.5em]
\textbf{app3:\ \T{(a -> b -> c -> d) -> a -> c -> b -> d}}\\[.5em]
\framebox{\parbox{\textwidth}{
 Demand Analysis\\[.25em]
\T{arg3 arg2 arg0 arg1}\\
\T{fromMaybe (arg3 arg2 arg0 arg1) Nothing}\\
\T{arg3 (fst ((,) arg2 arg0)) (snd ((,) arg2 arg0)) arg1}\\
\T{arg3 (snd ((,) arg0 arg2)) (fst ((,) arg0 arg2)) arg1}\\
\T{arg3 arg2 arg0 (fromMaybe arg1 Nothing)}}}\\[.5em]
\framebox{\parbox{\textwidth}{
 No Demand Analysis\\[.25em]
\T{arg3 arg2 arg0 arg1}\\
\T{fromMaybe (arg3 arg2 arg0 arg1) Nothing}\\
\T{arg3 (fst ((,) arg2 arg0)) (snd ((,) arg2 arg0)) arg1}\\
\T{arg3 (snd ((,) arg0 arg2)) (fst ((,) arg0 arg2)) arg1}\\
\T{arg3 arg2 arg0 (fromMaybe arg1 Nothing)}}}\\[.5em]
\framebox{\parbox{\textwidth}{
 No Relevancy\\[.25em]
\T{arg3 arg2 arg0 arg1}\\
\T{fromMaybe (arg3 arg2 arg0 arg1) Nothing}\\
\T{fromLeft (arg3 arg2 arg0 arg1) (Right arg2)}\\
\T{fromRight (arg3 arg2 arg0 arg1) (Left arg2)}\\
\T{fromLeft (arg3 arg2 arg0 arg1) (Right arg1)}}}\\[.5em]
\textbf{indexesOf:\ \T{([(a,Int)] -> [(a,Int)]) -> [a] -> [Int] -> [Int]}}\\[.5em]
\framebox{\parbox{\textwidth}{
 No Demand Analysis\\[.25em]
\T{fromLeft arg0 (Right ((,) arg1 arg2))}\\
\T{fromRight arg0 (Left ((,) arg1 arg2))}\\
\T{fromLeft arg0 (Right ((,) arg2 arg1))}\\
\T{fromRight arg0 (Left ((,) arg2 arg1))}}}\\[.5em]
\framebox{\parbox{\textwidth}{
 No Relevancy\\[.25em]
\T{lefts []}\\
\T{rights []}\\
\T{catMaybes []}\\
\T{concat []}\\
\T{cycle []}}}\\[.5em]
\textbf{both:\ \T{(a -> b) -> (a, a) -> (b, b)}}\\[.5em]
\framebox{\parbox{\textwidth}{
 Demand Analysis\\[.25em]
\T{(,) (arg1 (snd arg0)) (arg1 (fst arg0))}}}\\[.5em]
\framebox{\parbox{\textwidth}{
 No Demand Analysis\\[.25em]
\T{(,) (arg1 (snd arg0)) (arg1 (fst arg0))}}}\\[.5em]
\framebox{\parbox{\textwidth}{
 No Relevancy\\[.25em]
\T{head []}\\
\T{last []}\\
\T{head (maybeToList Nothing)}\\
\T{last (maybeToList Nothing)}\\
\T{head (lefts [])}}}\\[.5em]
\textbf{zipWithResult:\ \T{(a -> b) -> [a] -> [(a, b)]}}\\[.5em]
\framebox{\parbox{\textwidth}{
 Demand Analysis\\[.25em]
\T{zip arg0 (map arg1 [])}\\
\T{zip arg0 (map arg1 arg0)}}}\\[.5em]
\framebox{\parbox{\textwidth}{
 No Demand Analysis\\[.25em]
\T{zip arg0 (map arg1 [])}\\
\T{zip arg0 (map arg1 arg0)}}}\\[.5em]
\framebox{\parbox{\textwidth}{
 No Relevancy\\[.25em]
\T{lefts []}\\
\T{rights []}\\
\T{catMaybes []}\\
\T{concat []}\\
\T{cycle []}}}\\[.5em]
\textbf{rights:\ \T{[Either a b] -> Either a [b]}}\\[.5em]
\framebox{\parbox{\textwidth}{
 Demand Analysis\\[.25em]
\T{Right (rights arg0)}\\
\T{(!!) [] (length arg0)}}}\\[.5em]
\framebox{\parbox{\textwidth}{
 No Demand Analysis\\[.25em]
\T{Right (rights arg0)}\\
\T{(!!) [] (length arg0)}}}\\[.5em]
\framebox{\parbox{\textwidth}{
 No Relevancy\\[.25em]
\T{head []}\\
\T{last []}\\
\T{Right []}\\
\T{fromJust Nothing}\\
\T{Right (rights arg0)}}}\\[.5em]
\textbf{mbToEither:\ \T{Maybe a -> b -> Either a b}}\\[.5em]
\framebox{\parbox{\textwidth}{
 Demand Analysis\\[.25em]
\T{curry (last []) arg1 arg0}\\
\T{curry (last []) arg0 arg1}}}\\[.5em]
\framebox{\parbox{\textwidth}{
 No Demand Analysis\\[.25em]
\T{fromRight (Right arg0) (Left arg1)}\\
\T{fromLeft (Right arg0) (Right arg1)}\\
\T{Right (fromRight arg0 (Left arg1))}\\
\T{Right (fromLeft arg0 (Right arg1))}\\
\T{curry (fromJust Nothing) arg1 arg0}}}\\[.5em]
\framebox{\parbox{\textwidth}{
 No Relevancy\\[.25em]
\T{Right arg0}\\
\T{fromJust Nothing}\\
\T{Left (fromJust arg1)}\\
\T{head []}\\
\T{last []}}}\\[.5em]
\textbf{singleList:\ \T{Int -> [Int]}}\\[.5em]
\framebox{\parbox{\textwidth}{
 Demand Analysis\\[.25em]
\T{repeat arg0}\\
\T{cycle (repeat arg0)}\\
\T{init (repeat arg0)}\\
\T{reverse (repeat arg0)}\\
\T{tail (repeat arg0)}}}\\[.5em]
\framebox{\parbox{\textwidth}{
 No Demand Analysis\\[.25em]
\T{repeat arg0}\\
\T{cycle (repeat arg0)}\\
\T{init (repeat arg0)}\\
\T{reverse (repeat arg0)}\\
\T{tail (repeat arg0)}}}\\[.5em]
\framebox{\parbox{\textwidth}{
 No Relevancy\\[.25em]
\T{replicate arg0 arg0}\\
\T{repeat arg0}\\
\T{[]}\\
\T{(:) arg0 []}\\
\T{iterate id arg0}}}\\[.5em]
\textbf{head-tail:\ \T{[a] -> (a,a)}}\\[.5em]
\framebox{\parbox{\textwidth}{
 Demand Analysis\\[.25em]
\T{(,) (last arg0) (last arg0)}\\
\T{(,) (head arg0) (head arg0)}\\
\T{last (zip [] arg0)}\\
\T{head (zip [] arg0)}\\
\T{(\$) (last []) arg0}}}\\[.5em]
\framebox{\parbox{\textwidth}{
 No Demand Analysis\\[.25em]
\T{(,) (last arg0) (last arg0)}\\
\T{(,) (head arg0) (head arg0)}\\
\T{last (zip [] arg0)}\\
\T{head (zip [] arg0)}\\
\T{(\$) (last []) arg0}}}\\[.5em]
\framebox{\parbox{\textwidth}{
 No Relevancy\\[.25em]
\T{fromJust Nothing}\\
\T{head []}\\
\T{last []}\\
\T{head (zip arg0 arg0)}\\
\T{last (zip arg0 arg0)}}}\\[.5em]
\textbf{2partApp:\ \T{(a -> b) -> (b -> c) -> [a] -> [c]}}\\[.5em]
\framebox{\parbox{\textwidth}{
 Demand Analysis\\[.25em]
\T{map arg1 (map arg2 arg0)}\\
\T{repeat (arg1 (arg2 (last arg0)))}\\
\T{repeat (arg1 (arg2 (head arg0)))}\\
\T{iterate id (arg1 (arg2 (last arg0)))}\\
\T{iterate' id (arg1 (arg2 (last arg0)))}}}\\[.5em]
\framebox{\parbox{\textwidth}{
 No Demand Analysis\\[.25em]
\T{map arg1 (map arg2 arg0)}\\
\T{repeat (arg1 (arg2 (last arg0)))}\\
\T{repeat (arg1 (arg2 (head arg0)))}\\
\T{repeat (arg1 ((\$) arg2 (last arg0)))}\\
\T{repeat (arg1 ((\$) arg2 (head arg0)))}}}\\[.5em]
\framebox{\parbox{\textwidth}{
 No Relevancy\\[.25em]
\T{fromJust (fromJust Nothing)}\\
\T{head (fromJust Nothing)}\\
\T{last (fromJust Nothing)}}}\\[.5em]

\end{document}